\newcommand{\due}{\makebox[1.2cm]{}}
\newcommand{\tre}{\makebox[1.9cm]{}}
\newcommand{\dd}[1]{{\sffamily\bfseries\Large \mbox{***DD:\ } #1 \mbox{****
}}}
\newcommand{\ad}[1]{{\sffamily\bfseries\Large \mbox{***AD:\ } #1 \mbox{****
}}}
\newcommand{\ra}{\rangle}
\newcommand{\la}{\langle}
\newcommand\pulse{\mbox{\slshape pulse}}
\newcommand\cycle{\mbox{\slshape cycle}}
\newcommand\Cycle{\mbox{\slshape Cycle}}
\newcommand\MessagePool{\mbox{\slshape Message\_Pool}}
\newcommand\cyclemin{\mbox{$cycle_{\mbox{\scriptsize\slshape min}}$}}
\newcommand\cyclemax{\mbox{$cycle_{\mbox{\scriptsize\slshape max}}$}}
\newcommand\initiator{\mbox{\sc Initiator-Accept\ }}
\newcommand\ginit{\mbox{{\sc I}-accept}}
\def\squarebox#1{\hbox to #1{\hfill\vbox to #1{\vfill}}}
\newcommand{\ignore}[1]{}
\newsavebox{\theorembox}
\newsavebox{\lemmabox}
\newsavebox{\conjecturebox}
\newsavebox{\claimbox}
\newsavebox{\factbox}
\newsavebox{\corollarybox}
\newsavebox{\propositionbox}
\newsavebox{\examplebox}
\savebox{\theorembox}{\bf Theorem} \savebox{\lemmabox}{\bf Lemma}
\savebox{\conjecturebox}{\bf Conjecture} \savebox{\claimbox}{\bf
Claim} \savebox{\factbox}{\bf Fact} \savebox{\corollarybox}{\bf
Corollary} \savebox{\propositionbox}{\bf Proposition}
\savebox{\examplebox}{\bf Example}
\newtheorem{theorem}{\usebox{\theorembox}}
\newtheorem{lemma}{\usebox{\lemmabox}}[section]
\newtheorem{corollary}[lemma]{\usebox{\corollarybox}}
\newtheorem{definition}{{\sc Definition}\rm }[section]
\newtheorem{observation}{{\sc Observation}\rm }[section]
\begin{document}

\title{Self-stabilizing Pulse Synchronization Inspired by Biological Pacemaker Networks}


\ignore{
\author{ Ariel Daliot\inst{1} \and Danny
Dolev\inst{1} \and Hanna Parnas\inst{2}}
\authorrunning{Daliot, Dolev and Parnas}

\institute{School of Engineering and Computer Science, The Hebrew
University of Jerusalem, Israel.
\email{\{adaliot,dolev\}@cs.huji.ac.il} \and Department of
Neurobiology and the Otto Loewi Minerva Center for Cellular and
Molecular Neurobiology, Institute of Life Science, The Hebrew
University of Jerusalem, Israel.  \email{hanna@vms.huji.ac.il} }
}

\author{ Ariel Daliot\inst{1} \and Danny
Dolev\inst{1} \and  Hanna Parnas\inst{2}}

\author{
Ariel Daliot\thanks{School of Engineering and Computer Science. The
Hebrew University of Jerusalem, Israel. Email:
adaliot@cs.huji.ac.il}, Danny Dolev\thanks{School of Engineering and
Computer Science. The Hebrew University of Jerusalem, Israel. Email:
dolev@cs.huji.ac.il} and Hanna Parnas\thanks{Department of
Neurobiology and the Otto Loewi Minerva Center for Cellular and
Molecular Neurobiology, Institute of Life Science, The Hebrew
University of Jerusalem, Israel. Email: hanna.parnas@huji.ac.il }}

\date{}



\maketitle


\begin{abstract}
We define the ``Pulse Synchronization'' problem that requires nodes
to achieve tight synchronization of regular pulse events, in the
settings of distributed computing systems. Pulse-coupled
synchronization is a phenomenon displayed by a large variety of
biological systems, typically overcoming a high level of noise.
Inspired by such biological models, a robust and self-stabilizing
Byzantine pulse synchronization algorithm for distributed computer systems is
presented. The algorithm attains near optimal synchronization
tightness while tolerating up to a third of the nodes exhibiting
Byzantine behavior concurrently. Pulse synchronization has been
previously shown to be a powerful building block for designing
algorithms in this severe fault model. We have previously shown how
to stabilize general Byzantine algorithms, using pulse
synchronization. To the best of our knowledge there is no other
scheme to do this without the use of synchronized pulses.
\end{abstract}

\noindent{\bf Keywords}: Self-stabilization, Byzantine faults,
Distributed algorithms, Robustness,  Pulse synchronization,
Biological synchronization, Biological oscillators.

\pagestyle{plain} \pagenumbering{arabic}

\section{Introduction}
\label{sec:s1} \ignore{-2mm} The phenomenon of synchronization is
displayed by many biological systems~\cite{R4}. It presumably plays
an important role in these systems. For example, the heart of the
lobster is regularly activated by the synchronized firing of four
interneurons in the cardiac pacemaker network~\cite{R16,R17}. It was
concluded that the organism cannot survive if all four interneurons
fire out of synchrony for prolonged times~\cite{R2}. This system
inspired the present work. Other examples of biological
synchronization include the {\em malaccae} fireflies in Southeast
Asia where thousands of male fireflies congregate in mangrove trees,
flashing in synchrony~\cite{R13}; oscillations of the neurons in the
circadian pacemaker, determining the day-night rhythm; crickets that
chirp in unison~\cite{R14}; coordinated mass spawning in corals and
even audience clapping together after a ``good''
performance~\cite{R15}. Synchronization in these systems is
typically attained despite the inherent variations among the
participating elements, or the presence of noise from external
sources or from participating elements. A generic mathematical model
for synchronous firing of biological oscillators based on a model of
the human cardiac pacemaker is given in~\cite{R1}. This model does
not account for noise or for the inherent differences among
biological elements.

In computer science, synchronization is both a goal by itself and a
building block for algorithms that solve other problems. In the
``Clock Synchronization'' problem, it is required of computers to
have their clocks set as close as possible to each other as well as
to keep a notion of real-time
(\cite{DHSS95,Lamport:1985:SCP,Liskov:1991}).

In general, it is desired for algorithms to guarantee correct
behavior of the system in face of faults or failing elements,
without strong assumptions on the initial state of the system. It
has been suggested in~\cite{R2} that similar fault considerations
may have been involved in the evolution of distributed biological
systems. In the example of the cardiac pacemaker network of the
lobster, it was concluded that at least four neurons are needed in
order to overcome the presence of one faulty neuron, though
supposedly one neuron suffices to activate the heart. The cardiac
pacemaker network must be able to adjust the pace of the
synchronized firing according to the required heartbeat, up to a
certain bound, without losing the synchrony (e.g. while escaping a
predator a higher heartbeat is required -- though not too high). Due
to the vitality of this network, it is presumably optimized for
fault tolerance, self-stabilization, tight synchronization and for
fast re-synchronization.

The apparent resemblance of the synchronization and fault tolerance
requirements of biological networks and distributed computer
networks makes it very appealing to infer from models of biological
systems onto the design of distributed algorithms in computer
science. Especially when assuming that distributed biological
networks have evolved over time to particularly tolerate inherent
heterogeneity of the cells, noise and cell death. In the current
paper, we show that in spite of obvious differences, a biological
fault tolerant synchronization model (\cite{R2}) can inspire a novel
solution to an apparently similar problem in computer science.

We propose a relaxed version of the Clock Synchronization problem,
which we call ``Pulse Synchronization'', in which all the elements
are required to invoke some regular pulse (or perform a ``task'') in
tight synchrony, but allows to deviate from exact regularity. Though
nodes need to invoke the pulses synchronously, there is a limit on
how frequently it is allowed to be invoked (similar to the linear
envelope clock synchronization limitation). The ``Pulse
Synchronization'' problem resembles physical/biological
pulse-coupled synchronization models~\cite{R1}, though in a computer
system setting an algorithm needs to be supplied for the nodes to
reach the synchronization requirement. To the best of our knowledge
this problem has not been formally defined in the settings of
distributed computer systems.

We present a novel algorithm in the settings of self-stabilizing
distributed algorithms, instructing the nodes how and when to invoke
a pulse in order to meet the synchronization requirements of ``Pulse
Synchronization''. The core elements of the algorithm are analogous
to the neurobiological principles of {\em endogenous}\/ (self
generated) {\em periodic spiking, summation} and {\em time dependent
refractoriness}. The basic algorithm is quite simple: every node
invokes a pulse regularly and sends a message upon invoking it ({\em
endogenous periodic spiking}). The node sums messages received in
some ``window of time'' ({\em summation}) and compares this to the
continuously decreasing time dependent firing threshold for invoking
the pulse ({\em time dependent refractory function}). The node fires
when the counter of the summed messages crosses the current
threshold level, and then resets its cycle. For in-depth
explanations of these neurobiological terms see~\cite{BioPhysKoch}.

The algorithm performs correctly as long as less than a third of the
nodes behave in a completely arbitrary (``Byzantine'') manner
concurrently. It ensures a tight synchronization of the pulses of
all correct nodes, while not using any central clock or global
pulse. We assume the communication network allows for a broadcast
environment and has a bounded delay on message transmission. The
algorithm may not reach its goal as long as these limitations are
violated or the network graph is disconnected. The algorithm is
self-stabilizing Byzantine and thus copes with a more severe fault
model than the traditional Byzantine fault model. Classic Byzantine
algorithms, which are not designed with self-stabilization in mind,
typically make use of assumptions on the initial state of the system
such as assuming all clocks are initially synchronized,
(c.f.~\cite{DHSS95}). Observe that the system might temporarily be
thrown out of the assumption boundaries, e.g. when more than one
third of the nodes are Byzantine or messages of correct nodes get
lost. When the system eventually returns to behave according to
these presumed assumptions it may be in an arbitrary state. A
classic Byzantine algorithm, being non-stabilizing, might not
recover from this state. On the other hand, a self-stabilizing
protocol converges to its goal from any state once the system
behaves well again, but is typically not resilient to permanent
faults. For our protocol, once the system complies with the
theoretically required bound of $f<3n$ permanent Byzantine faulty
nodes in a network of $n$ nodes then, regardless of the state of the
system, tight pulse synchronization is achieved within finite time.
It overcomes transient failures and permanent Byzantine faults and
makes no assumptions on any initial synchronized activity among the
nodes (such as having a common reference to time or a common event
for triggering initialization).

Our algorithm is uniform, all nodes execute an identical algorithm.
It does not suffer from communication deadlock, as can happen in
message-driven algorithms (\cite{R27}), since the nodes have a
time-dependent state change, at the end of which they fire
endogenously. The faulty nodes cannot ruin an already attained
synchronization; in the worst case, they can slow down the
convergence towards synchronization and speed up the synchronized
firing frequency up to a certain bound. The convergence time is
$O(f)$ cycles with a near optimal synchronization of the pulses to
within  $d$ real-time (the bound on the end to end network and
processing delay). We show in Subsection~\ref{sec:IAGree} how the
algorithm can be executed in a non-broadcast network to achieve
synchronization of the pulses to within $3d$ real-time.

\ignore{Comparatively, the convergence time of the digital clock
synchronization problem in a similar model presented
in~\cite{DolWelSSBYZCS04} is exponential in the network size. The
synchronization tightness achieved there depends on the network
size. To the best of our knowledge this is the only paper describing
an algorithm for Byzantine self-stabilizing clock synchronization
algorithm. Note also that there are many papers that deal with
self-stabilizing clock synchronization
(see~\cite{H00b,DW97b,ADG91}), though not facing Byzantine faults.
Nonetheless, the convergence time in these papers is not linear.\\}

{\bf Applications and contribution of this paper:} We have shown in
\cite{ByzStabilizer} how to stabilize general Byzantine algorithms
using synchronized pulses. In \cite{token-tr} we have presented a
very efficient, besides being the first, self-stabilizing Byzantine
token passing algorithm. The efficient self-stabilizing Byzantine
clock synchronization algorithm in \cite{DDPBYZ-CS03} is also the
first such algorithm for clock synchronization. All these algorithms
assume a background self-stabilizing Byzantine pulse synchronization
module though the particular pulse synchronization procedure
presented in \cite{DDPBYZ-CS03} suffers from a flaw\footnote{The
flaw was pointed out by Mahyar Malekpour from NASA LaRC and Radu
Siminiceanu from NIA, see \cite{NASA-BYZSS} .}. The only other
self-stabilizing Byzantine pulse synchronization algorithm (besides
the current work), is to the best of our knowledge, the one in
\cite{new-pulse-tr}, which is a correction to the one in
\cite{DDPBYZ-CS03}. In comparison to the current paper, the pulse
synchronization algorithm in \cite{new-pulse-tr} has a much higher
message complexity and worse tightness, is more complicated but it
converges in $O(1),$ does not assume broadcast and scales better.
The current paper is simpler, uses much shorter messages; it has a
smaller message complexity and introduces novel and interesting
elements to distributed computing.

In the Discussion, in Section~\ref{sec:discussion}, we postulate
that our result elucidates the feasibility and adds a solid brick to
the motivation to search for and to understand biological mechanisms
for robustness that can be carried over to computer systems.

\section{Model and Problem Definition}
\label{sec:model}\ignore{-2mm}

The environment is a network of $n$ nodes, out of which $f$ are
faulty nodes, that communicate by exchanging messages. The nodes
regularly invoke ``pulses'', ideally every $\Cycle$ real-time units.
The invocation of the pulse is expressed by sending a message to all
the nodes; this is also referred to as {\bf firing}. We assume that
the message passing allows for an authenticated identity of the
senders. The communication network does not guarantee any order on
messages among different nodes. Individual nodes have no access to a
central clock and there is no external pulse system. The hardware
clock rate (referred to as the {\em physical timers}) of correct
nodes has a bounded drift, $\rho,$ from real-time rate.  When the
system is not coherent then there can be an unbounded number of
concurrent Byzantine faulty nodes, the turnover rate between faulty
and non-faulty nodes can be arbitrarily large and the communication
network may behave arbitrarily. Eventually the system settles down
in a coherent state in which there at most $f<3n$ permanent
Byzantine faulty nodes and the communication network delivers
messages within bounded time.

\begin{definition} A node is {\bf non-faulty} at times that it complies with the
following:
\begin{enumerate}
\vspace{-0.5em} \item \emph{(Bounded Drift)} Obeys a global constant
$0<\rho<<1$ (typically $\rho \approx 10^{-6}$), such that for every
real-time interval $[u,v]:$\vspace{-2mm}
$$(1-\rho)(v-u)  \le \mbox{ `physical timer'}(v) -
\mbox{ `physical timer'}(u) \le (1+\rho)(v-u).$$\vspace{-6mm}

\item \emph{(Obedience)} Operates according to the correct protocol.

\item \emph{(Bounded Processing Time)} Processes any message of the correct protocol within $\pi$ real-time units of arrival time.

\end{enumerate}
\end{definition}

\ignore{

\ignore{Table~\ref{Table2}} Table 1 displays the actual time
differences counted by clocks exhibiting the allowed extremes of the
clock skew.

\begin{table}
\label{Table2}
 \caption{\footnotesize Time count conversion
table between nodes displaying the upper and lower bounds of the
allowed clock skews (fastest and slowest nodes respectively). For
example, suppose the first row corresponds to a time frame of $1000$
real-time units. A hardware clock whose skew is at the lower end
(slowest) of the allowed range will count only $1000000(1-10^{-6}) =
999999$ time units, while a hardware clock with the highest
acceptable skew (fastest) will count $1000000(1+10^{-6}) = 1000001$
time units accordingly.}\begin{minipage}{122mm} \center{
\begin{tabular}{|c|c|c|} \hline
 Slowest node  &  Real time  & Fastest node \\ \hline
$x(1-\rho)$ & $x$ &$x(1+\rho)$ \\ \hline $x$ & $\frac{x}{1-\rho}$ &
$\frac{x(1+\rho)}{1-\rho}$
\\ \hline
$\frac{x(1-\rho)}{1+\rho}$ & $\frac{x}{1+\rho}$& $x$ \\ \hline
\end{tabular}
}

\end{minipage}
\end{table}
}

A node is considered {\bf faulty} if it violates any of the above
conditions. The faulty nodes can be Byzantine. A faulty node may
recover from its faulty behavior once it resumes obeying the
conditions of a non-faulty node. In order to keep the definitions
consistent the ``correction'' is not immediate but rather takes a
certain amount of time during which the non-faulty node is still not
counted as a correct node, although it supposedly behaves
``correctly''\footnote{For example, a node may recover with
arbitrary variables, which may violate the validity condition if
considered correct immediately.}. We later specify the time-length
of continuous non-faulty behavior required of a recovering node to
be considered \textbf{correct}.

\begin{definition} The communication network is {\bf non-faulty} at periods that it complies with the
following: \label{def:net_nonfaulty}
\begin{itemize}

\item \emph{(Bounded Transmission Delay)} Any message sent or received by a non-faulty node will
arrive at every non-faulty node within $\delta$ real-time units.
\end{itemize}
\end{definition}

Thus, our communication network model is an ``eventual bounded-delay''
communication network.

\vspace{+2mm} \noindent{\bf Basic definitions and notations: }\\

We use the following notations though nodes do not need to maintain
all of them as variables.
\begin{itemize}
\vspace{-0.5em}

\item $d\equiv \delta + \pi.$ Thus, when the communication network is non-faulty,
$d$ is the upper bound on the elapsed real-time from the sending of
a message by a non-faulty node until it is received and processed by
every correct node.

\item A \textbf{$\pulse$} is an internal event targeted to happen in
``tight''\footnote{We consider $c\cdot d, $ for some small constant
$c,$ as tight.} synchrony at all correct nodes. A \textbf{$\Cycle$}
is the ``ideal'' time interval length between two successive pulses
that a node invokes, as given by the user. The actual \cycle\
length, denoted in regular caption, has upper and lower bounds as a
result of faulty nodes and the physical clock skew.

\item $\sigma$ represents the upper bound on the real-time window within which
all correct nodes invoke a pulse ({\em tightness of pulse
synchronization}). Our solution achieves $\sigma=d.$ We assume that
$\Cycle \gg\sigma.$

\ignore{-3mm} \item $\phi_i(t) \in \mathbb{R}^+\cup \{\infty\},$
$0\le i\le n,$ denotes, at real-time $t,$ the elapsed real-time
since the last pulse invocation of $p_i.$ It is also denoted as the
``$\phi$ of node $p_i$''. We occasionally omit the reference to the
time in case it is clear out of the context. For a node, $p_j,$ that
has not fired since initialization of the system, $\phi_j \equiv
\infty.$

\item $\cyclemin$ and
$\cyclemax$ are values that define the bounds on the actual \cycle\
length during correct behavior. We achieve
$$\cycle_{\mbox{\scriptsize\slshape
min}}=\frac{n-2f}{n-f}\cdot\Cycle\cdot(1-\rho) \le \cycle \le
\Cycle\cdot(1+\rho)=\cyclemax\;.$$

\ignore{-3mm} \item $\mbox{\slshape message\_decay}\/$ represents
the maximal real-time a non-faulty node will keep a message or a
reference to it, before deleting it\footnote{The exact elapsed time
until deleting a messages is specified in the {\sc prune} procedure
in Fig.~\ref{alg:prune}.}.

\end{itemize}

In accordance with Definition~\ref{def:net_nonfaulty}, the network
model in this paper is such that every message sent or received by a non-faulty node arrives
within bounded time, $\delta,$ at all
non-faulty nodes. The algorithm and its respective proofs are specified in a stronger network
model in which every message received by a non-faulty node arrives
within $\delta$ time at all
non-faulty nodes. The subtle difference in the latter definition equals the assumption that every message received by a non-faulty node, even a message from a Byzantine node, will eventually reach every non-faulty node. This weakens the possibility for two-faced behavior by Byzantine nodes. The algorithm is able to utilize this fact so that if executed in such a network environment, then it can
attain a very tight, near optimal, pulse synchronization of $d$ real-time units. We show in Subsection~\ref{sec:IAGree} how to execute in the background a self-stabilizing Byzantine
reliable-broadcast-like primitive, which executes in the network model of Definition~\ref{def:net_nonfaulty}. This primitive effectively relays every message received by a non-faulty node so that the latter network model is satisfied. In such a case the algorithm can be executed in the network model of Definition~\ref{def:net_nonfaulty} and achieves synchronization of the pulses to within $3d$ real-time.

\ignore{The protocol below requires that $\cyclemin>(3n+1)\cdot
d\;+\;3d.$}

Note that the protocol parameters $n,$ $f$ and $\Cycle$ (as well as
the system characteristics $d$ and $\rho$) are fixed constants and
thus considered part of the incorruptible correct code\footnote{A
system cannot self-stabilize if the entire code space can be
perturbed, see \cite{CodeStabilizationSSS06}.}. Thus we assume that
non-faulty nodes do not hold arbitrary values of these constants.
\ignore{It is required that $\Cycle$ is chosen s.t. $\cyclemin$ is
large enough to allow our protocol to terminate in between pulses.}

A recovering node should be considered correct only once it has been
continuously non-faulty for enough time to enable it to have decayed
old messages and to have exchanged information with the other nodes
through at least a cycle.

\begin{definition} A node is {\bf correct} following $\cyclemax
+\sigma+ \mbox{\slshape message\_decay}$ real-time of continuous
non-faulty behavior.
\end{definition}

\begin{definition} The communication network is {\bf correct} following $\cyclemax
+\sigma+ \mbox{\slshape message\_decay}$ real-time of continuous
non-faulty behavior.
\end{definition}

\begin{definition}\label{def:system-coherence} \emph{(System Coherence)} The system is said to
be {\bf coherent} at times that it complies with the following:

\begin{enumerate}

\item \emph{(Quorum)} There are at least $n-f$ correct nodes, where $f$ is the
upper bound on the number of potentially non-correct nodes, at
steady state.

\item \emph{(Network Correctness)} The communication network is correct.
\end{enumerate}
\end{definition}

Hence, if the system is not coherent then there can be an unbounded
number of concurrent faulty nodes; the turnover rate between the
faulty and non-faulty nodes can be arbitrarily large and the
communication network may deliver messages with unbounded delays, if
at all. The system is considered coherent, once the communication
network and a sufficient fraction of the nodes have been non-faulty
for a sufficiently long time period for the pre-conditions for
convergence of the protocol to hold. The assumption in this paper,
as underlies any other self-stabilizing algorithm, is that the
system eventually becomes coherent.

All the lemmata, theorems, corollaries and definitions hold as long
as the system is coherent.\\

\ignore{

\begin{itemize}
\item The {\bf pulse\_state} of the system at real-time $t$ is
given by: $$pulse\_state(t) \equiv (\phi_0(t), \ldots,
\phi_{n-1}(t))\;.$$

\item Let $G$ be the set of all possible pulse\_states of a system.

\item A set of nodes, $N,$ are called {\bf synchronized} at
real-time $t$ if $$\forall\, p_i, p_j \in N,
 |\phi_i(t) - \phi_j(t)| \le \sigma.$$

\item $s\in G$ is a {\bf synchronized\_pulse\_state}
\emph{of a set of nodes $N$}, at real-time $t$ if the set of correct
nodes in  $N$ are synchronized at some real-time $t_{syn} \in [t,
t+\sigma].$

\item $s\in G$ is a {\bf synchronized\_pulse\_state}
\emph{of the system} at real-time $t$ if the set of correct nodes
are synchronized at some real-time $t_{syn} \in [t, t+\sigma].$
\end{itemize}

The definition of a synchronized\_pulse\_state masks the recurring
brief time period in which a correct node in a synchronized set has
just fired while other are about to fire. In this short period the
nodes violate the definition of a synchronized set of nodes,
although the system may be in a legal state.

Note that if the set $N$ encompasses all the correct nodes then a
synchronized\_pulse\_state of the set of nodes $N,$ implies a
synchronized\_pulse\_state of the system. Observe that the
definition of a synchronized set of nodes is not transitive, thus
the definition of a synchronized\_pulse\_state of a set of nodes is
not transitive either.

} 

We now seek to give an accurate and formal definition of the notion
of pulse synchronization. We start by defining a subset of the
system states, which we call \emph{pulse\_states}, that are
determined only by the elapsed real-time since each individual node
invoked a pulse (the $\phi$'s). We then identify a subset of the
pulse\_states in which some set of correct nodes have ''tight`` or
''close`` $\phi$'s. We refer to such a set as a \emph{synchronized}
set of nodes. To complete the definition of synchrony there is a
need to address the recurring brief time period in which a correct
node in a synchronized set of nodes has just fired while others are
about to fire. This is addressed by adding to the definition nodes
whose $\phi$'s are almost a \Cycle\ apart.

If all correct nodes in the system comprise a synchronized set of
nodes then we say that the pulse\_state is a
\emph{synchronized\_pulse\_states of the system}. The objective of
the algorithm is hence to reach a synchronized\_pulse\_state of the
system and to stay in such a state. The methodology to prove that
our algorithm does exactly this will be to show firstly that a
synchronized set of correct nodes stay synchronized. Secondly, we
show that such synchronized sets of correct nodes incessantly join
together to form bigger synchronized sets of nodes. This goes on
until a synchronized set that encompasses all correct nodes in the
system is formed.

\begin{itemize}
\item The {\bf pulse\_state} of the system at real-time $t$ is
given by: $$pulse\_state(t) \equiv (\phi_0(t), \ldots,
\phi_{n-1}(t))\;.$$

\item Let $G$ be the set of all possible pulse\_states of a system.

\item A set of nodes, $S,$ is called {\bf
synchronized} at real-time $t$ if\\
\vspace{+1mm}$\forall p_i, p_j \in S,$  $\phi_i(t),\phi_j(t) \le
\cyclemax,$ and  one of the following is true:
\begin{enumerate}
\item

 $ |\phi_i(t) - \phi_j(t)| \le \sigma, \mbox{\ \ \ or }$

\item

$ \cyclemin - \sigma \le |\phi_i(t) - \phi_j(t)| \le \cyclemax$ and
$ |\phi_i(t-\sigma) - \phi_j(t-\sigma)| \le \sigma.$
\end{enumerate}

\item $s\in G$ is a {\bf synchronized\_pulse\_state}
\emph{of the system} at real-time $t$ if the set of correct nodes is
synchronized at real-time  $t.$
\end{itemize}

\begin{definition}\label{def:pulse-synch}{\bf The Self-Stabilizing Pulse Synchronization
Problem}\vspace{+2mm}\\

\vspace{-1mm}

\noindent{\bf Convergence:} Starting from an arbitrary system state,
 the system reaches a synchronized\_pulse\_state after a finite
time.\\

\noindent{\bf Closure:} If $s$ is a synchronized\_pulse\_state of
the system at real-time $t_0$ then $\forall\,$ real-time $t,t\ge
t_0,$
\begin{enumerate}
\vspace{-1mm} \item pulse\_state(t) is a synchronized\_pulse\_state,

\ignore{\item $(t-t_0\le \cycle_{\mbox{\scriptsize\slshape  min}}
\Rightarrow \psi_i(t,t_0)\le 1) \bigwedge (t-t_0\ge
\cycle_{\mbox{\scriptsize\slshape  max}} \Rightarrow
\psi_i(t,t_0)\ge 1),$\\ for every correct node $i.$}

\item In the real-time interval [$t_0,\;t$] every correct node will invoke at most a single pulse if $t-t_0\ge\cyclemin$
and will invoke at least a single pulse if $t-t_0\ge\cyclemax.$
\end{enumerate}
\end{definition}

The second Closure condition intends to tightly bound the effective
pulse invocation frequency within a priori bounds. This is in order
to defy any trivial solution that could synchronize the nodes, but
be completely unusable, such as instructing the nodes to invoke a
pulse every $\sigma$ time units. Note that this is a stronger
requirement than the ``linear envelope progression rate'' typically
required by clock synchronization algorithms, in which it is only
required that clock time progress as a linear function of real-time.

\section{The ``Pulse Synchronization'' Algorithm}
\label{sec:s3} \ignore{-2mm} We now present the {\sc
bio-pulse-synch} algorithm that solves the ``Pulse Synchronization''
problem defined in Definition~\ref{def:pulse-synch}, inspired by and
following a neurobiological analog. The \textbf{refractory function}
describes the time dependency of the firing threshold. At threshold
level $0$ the node invokes a pulse (\emph{fires}) {\bf
endogenously}. The algorithm uses several sub-procedures. With the
help of the {\sc \textbf{summation}} procedure, each node sums the
pulses that it learns about during a recent time window. If this sum
(called the {\em Counter}) crosses the current (time-dependent)
threshold for firing, then the node will fire, i.e broadcasts its
Counter value at the firing time. The exact properties of the time
window for summing messages is determined by the message decay time
in the {\sc prune} procedure (see Fig.~\ref{alg:prune}).

We now show in greater detail the elements and procedures described
above.

\vspace{+2mm} \noindent{\bf The refractory function }\\ The \Cycle\
is the predefined time a correct node will count on its timer before
invoking an endogenous pulse. The refractory function,
$REF(t):t\rightarrow \{0..n\!\!+\!\!1\},$ determines at every moment
the threshold for invoking a new pulse. The refractory function is
determined by the parameters \Cycle\, $n$, $f$, $d$ and $\rho.$ All
correct nodes execute the same protocol with the same parameters and
have the same refractory function. The refractory function is shaped
as a monotonously decreasing step function comprised of $n+2$ steps,
$REF\equiv (R_{n+1}, R_n, ..., R_0),$ where step $R_i \in
\mathbb{R}^+$ is the time length on the node's timer of threshold
level $i.$ The refractory function $REF,$ starts at threshold level
$n+1$ and decreases with time towards threshold level $0.$ The time
length of each threshold step is formulated in Eq.~\ref{eq:REF}:

\begin{equation} R_i=\left\{\begin{array}{ll}
\frac{\frac{1}{1-\rho}Cycle}{n-f}&i=1\ldots n-f-1\\\\
\frac{R_1-R_{n+1}-\frac{\rho}{1-\rho}Cycle}{f+1}&i=n-f\ldots n\\\\
2d(1+\rho)\cdot\frac{(\frac{1+\rho}{1-\rho})^{n+3}-1}{(\frac{1+\rho}{1-\rho})-1}&i=n+1,
\end{array}
\right. \label{eq:REF}
\end{equation}

Subsequent to a pulse invocation the refractory function is
restarted at $REF=n+1.$ The node will then commence threshold level
$n$ only after measuring $R_{n+1}$ time units on its timer.
Threshold level 0 ($REF=0$) is reached only if exactly \Cycle\ time
units have elapsed on a node's timer since the last pulse
invocation, following which threshold level $n+1$ is reached
immediately. Hence, by definition, $ \sum_{i=1}^{n+1}R_i \equiv
\Cycle.$ It is proven later in Lemma~\ref{lem:REF_prop_6} that $REF$
in Eq.~\ref{eq:REF} is consistent with this.

The special step $R_{n+1}$ is called the {\bf absolute refractory
period} of the cycle. Following the neurobiological analogue with
the same name, this is the first period after a node fires, during
which its threshold level is in practice ``infinitely high''; thus a
node can never fire within its absolute refractory period.

See Fig.~\ref{fig:figure1} for a graphical presentation of the
refractory function and its role in the main algorithm.

\vspace{+2mm} \noindent{\bf The message sent when firing} \\
The content of a message $M_p$ sent by a node $p,$ is the Counter,
which represents the number of messages received within a certain
time window (whose exact properties are described in the appendix)
that triggered $p$ to fire. We use the notation $Counter_p$ to mark
the local Counter at node $p$ and $Counter_{M_p}$ to mark the
Counter contained in a received message $M_p$ sent by node $p.$

\subsection{The {\sc SUMMATION} procedure}
\label{ssec:summation-sect}

A full account of the proof of correctness of the {\sc summation}
procedure is provided in the appendix. The {\sc summation} procedure
is executed upon the arrival of a new message. Its purpose is to
decide whether this message is eligible for being counted. It is
comprised of the following sub-procedures:

Upon arrival of the new message, the {\sc \textbf{timeliness}}
procedure determines if the Counter contained in the message seems
``plausible'' (\emph{timely}) with respect to the number of other
messages received recently (it also waits a short time for such
messages to possibly arrive). The bound on message transmission and
processing time among correct nodes allows a node to estimate
whether the content of a message it receives is plausible and
therefore timely. For example, it does not make sense to consider an
arrived message that states that it was sent as a result of
receiving $2f$ messages, if less than $f$ messages have been
received during a recent time window. Such a message is clearly seen
as a faulty node by all correct nodes. On the other, a message that
states that it was sent as a result of receiving $2f$ messages, when
$2f-1$ messages have been received during a recent time window does
not bear enough information to decide whether it is faulty or not,
as other correct nodes may have decided that this message is timely,
due to receiving a faulty message. Such a message needs to be
temporarily tabled so that it can be reconsidered for being counted
in case some correct node sends a message within a short time, and
which has counted that faulty message. Thus, intuitively, a message
will be timely if the Counter in that messages is less or equal to
the total number of tabled or timely messages that were received
within a short recent time window. The exact length of the
``recent'' time window is a crucial factor in the algorithm. There
is no fixed time after which a message is too old to be timely. The
time for message exchange between correct nodes is never delayed
beyond the network and processing delay. Thus, the fire of a correct
node, as a consequence of a message that it received, adds a bounded
amount of relay time. This is the basis for the time window within
which a specific Counter of a message is checked for plausibility.
Hence, a particular Counter of a message is plausible only if there
is a sufficient number of other messages (tabled or not) that were
received within a sufficiently small time window to have been
relayed from one to the other within the bound on relaying between
correct nodes. As an example, consider that the bound on the allowed
relay interval of messages is taken to be $2d$ time units. Suppose
that a correct node receives a message with Counter that equals $k.$
That message will only be considered as timely if there are at least
$k+1$ messages that were received (including the last one) in the
last $k\cdot2d$ time window. This is the main criterion for being
timely. On termination of the procedure the message is said to have
been \textbf{assessed}.

If a message is assessed as timely then the {\sc
\textbf{make-accountable}} procedure determines by how much to
increment the Counter. It does so by considering the minimal number
of recently tabled messages that were needed in order to assess the
message as timely. This number is the amount by which the Counter is
incremented by. A tabled message is marked as ``uncounted'' because
the node's Counter does not reflect this message. Tabled messages
that are used for assessing a message as timely become marked as
``counted'' because the node's Counter now reflect these message as
if they were initially timely. A node's Counter at every moment is
exactly the number of messages that are marked as ``counted'' at
that moment.

The {\sc \textbf{prune}} procedure is responsible for the tabling of
messages. A correct node wishes to  mark as counted, only those
messages which considering the elapsed time since their arrival,
will together pass the criterion for being timely at any correct
nodes receiving the consequent Counter to be sent. Thus, messages
that were initially assessed as timely are tabled after a short
while. This is what causes the Counter to dissipate. After a certain
time messages are deleted altogether ({\em decayed}).

\begin{figure}[h!]
\center \fbox{\begin{minipage}{4.6in} \footnotesize
\begin{alltt}
\setlength{\baselineskip}{4mm}

{\bf {\sc SUMMATION}({\em a new message $M_p$ arrived at time $t_{\mathrm arr}$})}\hfill/* {\em at node q} */\\

if ({\sc timeliness}$(M_p, t_{\mathrm arr})$ == ``$M_p$ is timely'') then \\
\tb {\sc make-accountable}$(M_p)$;\hfill /* {\em possibly increment $Counter_q$} */\\
\tb {\sc prune}$(t)$;

\end{alltt}
\normalsize
\end{minipage} }
\caption{The {\sc summation} procedure} \label{alg:summation-alg}
\end{figure}

The target of the {\sc summation} procedure is formulated in the
following two properties:

\vspace{2mm}\noindent{\bf Summation Properties:} Following the
arrival of a message from a correct node:
\begin{enumerate}

\item [\textbf{P1:}] The message is assessed within $d$ real-time units.

\item [\textbf{P2:}] Following assessment of the message the receiving node's Counter is incremented to hold a
value greater than the Counter in the message.
\end{enumerate}

\noindent The {\sc summation} procedure satisfies the Summation
Properties by the following heuristics:
\begin{itemize}
\item When the Counter crosses the threshold level, either
due to a sufficient counter increment or a threshold decrement, then
the node sends a message (fires). The message sent holds the value
of Counter at sending time.

\item The {\sc timeliness} procedure is employed at the receiving
node to assess the credibility (timeliness) of the value of the
Counter contained in this message. This procedure ensures that
messages sent by correct nodes with Counter less than $n$ will
always be assessed as timely by other correct nodes receiving this
message.

\item When a received message is declared timely and therefore
accounted for it is stored in a ``counted'' message buffer
(``Counted Set''). The receiving node's Counter is then updated to
hold a value greater than the Counter in the message by the {\sc
make-accountable} procedure.

\item If a message received is declared untimely then it is
temporarily stored in an ``uncounted'' message buffer (``Uncounted
Set'') and will not be accounted for at this stage. Over time, the
timeliness test of previously stored timely messages may not hold
any more. In this case, such messages will be moved from the Counted
Set to the Uncounted Set by the {\sc prune} procedure.

\item All messages are deleted after a certain time-period (message decay time) by the
{\sc prune} procedure.
\end{itemize}

\vspace{2mm}\noindent{\bf Definitions and state variables:}\\

\ignore{2mm}\noindent{\bf Counter}: an integer representing the
node's estimation of the number of timely firing events received
from distinct nodes within a certain time window. Counter is updated
upon receiving a timely message. The node's Counter is checked
against the refractory function whenever one of them changes. The
value of Counter is bounded and changes non-monotonously; the
arrival of timely events may increase it and the decay/untimeliness
of old events may decrease it.\\

\ignore{2mm}\noindent{\bf Stored message}: is a basic data structure
represented as ($S_p, t_{arr}$) and created upon arrival of a
message $M_p.$ $S_p$ is the $id$ (or signature) of the sending node
$p$ and $t_{arr}$ is the local arrival time of the message. We say
that two stored messages, ($S_p, t_1$) and ($S_q, t_2$), are
{\bf distinct} if $p\neq q.$\\

\ignore{2mm}\noindent{\bf Counted Set (CS)}: is a set of distinct
stored messages that determine the current value of Counter. The
Counter reflects the number of stored messages in the Counted Set. A
stored message is {\bf accounted for} in Counter, if it
was in CS when the current value of Counter was determined.\\

\ignore{2mm}\noindent{\bf Uncounted Set (UCS)}: is a set of stored
messages, not necessarily distinct, that have not been accounted for
in the current value of Counter and that are not yet due to decay. A
stored message is placed (tabled) in the UCS when its message
clearly reflects a faulty sending node (such as when multiple
messages from the same node are received) or because it is
not timely anymore.\\

\ignore{2mm}\noindent{\bf Retired UCS (RUCS)}: is a set of distinct
stored messages not accounted for in the current value of Counter
due to the elapsed local time since their arrival. These stored messages are awaiting deletion (decaying).\\

The CS and UCS are mutually exclusive and together reflect the
 messages received from other nodes in the preceding time
window. Their union is denoted the node's {\bf \MessagePool. }\\

\noindent \mbox{\boldmath $t_{{\mathrm send}\,M_p}$}: denotes the
local-time at which a node $p$ sent a message $M_p.$ An equivalent
definition of $t_{{\mathrm send}\,M_p}$ is the local-time at which a
receiving node $p$ is ready to assess whether to send a message
consequent to the arrival and processing of some other message.\\

\noindent\mbox{\boldmath $MessageAge(t, q, p)$}: is the elapsed
time, at time $t,$ on a node $q$'s clock since the most recent
arrival of a message from node $p, $ which arrived at local-time
$t_{arr}.$ Thus, its value at
 node $q$ at current local-time $t$ is given by $t-t_{arr},$ where
$M_p$ is the most recent message that arrived from $p.$ If no stored
message is held at $q$ for $p$ then
$MessageAge(t, q, p)=\infty.$\\

\noindent\mbox{\boldmath $CSAge(t)$}: denotes, at local-time $t,$
the
largest $MessageAge(t, q, \ldots)$ among the stored messages in CS of node $q$.\\

\noindent \mbox{\boldmath $\tau$}: denotes the function
$\tau(k)\equiv
2d(1+\rho)\frac{(\frac{1+\rho}{1-\rho})^{k+1}-1}{(\frac{1+\rho}{1-\rho})-1}\;\;
.$\\


\vspace{2mm}\noindent{\bf The set of procedures used by the {\sc
summation} procedure (at node $q$):}

\begin{figure}[!h]
\center \fbox{\begin{minipage}{5in} \footnotesize
\begin{alltt}
\setlength{\baselineskip}{3.5mm}

The following procedure moves and deletes obsolete stored messages.
It prunes the CS to hold only stored messages such that a message
sent holding the resultant Counter will be
assessed as timely at any correct node receiving the message.\\

{\bf {\sc PRUNE} }($t$) \hfill \textit{/* at node q \textit{\ }*/}

\vspace{-2mm}\begin{itemize} \ignore{-3mm}\item Delete from RUCS all
entries ($S_p, t$) whose $MessageAge(t, q, p) > \tau (n+2);$
\ignore{-7mm}\item Move to RUCS, from the $\MessagePool,$ all stored
messages ($S_p, t$) whose $MessageAge(t, q, p) > \tau(n+1);$
\ignore{-3mm}\item Move to UCS, from CS, stored messages, beginning
with the oldest, until: $CSAge(t) \le \tau(k-1),$ where $k=\max[1,\;
\|CS\|];$ \ignore{-3mm}\item Set $Counter := \|CS\|;$
\end{itemize}
\end{alltt}
\normalsize
\end{minipage} }
\caption{The {\sc prune} procedure} \label{alg:prune}
\end{figure}

\begin{figure}[!h]
\center \fbox{\begin{minipage}{5in} \footnotesize
\begin{alltt}
\setlength{\baselineskip}{3.5mm}

We say that $M_p$ has been {\bf assessed} by $q,$ once the following
procedure is completed. A message $M_p,$ is timely at local time
$t_{arr}$ at node $q$ once it is declared timely by the procedure,
i.e. 1: whether the Counter in the message is within its valid
range; 2: whether the sending node has recently sent a message, in
which only the latest is considered; 3: whether enough messages have
been received
recently to support the credibility of the Counter in the message.\\

{\bf {\sc TIMELINESS} }$(M_p, t_{arr})$ \hfill \textit{/* at node q
\textit{\ }*/}\\
\textit{/* check if Counter is valid\hfill\textit{\ }*/}\\
\noindent\tb{\bf Timeliness Condition 1:}\\
\tb If ($0 \le Counter_{M_p} \le n-1$) Then \\
\due Create a new stored message ($S_p, t_{arr}$) and insert it into UCS;\\
\tb Else\\
\due return ``$M_p$ is not timely'';\\

\noindent\textit{/* if an older message from same node already
exists then must be a faulty node.
Delete all its entries but the latest.\hfill\textit{\ }*/}\\
\noindent\tb{\bf Timeliness Condition 2:}\\
\tb If ($\exists (S_p, t),$ s.t. $t\neq t_{arr},$ in $\MessagePool$
$\cup$ RUCS) Then\footnote{We assume no concomitant messages are
stamped with the exact same arrival times at a correct node.
We assume that one can uniquely identify messages.}\\
    \due delete from $\MessagePool$ all ($S_p, t'$), where $t^\prime\neq t_{arr};$\\
\due return ``$M_p$ is not timely'';\\

\noindent\textit{/* check if $Counter_{M_{p}}$ seems credible with respect to the $\MessagePool$\hfill\textit{\ }*/}\\
\noindent\tb{\bf Timeliness Condition 3:}\\
\tb Let $k$ denote $Counter_{M_{p}}.$\\
\tb If (at some local-time $t$ in the interval $[t_{arr},\; t_{arr}
+ d(1+\rho )]:$\\
\tb $\|\{(S_r,t^\prime)|(S_r,t^\prime) \in \MessagePool,
MessageAge(t, q, r) \le \tau(k+1)\}\|\ge k+1$) Then\footnote{We
assume
the implementation can assess these conditions within the time window.}\\
\due return ``$M_p$ is timely'';\\
\tb Else\\
\due return ``$M_p$ is not timely'';\\

\end{alltt}
\normalsize
\end{minipage} }
\caption{The {\sc timeliness} procedure} \label{alg:timeliness}
\end{figure}

\begin{figure}[!ht]
\center \fbox{\begin{minipage}{5in} \footnotesize
\begin{alltt}
\setlength{\baselineskip}{3.5mm}

This procedure moves stored messages from UCS into CS and updates
the value of Counter. This is done in case the arrival of a new
timely message $M_p,$ has made previously uncounted stored messages eligible for being counted.\\

{\bf {\sc MAKE-ACCOUNTABLE} }($M_p$) \hfill \textit{/* at node q
\textit{\ }*/}
\begin{itemize} \ignore{-3mm}\item Move the $\max[1,\;
(Counter_{M_p} - Counter_q + 1)]$ most recent distinct stored
messages from UCS to CS; \ignore{-7mm}\item Set $Counter := \|CS\|;$
\end{itemize}
\end{alltt}
\normalsize
\end{minipage} }
\caption{The {\sc make-accountable} procedure}
\label{alg:make-accountable}
\end{figure}

\begin{figure}[h!]
\center \fbox{\begin{minipage}{5in} \footnotesize
\begin{alltt}
\setlength{\baselineskip}{3.5mm}

This procedure causes the effective \cycle\ of the node to be reset,
meaning that the $REF$ function starts the cycle from the highest
threshold
level again and down to threshold level 0.\\

{\bf {\sc CYCLE-RESET} }() \hfill \textit{/* at node q \textit{\
}*/}
\begin{itemize} \ignore{-3mm}\item Restart $REF$ at
$REF:=n\!\!+\!\!1;$
\end{itemize}
\end{alltt}
\normalsize
\end{minipage} }
\caption{The {\sc cycle-reset} procedure} \label{alg:cyclereset}
\end{figure}

\newpage
\mbox{\ }
\newpage

We now cite the main theorems of the {\sc summation} procedure. The
proofs are given in the appendix.

\begin{theorem}\label{thm:timely} Any message, $M_p,$ sent by a correct
node $p$ will be assessed as timely by every correct node $q.$
\end{theorem}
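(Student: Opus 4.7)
The plan is to verify, in order, that each of the three Timeliness Conditions at $q$ succeeds when a message $M_p$ from a correct node $p$ arrives at local time $t_{arr}$. Condition~1 is immediate: at the moment $p$ fires, its CS holds distinct stored messages from other nodes, so $0 \le Counter_p \le n-1$, which is exactly the required range. Condition~2 asks that no older entry of the form $(S_p, t')$ with $t' \ne t_{arr}$ sits in $q$'s \MessagePool\ or in RUCS; here the argument is that the absolute refractory period $R_{n+1}$ and the decay horizon $\tau(n+2)$ used by {\sc prune} are calibrated so that any previous message from $p$ has already been deleted from RUCS at $q$ by the time the current $M_p$ arrives, even after accounting for $\rho$-drift and the $d$-delay.

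The substantive step is Condition~3. Let $k := Counter_{M_p}$. By the invariant maintained by {\sc prune}, at the moment $p$ sent $M_p$ its CS contained $k$ distinct stored messages, each with $CSAge \le \tau(k-1)$ on $p$'s clock. The plan is to convert this into a statement about $q$'s \MessagePool\ at some local time in the scan window $[t_{arr}, t_{arr} + d(1+\rho)]$. On $p$'s clock, those $k$ contributing senders' messages arrived within the last $\tau(k-1)$ units, which is at most $\tau(k-1)/(1-\rho)$ real-time. By the broadcast-style network assumption, those same $k$ messages also reached $q$, with the real-time difference between the two receipts bounded by $d$. Adding the at most $d$ real-time delay of $M_p$ itself to $q$ and the at most $d(1+\rho)/(1-\rho)$ real-time waiting slack provided by the scan window, every one of the $k$ contributing messages has real-time age at $q$ bounded by $\tau(k-1)/(1-\rho) + 2d + d(1+\rho)/(1-\rho)$. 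Multiplying by the $(1+\rho)$ conversion factor back to $q$'s clock and appealing to the recursion $\tau(k+1) = \tfrac{1+\rho}{1-\rho}\,\tau(k) + 2d(1+\rho)$ implied by the geometric form of $\tau$, this local-clock age bound fits inside $\tau(k+1)$. Hence the $k$ contributions plus $M_p$ itself furnish $k+1$ distinct stored messages in $q$'s \MessagePool\ whose MessageAge is at most $\tau(k+1)$ at some instant in the scan window, and Condition~3 succeeds.

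The main obstacle is the bookkeeping inside Condition~3: one must compose the clock-drift factor $\rho$ with the relay skew $d$ across two different clocks and over up to $k+1$ messages, and then verify that the composition actually fits inside $\tau(k+1)$. The geometric definition of $\tau$ is calibrated precisely for this composition, so although no single step is conceptually deep, the final inequality is tight and the constants have to be tracked carefully rather than reasoned about asymptotically. Conditions~1 and~2 are comparatively routine once the design of the refractory function and the decay horizon is unpacked.
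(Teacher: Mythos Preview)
Your plan is the same as the paper's: verify the three timeliness conditions in turn, with the heart of the argument being the age-tracking for Condition~3 that pushes the {\sc prune} invariant at $p$ through the drift factor $(1+\rho)/(1-\rho)$ and the $2d$ relay slack to land inside $\tau(k+1)$ at $q$. The paper isolates this computation as a separate lemma (Lemma~\ref{lem:messageAgeTau}) and uses the looser starting bound $\tau(Counter_{M_p})$ rather than your $\tau(Counter_{M_p}-1)$, but the recursion step and the conclusion are identical.

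One gap to close: your Condition~1 argument is not actually immediate. You assert that at firing time $p$'s CS contains only messages ``from other nodes'' and hence has size at most $n-1$, but in this model $p$ receives its own broadcast and may store it in CS. The reason the self-entry is absent at the \emph{next} firing instant is that the absolute refractory step $R_{n+1}=\tau(n+3)$ exceeds the full decay time $\tau(n+2)+d(1+\rho)$, so $p$'s own stored message has already been deleted from RUCS before $p$ can leave threshold level $n{+}1$ and fire again. This is exactly the content of the paper's Lemma~\ref{lem:l4}; without invoking that refractory-versus-decay comparison the bound $Counter_{M_p}\le n-1$ is unsupported.
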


\begin{lemma}\label{lem:counter_inc} Following the arrival of a timely
message $M_p,$ at a node $q,$ then at time $t_{\scriptsize{\mathrm
send}\,M_q},$ $Counter_q
> Counter_{M_p}.$
\end{lemma}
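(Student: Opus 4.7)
The plan is to trace the \textsc{summation} procedure through its three stages --- \textsc{timeliness}, \textsc{make-accountable}, and \textsc{prune} --- and show that the inequality $Counter_q > Counter_{M_p}$ emerges at the local time $t_{\mathrm{send}\,M_q}$, i.e.\ the moment assessment following this message completes.

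First, I unpack what it means for $M_p$ to have been declared timely. Writing $k := Counter_{M_p}$, Timeliness Condition~3 states that at some local time in $[t_{arr},\, t_{arr}+d(1+\rho)]$ the $\MessagePool$ of $q$ contains at least $k+1$ stored messages whose $MessageAge$ is bounded by $\tau(k+1)$. Timeliness Condition~2 guarantees that $\MessagePool$ holds at most one entry per sender, so these $k+1$ messages are distinct by sender. Since $\MessagePool=CS\cup UCS$ with $\|CS\|=Counter_q$, at least $(k+1)-Counter_q$ of these recent distinct messages sit in UCS at the time $M_p$ is assessed; moreover, $M_p$ itself was inserted into UCS by Condition~1.

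Second, I analyze \textsc{make-accountable}. It moves $j := \max[1,\, Counter_{M_p}-Counter_q+1]$ most-recent distinct stored messages from UCS into CS and then sets $Counter_q := \|CS\|$. By the previous step these $j$ messages are available in UCS. If $Counter_{M_p}\ge Counter_q$, the new $\|CS\|$ becomes $Counter_q+(k-Counter_q+1)=k+1$; otherwise $\|CS\|$ grows by at least one, which already strictly exceeds $Counter_{M_p}$. Either way, immediately after \textsc{make-accountable}, $Counter_q\ge Counter_{M_p}+1$.

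The main obstacle --- and the bulk of the work --- is the subsequent \textsc{prune} call: one must show that prune cannot cascade the eviction of CS entries enough to collapse $\|CS\|$ back below $k+1$. Here the geometric structure of $\tau(k)=2d(1+\rho)\cdot\bigl((\tfrac{1+\rho}{1-\rho})^{k+1}-1\bigr)/\bigl(\tfrac{1+\rho}{1-\rho}-1\bigr)$ does the heavy lifting: it satisfies the recursion $\tau(k+1)=\tfrac{1+\rho}{1-\rho}\tau(k)+2d(1+\rho)$, which is precisely the worst-case age inflation suffered by a relay chain of length $k+1$ through correct nodes. I would argue that the ages of the messages that \textsc{make-accountable} just inserted, combined with the ages already present in the old CS (which met $CSAge\le\tau(Counter_q-1)$ by the prior invariant of \textsc{prune}), line up under the $\tau$ hierarchy so that the invariant $CSAge\le\tau(\|CS\|-1)$ is self-consistent at the target value $\|CS\|=k+1$. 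Consequently, any eviction performed by \textsc{prune} affects only messages outside this witnessing set of $k+1$, and the inequality $Counter_q \ge k+1 > Counter_{M_p}$ persists at $t_{\mathrm{send}\,M_q}$.
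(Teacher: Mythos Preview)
Your first two steps are essentially the content of the paper's Lemma~\ref{lem:l8}, and they are correct: timeliness of $M_p$ guarantees $k+1$ recent distinct messages in the \MessagePool, and \textsc{make-accountable} then pushes $\|CS\|$ to at least $k+1$.

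The gap is in step~3. You claim that after \textsc{make-accountable} the prune invariant $CSAge \le \tau(\|CS\|-1)$ is ``self-consistent'' at $\|CS\|=k+1$, so that \textsc{prune} evicts only messages outside your witnessing set. Neither part of this is justified. First, Timeliness Condition~3 bounds the ages of the $k+1$ witnesses only by $\tau(k+1)$, whereas \textsc{prune} with $\|CS\|=k+1$ demands $CSAge \le \tau(k)$, one level tighter; the recursion $\tau(k+1)=\tfrac{1+\rho}{1-\rho}\,\tau(k)+2d(1+\rho)$ does not close this off-by-one. Second, your $k+1$ witnesses live somewhere in the \MessagePool, while \textsc{make-accountable} populates CS from the old CS together with the most-recent UCS entries; nothing forces these to coincide with the witnesses, so $CSAge$ is governed by whatever actually landed in CS, not by your witnessing set. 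Hence the assertion that the cascade in \textsc{prune} cannot drop $\|CS\|$ below $k+1$ is unsupported.

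The paper does not attempt a direct $\tau$-hierarchy argument at $q$. After invoking Lemma~\ref{lem:l8} for the post-\textsc{make-accountable} bound, it appeals to Lemma~\ref{lem:NotInRUCS} (which implicitly uses that $p$ is correct) to conclude that the specific messages $p$ accounted for in $Counter_{M_p}$ cannot have aged out of $q$'s \MessagePool\ by time $t_{\mathrm{send}\,M_q}\le t_{arr}+d(1+\rho)$. That external bookkeeping---tying $q$'s stored messages back to $p$'s CS at send time via Lemmas~\ref{lem:messageAgeTau} and~\ref{lem:NotInRUCS}---is precisely what your purely local argument is missing.
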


\begin{theorem}\label{thm:SUMM_R1} The {\sc summation} procedure satisfies
the Summation Properties.
\end{theorem}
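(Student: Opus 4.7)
The plan is to observe that Theorem~\ref{thm:SUMM_R1} is essentially a repackaging of the two preceding results (Theorem~\ref{thm:timely} and Lemma~\ref{lem:counter_inc}) together with a timing calculation on the \textsc{timeliness} procedure. Property P1 is the statement that a correct node's message is assessed within $d$ real-time, which will follow by combining Theorem~\ref{thm:timely} (which guarantees that the message is eventually declared timely) with the fact that \textsc{timeliness} cannot run for longer than the window specified inside its Condition~3. Property P2 is the statement that after assessment the receiver's Counter strictly exceeds the Counter in the message; this is exactly the content of Lemma~\ref{lem:counter_inc}, provided we identify the moment of assessment with the moment $t_{\mathrm{send}\,M_q}$ appearing there.

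For P1, the plan is as follows. Upon arrival of $M_p$ at local time $t_{\mathrm{arr}}$, \textsc{summation} invokes \textsc{timeliness}, and \textsc{timeliness} returns as soon as its Condition~3 is met at some local time in the interval $[t_{\mathrm{arr}},\; t_{\mathrm{arr}} + d(1+\rho)]$. By Theorem~\ref{thm:timely}, when $M_p$ is sent by a correct node the condition is guaranteed to be met within this window, so the procedure returns ``$M_p$ is timely'' no later than local time $t_{\mathrm{arr}} + d(1+\rho)$. Converting this local interval into real time using the bounded-drift hypothesis yields the required $d$ real-time bound on assessment (the $\rho$-factor is absorbed into the constant $d = \delta + \pi$, which already captures the end-to-end network and processing delay).

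For P2, once \textsc{timeliness} declares $M_p$ timely, \textsc{summation} immediately invokes \textsc{make-accountable}, which moves the $\max[1,\;Counter_{M_p}-Counter_q+1]$ most recent distinct stored messages from UCS into CS and then resets $Counter_q := \|CS\|$. The existence of enough uncounted messages to perform this move is guaranteed by the very condition that caused \textsc{timeliness} to return ``timely'' (Condition~3 required at least $k+1$ eligible stored messages in the $\MessagePool$). Since the local time $t_{\mathrm{send}\,M_q}$ is, by definition, the local time at which $q$ is ready to decide whether to fire as a consequence of processing this arrival, it coincides with the end of the present invocation of \textsc{summation}, and therefore Lemma~\ref{lem:counter_inc} applies and yields $Counter_q > Counter_{M_p}$.

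The main obstacle is not any deep combinatorial argument -- the two cited results do the heavy lifting -- but rather a careful accounting of when exactly ``assessment'' is declared relative to (i) the $d(1+\rho)$ local-time waiting window inside \textsc{timeliness}, and (ii) the subsequent calls to \textsc{make-accountable} and \textsc{prune}. In particular, one must check that \textsc{prune} (which may move entries out of CS) cannot fire between \textsc{make-accountable} completing and $t_{\mathrm{send}\,M_q}$ in a way that would violate the strict inequality; this follows from the order of operations inside \textsc{summation}, where \textsc{prune} is invoked only after \textsc{make-accountable} and the threshold test happens at $t_{\mathrm{send}\,M_q}$ with the CS that results.
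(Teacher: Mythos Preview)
Your proposal is correct and follows essentially the same approach as the paper: the paper's proof is a two-line invocation of Theorem~\ref{thm:timely} (for timeliness/assessment) and Lemma~\ref{lem:counter_inc} (for the strict Counter increase), which is exactly the decomposition you describe. Your additional elaboration on the $d(1+\rho)$ waiting window in \textsc{timeliness} and on the interaction with \textsc{prune} is not in the paper's main-text proof but matches the reasoning the paper defers to the appendix proof of Lemma~\ref{lem:counter_inc}; the one point to be careful about is your remark that ``the $\rho$-factor is absorbed into the constant $d$'', which is not literally a clock-rate conversion but rather relies on the paper's convention that $d=\delta+\pi$ already bounds end-to-end real-time delivery-plus-processing.
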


\begin{proof}Let $p$ denote a correct node that sends $M_p.$
Theorem~\ref{thm:timely} ensures that $M_p$ is assessed as timely at
every correct node. Lemma~\ref{lem:counter_inc} ensures that the
value of $Counter$ will not decrease below $Counter_{M_p} + 1$ until
local-time $t_{{\mathrm send}\,M_p},$ thereby satisfying the
Summation Properties.
\end{proof}

\newpage

\subsection{The event driven ``pulse synchronization''
algorithm} \label{ssec:pulse-synch-alg}

\vspace{-1mm}
Fig.~\ref{alg:pulse-synch-alg} shows the main algorithm.
Fig.~\ref{fig:figure1} illustrates the mode of operation of the main
algorithm.
\vspace{-1mm}
\begin{figure}[h!]
\center \fbox{\begin{minipage}{4.65in} \footnotesize
\begin{alltt}
\setlength{\baselineskip}{4mm}

{\bf {\sc BIO-PULSE-SYNCH}($n, f, \Cycle$)} \hfill /* {\em at node q} */\\
\\$\bullet$ It is assumed that all the
parameters and variables are
verified to be within their range of validity.\\
$\bullet$ \emph{t} is the local-time at the moment of executing the respective statement.\\
\\{\bf if\;} ({\em a new message $M_p$ arrives at time $t_{\mathrm arr}$}) \textbf{then}\\
\tb {\sc summation}($(M_p, t_{\mathrm arr})$);  \\
\tb if ($Counter_q \ge REF(t)$) then\\
\due Broadcast $Counter_q$ to all nodes; \hfill /* {\em invocation of the Pulse} */\\
\due {\sc cycle-reset}();\\

{\bf if} ({\em change in threshold level according to} $REF$) \textbf{then}\\
\tb {\sc prune}($t$);\\
\tb if ($Counter_q \ge REF(t)$) then\\
\due Broadcast $Counter_q$ to all nodes; \hfill /* {\em invocation of the Pulse} */\\
\due {\sc cycle-reset}();\\

\end{alltt}
\normalsize
\end{minipage} }
\caption{The event driven {\sc bio-pulse-synch} algorithm }
\label{alg:pulse-synch-alg}
\end{figure}

\begin{figure}[h!]
\begin{center}
\includegraphics[scale=0.57,clip=true]{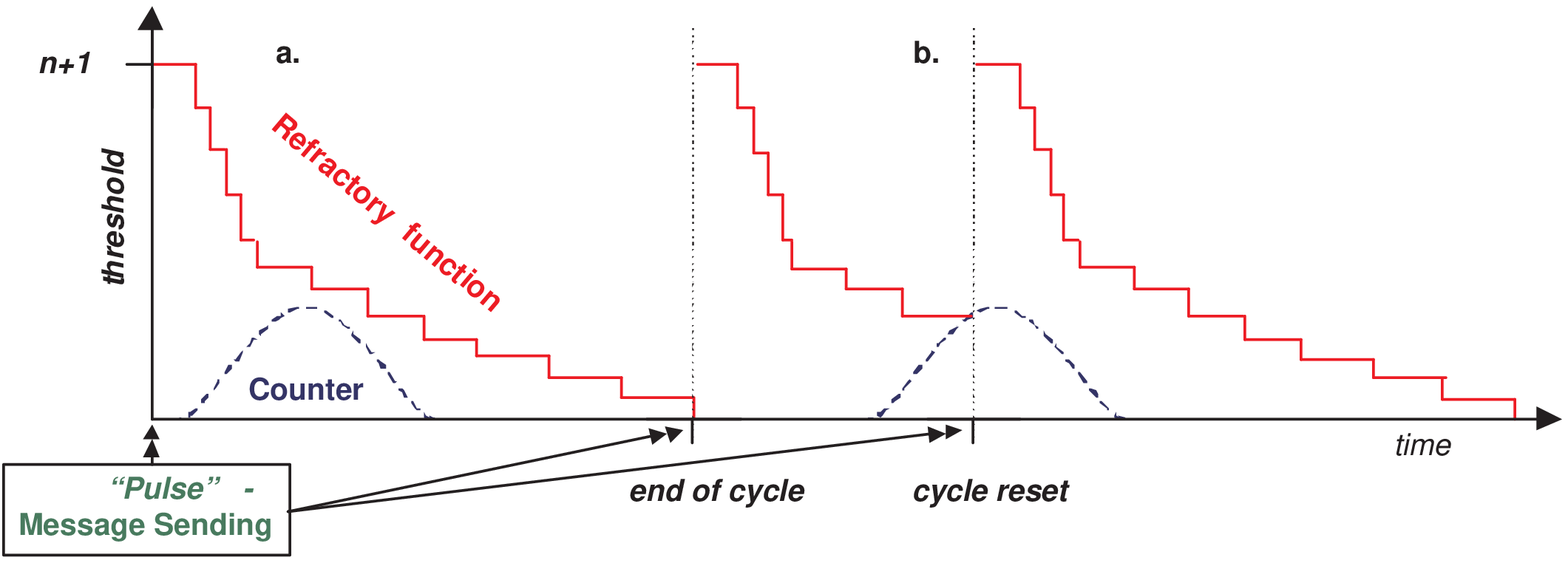}
\vspace{-1mm} \caption{\footnotesize Schematic example of the mode
of operation of {\sc bio-pulse-synch}: \textbf{(a.)} The node's
Counter (the summed messages) does not cross the threshold during
the cycle, letting the refractory function reach zero and
consequently the node fires endogenously. \textbf{(b.)} Sufficient
messages from other nodes are received in time window for the
Counter to surpass the current threshold, consequently the node
fires early and resets its cycle.} \label{fig:figure1} \vspace{-1mm}
\end{center}
\end{figure}

\newpage
\subsection{A Reliable-Broadcast Primitive}
\ignore{-2mm}\label{sec:IAGree}

In the current subsection we show that the {\sc BIO-PULSE-SYNCH}
algorithm can also operate in networks in which
Byzantine nodes may exhibit true two-faced behavior. This is done by
executing in the background a self-stabilizing Byzantine
reliable-broadcast-like primitive, which assumes no synchronicity
whatsoever among the nodes. It has the property of relaying any
message received by a correct node. Hence, this primitive satisfies
the broadcast assumption of Definition~\ref{def:net_nonfaulty} by
supplying a property similar to the relay property of the
reliable-broadcast primitive in \cite{FastAgree87}. That latter
primitive assumes a synchronous initialization and can thus not be
used as a building block for a self-stabilizing algorithm.

In \cite{DDSSBA-PODC06} we presented the \initiator primitive. We
say that a node does an \textbf{\ginit} of a message $m$ sent by
some node $p$ (denoted $\la p,m\ra$) if it accepts that this message
was sent by node $p.$

The \initiator primitive essentially satisfies the following two properties (rephrased for our purposes):\\

\begin{enumerate}

\item[\small\bf IA-1A] ({\em Correctness}) If all correct nodes
invoke \initiator$\la p,m\ra$ within $d$ real-time of each other
then all correct nodes \ginit $\la p,m\ra$ within $2d$ real-time
units of the time the last correct node invokes the primitive
\initiator$\la p,m\ra.$

\item[\small\bf IA-3A] ({\em Relay}) If a correct node $q$ \ginit s  $\la p,m\ra$ at real-time $t$,
 then every correct node $q^\prime$ \ginit s  $\la p,m\ra,$ at some real-time $t^\prime,$ with $|t-t^\prime|\le2 d.$

\end{enumerate}

The \initiator primitive requires a correct node not to send two
successive messages within less than $6d$ real-time of each other.
Following the {\sc BIO-PULSE-SYNCH} algorithm (see Timeliness
Condition 2, in the {\sc timeliness} procedure), non-faulty nodes
cannot fire more than once in every $2d(1+\rho)\cdot n > 6d$
real-time interval even if the system is not coherent, which thus
satisfies this requirement.

The use of the \initiator primitive in our algorithm is by executing
it in the background. When a correct node wishes to send a message
it does so through the primitive, which has certain conditions for
$\ginit$ing a message. Nodes may also \ginit\ messages that where
not sent or received through the primitive, if the conditions are
satisfied. In our algorithm nodes will deliver messages only after
they have been $\ginit$ed (also for the node's own message). From
[IA-1A] we get that all messages from correct nodes are delivered
within $3d$ real-time units subsequent to sending. From [IA-3A] we
have that all messages are delivered within $2d$ real-time units of
each other at all correct nodes, even if the sender is faulty. Thus,
we get that the new network delay $\tilde{d}=3d.$ Hence, the cost of
using the \initiator primitive is an added $2d$ real-time units to
the achieved pulse synchronization tightness which hence becomes
$\sigma=\tilde{d}=3d.$

\section{Proof of Correctness of {\sc BIO-PULSE-SYNCH}} \ignore{-2mm}\label{sec:s4}
In this section we prove Closure and Convergence of the {\sc
BIO-PULSE-SYNCH} algorithm. In the first subsection,
\ref{ssec:notations}, we present additional notations that
facilitate the proofs. In the second subsection, \ref{ssec:closure},
we prove Closure and in the third, \ref{ssec:convergence}, we prove
Convergence.

The proof that {\sc bio-pulse-synch} satisfies the pulse
synchronization problem follows the steps below:

Subsection~\ref{ssec:notations} introduces some notations and
procedures that are for proof purposes only. One such procedure
partitions the correct nodes into disjoint sets of synchronized
nodes (``synchronized clusters'').

In Subsection~\ref{ssec:closure} (Lemma~\ref{lem:l1}), we prove that
``synchronized clusters'' once formed stay as synchronized sets of
nodes, this implies that once the system is in a
synchronized\_pulse\_state it remains as such ({\em Closure}).

In Subsection~\ref{ssec:convergence}
(Theorem~\ref{thm:convergence}), we prove that within a finite
number of cycles, the synchronized clusters repeatedly absorb to
form ever larger synchronized sets of nodes, until a
synchronized\_pulse\_state of the system is reached ({\em
Convergence}).

Note that the the synchronization tightness, $\sigma,$ of our
algorithm, equals $d.$

It may ease following the proofs by thinking of the algorithm in the
terms of non-liner dynamics, though this is not necessary for the
understanding of any part of the protocol or its proofs. We show
that the state space can be divided into a small number of stable
fixed points (``synchronized sets'') such that the state of each
individual node is attracted to one of the stable fixed points. We
show that there are always at least two of these fixed points that
are situated in the basins of attraction (``absorbance distance'')
of each other. Following the dynamics of these attractors, we show
that eventually the states of all nodes settle in a limit cycle in
the basin of one attractor.

\subsection{Notations, procedures and properties used in the proofs}\label{ssec:notations}

\vspace{2mm}\noindent{\bf First node} in a synchronized set of nodes
$S,$ is a node of the subset of nodes that ``fire first'' in $S$
that satisfies: \ignore{-2mm}
\[ \small\mbox{``First node
in $S$''} = \left\{ \begin{array}{ll} \min \{ i | i\in \max \{
\phi_i(t) | \mbox{node } i\in  S, \phi_i(t) \le  \sigma \}\}
&\exists i\in
S \,\mbox{s.t.}\, \phi_i(t) \le \sigma\\

\min\{ i | i\in \max\{ \phi_i(t) | \mbox{node }i\in S \}\}  &
\mbox{\;otherwise.}
\end{array}
\right.
\] \ignore{-2mm}

Equivalently, we define \textbf{last node}: \ignore{-2mm} \[\small
\mbox{``Last node in $S$''} = \left\{\begin{array}{ll}

\max\{ i | i\in \min\{ \phi_i(t) | \mbox{node }i\in  S,
\phi_i(t) >\sigma \}\} &\exists i\in  S \,\mbox{s.t.}\, \phi_i(t) >\sigma \\

\max\{ i | i\in \min\{ \phi_i(t) | \mbox{node }i\in S \}\} &
\mbox{\;otherwise}.
\end{array}
\right.
\] \ignore{-2mm}

The second cases in both definitions serve to identify the First and
Last nodes in case $t$ falls in-between the fire of the nodes of the
set.

\vspace{+2mm} \noindent{\bf Synchronized Clusters}\\
At a given time $t$ the nodes are divided into disjoint {\bf
synchronized clusters} in the following way:
\begin{enumerate}
\ignore{-2mm} \item Assign the maximal synchronized set of nodes at
time $t$ as a synchronized cluster. In case there are several
maximal sets choose the set that is harboring the first node of the
unified set of all these maximal sets.

\ignore{-3mm} \item Assign the second maximal synchronized set of
nodes that are not part of the first synchronized cluster as a
synchronized cluster.

\ignore{-3mm} \item Continue until all nodes are exclusively
assigned to a synchronized cluster.
\end{enumerate}

The synchronized cluster harboring the node with the largest
(necessarily finite) $\phi$ among all the nodes is designated $C_1.$
The rest of the synchronized clusters are enumerated inversely to
the $\phi$ of their first node, thus if there are $m$ synchronized
clusters then $C_m$ is the synchronized cluster whose first node has
the lowest $\phi$ (besides perhaps $C_1$). Note that at most one
synchronized cluster may have nodes whose  actual $\phi$ differences
is larger than $\sigma,$ as it can contain nodes that have just
fired and nodes just about to fire. The definition of $C_1$ implies
that at the time the nodes are partitioned into synchronized
clusters (time $t$ above) it may be the only synchronized cluster in
such a state.

The clustering is done only for illustrative purposes of the proof.
It does not actually affect the protocol or the behavior of the
nodes. In the proof we ``assign'' the nodes to synchronized clusters
at some time $t.$ From that time on we consider the synchronized
clusters as a constant partitioning of the nodes into disjoint
synchronized sets of nodes and we follow the dynamics of these sets.
Thus, once a node is exclusively assigned to some synchronized
cluster it will stay a member of that synchronized cluster. We aim
at showing that eventually all synchronized clusters become one
synchronized set of nodes. Once such a clustering is fixated we
ignore nodes that happen to fail and forthcoming recovering nodes.
Our proof is based on the observation that eventually we reach a
time window within which the permanent number of non-correct nodes
at every time is bounded by $f$ and during that window the whole
system converges.

\begin{observation}\label{cor:synch_clu} The synchronized clustering
procedure assigns every correct node to exactly one synchronized set
of nodes.
\end{observation}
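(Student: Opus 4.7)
The plan is to verify the two defining properties of a partition, mutual exclusion and total coverage, directly from the construction of the clustering procedure. Mutual exclusion is immediate from the wording of steps 2 and 3: each iteration explicitly restricts attention to nodes that are not already members of a previously formed cluster, so no node can be assigned twice. Termination follows at once from the finiteness of the node set, since each iteration strips at least one node from the remaining pool.

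The only substantive point is total coverage, i.e.\ showing that the iteration never stalls while unassigned correct nodes remain. The plan is to establish the following sub-claim: a singleton $\{p\}$ consisting of a single correct node $p$ always satisfies the definition of a synchronized set at the moment the partitioning is performed. Inspecting the definition, the universal quantifier over $p_i,p_j\in S$ collapses to the trivial case $p_i=p_j=p$, giving $|\phi_p(t)-\phi_p(t)|=0\le\sigma$ automatically, so the only nontrivial obligation is $\phi_p(t)\le\cyclemax$. Because the observation is invoked inside the convergence argument, when the system has already been coherent long enough for every correct node to have completed at least one full cycle, the refractory function $REF$ forces each correct node to fire endogenously within $\Cycle$ local-time units of its previous fire, so its $\phi_p$ is bounded above by $\cyclemax$ in real time. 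Hence any still-unassigned correct node can, in the worst case, be peeled off as its own singleton synchronized cluster, and the iteration does not stall before every correct node has been placed.

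Combining mutual exclusion with this always-available extraction step yields the observation. The main (and only) subtlety I anticipate is the $\phi_p\le\cyclemax$ clause in the definition of a synchronized set; this is discharged by appealing to the endogenous-fire guarantee built into $REF$, and no deeper machinery is needed.
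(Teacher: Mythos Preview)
Your argument is correct and in fact supplies more detail than the paper does: in the paper this statement is recorded as an \emph{observation} with no accompanying proof, so there is no ``paper's approach'' to compare against beyond the implicit claim that it is immediate from the clustering procedure.

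The one point you flag as substantive --- that a singleton $\{p\}$ is a synchronized set, which reduces to checking $\phi_p(t)\le\cyclemax$ --- is handled correctly. Your appeal to the endogenous-fire guarantee is sound, and you could even shorten it slightly by noting that the very definition of a \emph{correct} node requires $\cyclemax+\sigma+\mbox{\slshape message\_decay}$ of continuous non-faulty behavior, so any correct node has already fired at least once and hence has finite $\phi_p\le\cyclemax$. The remaining steps (mutual exclusion from the wording of the procedure, termination by finiteness) are exactly the kind of routine check the paper evidently regarded as not worth writing out.
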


\begin{observation}\label{cor:synch_clu2} Immediately following the synchronized clustering
procedure no two distinct synchronized clusters comprise one
synchronized set of nodes.
\end{observation}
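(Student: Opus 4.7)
The plan is a direct argument by contradiction exploiting the greedy maximality built into the synchronized clustering procedure. Suppose for contradiction that there exist two distinct synchronized clusters $C_i \ne C_j$ whose union $C_i \cup C_j$ is itself a synchronized set of nodes. I would first pin down the construction step at which each of them was produced by the procedure: let $k_i, k_j$ denote the respective step indices, and without loss of generality assume $k_i < k_j$ (strict, since each step of the procedure produces exactly one cluster, so the step indices of distinct clusters are distinct, independent of how the clusters are later relabeled by $\phi$ value into $C_1,\ldots,C_m$).

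Next I would observe that, by the procedure, the set of nodes assigned to any cluster constructed at step $k_i$ is chosen to be a \emph{maximal} synchronized subset of the nodes that have not yet been assigned to $C_1,\ldots,C_{k_i-1}$. Because $k_j > k_i$, every node that eventually lands in $C_j$ is still unassigned at the beginning of step $k_i$, so all of $C_i \cup C_j$ is a candidate subset of the nodes available to that step.

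At this point the contradiction is immediate: the assumption that $C_i \cup C_j$ is a synchronized set, together with $C_i \subsetneq C_i \cup C_j$ (using $C_j \ne \emptyset$ and distinctness), exhibits a strictly larger synchronized subset among the nodes available at step $k_i$ than $C_i$ itself. This violates the maximality property that $C_i$ was chosen to satisfy at step $k_i$. Hence no two distinct clusters $C_i, C_j$ can jointly form a synchronized set.

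I do not anticipate any real obstacle here; the statement is essentially a restatement of the greedy maximality in the definition of the clustering procedure, and the only minor care needed is to note that the tie-breaking rule ("harboring the first node of the unified set'') only chooses between equally maximal candidates at a single step and therefore does not weaken maximality at that step. In particular, the argument does not rely on the relabeling by $\phi$ into $C_1, \ldots, C_m$, only on the step order of the greedy procedure.
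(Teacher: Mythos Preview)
Your argument is correct. The paper states this as an Observation without proof, treating it as immediate from the greedy construction; your write-up simply makes explicit the one-line maximality contradiction the authors evidently had in mind.
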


We use the following definitions and notations:\\

\noindent$\bullet\ C_i\ -$   synchronized cluster number $i.$ \\
$\bullet\ n_i\ -$   cardinality of $C_i$ (i.e. number of correct
nodes associated with
synchronized cluster $C_i$).\\
$\bullet\ c\ -$ current number of synchronized clusters in the current state; $c\ge 1.$\\
$\bullet\ dist(a, b, t) \equiv |\phi_a(t) - \phi_b(t)|$ is the {\bf
distance} ($\phi$ difference) between nodes $a$ and $b$ at real-time $t.$\\
$\bullet\ \phi_{c_i}(t)\ -$ is the $ \phi(t)$ of the first node in
synchronized cluster $C_i.$ \\
 $\bullet\ dist(C_i, C_j, t) \equiv dist(\phi_{c_i}(t),\;\phi_{c_j}(t),\;
t)$ at real-time $t.$\\

If at real-time $t$ there exists no other synchronized cluster
$C_r,$ such that $\phi_{c_i}(t) \ge \phi_{c_r}(t) \ge
\phi_{c_j}(t),$ then we say that the synchronized clusters $C_i$ and
$C_j$ are {\bf adjacent} at real-time $t.$

\ignore{
\begin{corollary}\label{cor:c0} If we let ($C_i, C_{(i+1)\!\!\!\pmod{ c}}$)
denote adjacent synchronized clusters, then \ignore{-2mm}
\begin{equation} \sum_{i=1}^c
dist(C_i,C_{(i+1)\!\!\!\!\!\!\pmod{c}})\le (1+\rho)\,
\Cycle\enspace.\label{eq:e5}
\end{equation}\dd{it is not a corollary, maybe to write "it can be shown that"}
\end{corollary}
}

We say that two synchronized clusters, $C_i$ and $C_j,$ have {\bf
absorbed} if their union comprises a synchronized set of nodes. If a
node in $C_j$ fires due to a message received from a node in $C_i,$
then, as will be shown in Lemma~\ref{lem:abs_lem}, the inevitable
result is that their two synchronized clusters absorb. The course of
action from the arrival of the message at a node in $C_j$ until
$C_j$ has absorbed with $C_i$ is referred to as the {\bf absorbance}
of $C_j$ by $C_i.$

We refer throughout the paper to the {\bf fire of a synchronized
cluster} instead of referring to the sum of the fires of the
individual nodes in the synchronized cluster. In Lemma~\ref{lem:l3}
we prove that these two notations are equivalent.\\

In Theorem~\ref{thm:abs_thm} we show that we can explicitly
determine a threshold value, $\bf{ad(C_i),}$ that has the property
that if for two synchronized clusters $C_i$ and $C_j,$ $dist(C_i,
C_j, t) \le ad(C_i)$ then $C_i$ {\bf absorbs} $C_j.$ We will call
that value the ``\textbf{absorbance distance}'' of $C_i.$

\begin{definition}\label{def:ad} The absorbance distance, $ad(C_i),$ of a synchronized cluster $C_i,$ is
$$ad(C_i) \equiv
\sum_{g=f+1}^{f+n_i}\!\!\!R_g$$ real-time units.
\end{definition}

\noindent \textbf{Properties used for the proofs
}\\

 We identify and prove several properties; one property of the
{\sc summation} procedure (Property 1) and several properties of
$REF$ (Properties 2-7). These are later used to prove the
correctness of the algorithm.

\vspace{2mm}{\bf Property 1:} See the Summation Properties in
Subsection~\ref{ssec:summation-sect}.

{\bf Property 2:} $R_i$ is a monotonic decreasing function of $
i,$ $R_i \ge  R_{i+1},$ for $ i=1 \ldots n-1.  $\\

{\bf  Property 3:} $R_i > 3d +
\frac{2\rho}{1-\rho^2}\sum_{j=1}^{n+1}R_j,$ for $i=1 \ldots n-f-1.$\\

{\bf Property 4:} $R_i > \sigma(1-\rho) +
\frac{2\rho}{1+\rho}\sum_{j=1}^{n+1}R_j,$ for $ i=1 \ldots n.$\\

{\bf Property 5:} $R_{n+1} \ge 2d(1+\rho)
\frac{(\frac{1+\rho}{1-\rho})^{n+3}-1}{(\frac{1+\rho}{1-\rho})-1}\;.$\\

{\bf Property 6:} $R_1+\cdots+R_{n+1} = \Cycle.$\\

Consider any clustering of $n-f$ correct nodes into $c > 1$
synchronized clusters, in which $j^\prime$ denotes the largest
synchronized. Thus $n_{j^\prime}$ is the number of nodes in the
largest synchronized cluster and is less or equal to $n-f-1.$ The
number of nodes in the second largest cluster is less or equal to $\lfloor \frac{(n-f)}{2}\rfloor.$\\

{\bf Property 7:} \begin{equation} \sum_{j=1,j\ne
j^\prime}^{c}\sum_{g=f+1}^{f+n_j}\!\!\!R_g+\sum_{g=1}^{n_{j^\prime}}\!\!R_g
\ge
 \frac{1}{1-\rho}\Cycle \enspace , \mbox{where }\sum_{j=1}^{c}n_j=n-f\enspace . \label{eq:prop7}
\end{equation}

We require the following restriction on the relationship between
$\Cycle,\;d,\;n$ and
$f$ in order to prove that Properties~3-4 hold:\\

{\bf Restriction 1:} \begin{equation}
\Cycle>d\cdot\frac{(1-\rho^2)[(1-\rho)
(f+1)+2(1+\rho)\cdot\frac{(\frac{1+\rho}{1-\rho})^{n+3}-1}{(\frac{1+\rho}{1-\rho})-1}]}{\frac{1-\rho}{n-f}-3\rho+\rho^2}\enspace
.\label{eq:f-vs-cycle}\end{equation}\\

We now prove that Properties 2-7 are properties of $REF$:

\begin{lemma}\label{lem:REF_props} Properties 2-5 are properties of $REF$ under Restriction 1.
\end{lemma}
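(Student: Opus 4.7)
The plan is to verify Properties 2--5 in sequence. Property 5 is immediate: $R_{n+1}$ is defined in Eq.~\ref{eq:REF} to equal the right-hand side of Property 5 exactly, so the inequality holds with equality.

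For Property 2, within each of the two non-degenerate pieces of the definition (namely $i=1,\ldots,n-f-1$ and $i=n-f,\ldots,n$) the value $R_i$ is constant in $i$. Hence monotonicity over $i=1,\ldots,n-1$ reduces to a single boundary inequality $R_{n-f-1}\ge R_{n-f}$. Substituting the definitions and clearing the denominator $f+1$, this collapses to $f\cdot R_1 + R_{n+1} + \tfrac{\rho}{1-\rho}\Cycle \ge 0$, which is trivially true since all terms are nonnegative.

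For Properties 3 and 4, I would first establish the auxiliary identity $\sum_{j=1}^{n+1} R_j = \Cycle$ by direct summation from Eq.~\ref{eq:REF}: the middle block collapses to $(f+1)\cdot R_{n-f} = R_1 - R_{n+1} - \tfrac{\rho}{1-\rho}\Cycle$, giving
\[
\sum_{j=1}^{n+1} R_j \;=\; (n-f-1)R_1 + \bigl(R_1 - R_{n+1} - \tfrac{\rho}{1-\rho}\Cycle\bigr) + R_{n+1} \;=\; (n-f)R_1 - \tfrac{\rho}{1-\rho}\Cycle,
\]
and substituting $R_1 = \tfrac{\Cycle/(1-\rho)}{n-f}$ yields exactly $\Cycle$. (This incidentally proves Property 6, treated separately.) Using this, Property 3 becomes the scalar inequality $R_1 > 3d + \tfrac{2\rho}{1-\rho^2}\Cycle$, since $R_i = R_1$ is constant on the relevant range; rearranging isolates a lower bound of the form $\Cycle > d\cdot(\ldots)$ in $n,f,\rho$.

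For Property 4, Property 2 reduces the verification to the tightest index in each piece. On $i=1,\ldots,n-f-1$ the inequality is weaker than Property 3 (since $3d > \sigma(1-\rho) = d(1-\rho)$ and $\tfrac{2\rho}{1-\rho^2} > \tfrac{2\rho}{1+\rho}$ for $0<\rho<1$), so it follows from Property 3. On $i=n-f,\ldots,n$ the tightest case is $i=n$, with $R_n = \tfrac{R_1 - R_{n+1} - \tfrac{\rho}{1-\rho}\Cycle}{f+1}$; multiplying through by $f+1$ and substituting $R_1$ yields a linear inequality in $\Cycle$, $d$, and $R_{n+1}$. The last step is to check that Restriction~1 dominates both resulting scalar lower bounds on $\Cycle$: the numerator term $(1-\rho^2)(1-\rho)(f+1)$ of Eq.~\ref{eq:f-vs-cycle} traces to the $\sigma(1-\rho)(f+1)$ contribution from Property 4 at $i=n$, while the second numerator term matches $(1-\rho^2)R_{n+1}/d$ via the closed-form expression for $R_{n+1}$. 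The main obstacle is the careful bookkeeping of factors of $(1\pm\rho)$ needed to align the rearranged inequalities with the specific algebraic form of Eq.~\ref{eq:f-vs-cycle} and to confirm that the denominator $\tfrac{1-\rho}{n-f} - 3\rho + \rho^2$ appearing in Restriction~1 is a valid (and positive) majorant of the denominators emerging from Properties 3 and 4.
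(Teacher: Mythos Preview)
Your proposal is correct and follows essentially the same route as the paper: Properties 2 and 5 by inspection of Eq.~\ref{eq:REF}, and Properties 3--4 by substituting $\sum_j R_j=\Cycle$ and reducing to a scalar lower bound on $\Cycle$ that is then matched against Restriction~1.

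One organizational difference is worth noting. The paper collapses Properties 3 and 4 into a single binding constraint --- Property~4 at indices $n-f\le j\le n$ --- asserting that ``for $\sigma=d$, Property~4 is more restrictive than Property~3'' and then carrying out only the Property~4 algebra to arrive at Restriction~1. Read literally that assertion is backwards (as you correctly observe, Property~3 has the larger right-hand side on the common index range), though the intended conclusion --- that once Restriction~1 forces Property~4 at the smallest $R_j$, Property~3 at the larger $R_i$ follows as well --- is true, essentially because $R_{n+1}$ alone already exceeds $3d$. Your decomposition into two separate scalar bounds (one from Property~3 on $R_1$, one from Property~4 at $i=n$) and then checking that Restriction~1 dominates both is a more transparent way to organize the same computation; the paper's derivation is exactly the algebraic rearrangement you outline for the Property~4 case.
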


\begin{proof}
The proof for Properties 2 and 5 follows immediately from the
definition of $REF$ in Eq.~\ref{eq:REF}.

Note that $R_i > R_j,$ for $1 \le i \le n-f-1$ and $n-f \le j \le
n.$ Moreover, for $\sigma=d,$ Property~4 is more restrictive than
Property~3. Hence, for showing that Properties~3 and 4 are
properties of $REF$ it is sufficient to show that $R_j$ (where $n-f
\le j \le n$) satisfies Property~4:

\begin{eqnarray}
\lefteqn{R_j=\frac{R_1-R_{n+1}-\frac{\rho}{1-\rho}Cycle}{f+1}>\sigma(1-\rho)
+
\frac{2\rho}{1+\rho}\sum_{j=1}^{n+1}R_j \Rightarrow}\nonumber\\
&&
\frac{\frac{1}{1-\rho}\Cycle}{n-f}-2d(1+\rho)\cdot\frac{(\frac{1+\rho}{1-\rho})^{n+3}-1}{(\frac{1+\rho}{1-\rho})-1}-\frac{\rho}{1-\rho}Cycle>[d(1-\rho)
+
\frac{2\rho}{1+\rho}\Cycle](f+1) \Rightarrow\nonumber\\
&& \frac{1}{1-\rho}\Cycle-\frac{\rho}{1-\rho}(n-f)\Cycle-
\frac{2\rho}{1+\rho}(n-f)\Cycle \nonumber\\&&\tre\due>[d(1-\rho)
(f+1)+2d(1+\rho)\cdot\frac{(\frac{1+\rho}{1-\rho})^{n+3}-1}{(\frac{1+\rho}{1-\rho})-1}](n-f) \Rightarrow\nonumber\\
&& [\frac{1-\rho(n-f)}{1-\rho}-
\frac{2\rho}{1+\rho}(n-f)]\Cycle\nonumber\\&&\tre\due>d[(1-\rho)
(f+1)+2(1+\rho)\cdot\frac{(\frac{1+\rho}{1-\rho})^{n+3}-1}{(\frac{1+\rho}{1-\rho})-1}](n-f) \Rightarrow\nonumber\\
&&
[\frac{(\frac{1}{n-f}-\rho)(1+\rho)-2\rho(1-\rho)}{1-\rho^2}]\Cycle>d[(1-\rho)
(f+1)+2(1+\rho)\cdot\frac{(\frac{1+\rho}{1-\rho})^{n+3}-1}{(\frac{1+\rho}{1-\rho})-1}] \Rightarrow\nonumber\\
&& \frac{\frac{1-\rho}{n-f}-3\rho+\rho^2}{1-\rho^2}\Cycle>d[(1-\rho)
(f+1)+2(1+\rho)\cdot\frac{(\frac{1+\rho}{1-\rho})^{n+3}-1}{(\frac{1+\rho}{1-\rho})-1}] \Rightarrow\nonumber\\
&& \Cycle>d\cdot\frac{(1-\rho^2)[(1-\rho)
(f+1)+2(1+\rho)\cdot\frac{(\frac{1+\rho}{1-\rho})^{n+3}-1}{(\frac{1+\rho}{1-\rho})-1}]}{\frac{1-\rho}{n-f}-3\rho+\rho^2}
\enspace . \label{eq:props2-5}
\end{eqnarray}

This inequality is exactly satisfied by Restriction 1 and thus
Eq.~\ref{eq:REF} satisfies Properties~3 and 4.

Note that for $\rho=0,$ the inequality becomes
$\Cycle>d\cdot(f+1)(n-f).$
\end{proof}

\begin{lemma}\label{lem:REF_prop_6} Property 6 is a property of $REF.$
\end{lemma}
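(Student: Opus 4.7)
The plan is to verify the identity by directly summing the three pieces of $REF$ from Eq.~\ref{eq:REF}. I split the sum as
\[
\sum_{i=1}^{n+1} R_i \;=\; \sum_{i=1}^{n-f-1} R_i \;+\; \sum_{i=n-f}^{n} R_i \;+\; R_{n+1},
\]
and evaluate each piece in turn using the closed-form definitions.

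For the first piece, each of the $n-f-1$ terms equals $R_1 = \frac{\Cycle}{(1-\rho)(n-f)}$, so the contribution is $(n-f-1) R_1$. For the second piece, all $f+1$ terms are equal and the $f+1$ in the denominator cancels the count, giving precisely $R_1 - R_{n+1} - \frac{\rho}{1-\rho}\Cycle$. Adding the third piece $R_{n+1}$ cancels the $-R_{n+1}$ term from the second piece, so the total reduces to
\[
(n-f) R_1 \;-\; \frac{\rho}{1-\rho}\Cycle.
\]

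Substituting $R_1 = \frac{\Cycle}{(1-\rho)(n-f)}$ then yields $\frac{\Cycle}{1-\rho} - \frac{\rho}{1-\rho}\Cycle = \frac{(1-\rho)\Cycle}{1-\rho} = \Cycle$, which is exactly Property~6. There is no substantive obstacle here; the definition of the middle segment of $REF$ was manifestly engineered so that the telescoping works out, with the $-R_{n+1} - \frac{\rho}{1-\rho}\Cycle$ correction built in precisely to absorb the contribution of $R_{n+1}$ and the skew factor $\frac{1}{1-\rho}$ appearing in $R_1$. No appeal to Restriction~1 or to $\rho$-bounds is needed, since the identity holds as a pure algebraic consequence of the definitions.
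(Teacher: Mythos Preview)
Your proof is correct and follows essentially the same approach as the paper: both split the sum into the three ranges $i=1,\ldots,n-f-1$, $i=n-f,\ldots,n$, and $i=n+1$, use that the $f+1$ equal middle terms collapse to $R_1-R_{n+1}-\frac{\rho}{1-\rho}\Cycle$, cancel $R_{n+1}$, and then substitute the closed form of $R_1$ to obtain $\Cycle$. Your observation that Restriction~1 is not needed here is also in line with the paper, which proves this lemma separately from the one invoking that restriction.
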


\begin{proof}\begin{eqnarray*} \lefteqn{R_1+\cdots+R_{n+1} =
(R_1+\cdots+R_{n-f-1}) + (R_{n-f}+\cdots+R_n)
+ R_{n+1} }\\
& &=(n-f-1)\cdot \frac{\frac{1}{1-\rho}\Cycle}{n-f}+(f+1)\cdot \frac{R_1-R_{n+1}-\frac{\rho}{1-\rho}\Cycle}{f+1}+R_{n+1}\\
& & =
\frac{1}{1-\rho}\Cycle-\frac{\frac{1}{1-\rho}\Cycle}{n-f}+R_1-R_{n+1}-\frac{\rho}{1-\rho}\Cycle+R_{n+1}=\Cycle\enspace.
\end{eqnarray*}
\end{proof}

\begin{lemma}\label{lem:REF_prop_7} Property 7 is a property of $REF$.
\end{lemma}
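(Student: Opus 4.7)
The plan is to show that Property~7 in fact holds with \emph{equality}, by proving that every refractory step $R_g$ appearing on the left-hand side of~(\ref{eq:prop7}) lies in the ``first regime'' of the step function~(\ref{eq:REF}), where $R_g$ takes the common value $A\equiv\frac{\Cycle}{(1-\rho)(n-f)}$. Once this is established, simply counting the total number of terms (which is $\sum_{j} n_j = n-f$) yields $(n-f)A=\frac{1}{1-\rho}\Cycle$.

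For the ``$j'$-sum'' $\sum_{g=1}^{n_{j'}} R_g$ this is immediate: by hypothesis $n_{j'}\le n-f-1$, so every index $g\in\{1,\ldots,n_{j'}\}$ sits in the first regime and contributes $A$. For each cluster $j\ne j'$ the inner sum runs over $\{f+1,\ldots,f+n_j\}$, so the only nontrivial point is to verify that $f+n_j\le n-f-1$, i.e.\ $n_j\le n-2f-1$. The key input here is that $j'$ is the \emph{largest} cluster; combined with disjointness and $\sum_j n_j = n-f$, one gets $n_j+n_{j'}\le n-f$ together with $n_j\le n_{j'}$, hence $2n_j\le n-f$ and $n_j\le\lfloor(n-f)/2\rfloor$. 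A short case analysis on the standing Byzantine assumption $n>3f$ then closes the gap: for $n\ge 3f+2$, the inequality $(n-f)/2\le n-2f-1$ is immediate arithmetic, and for $n=3f+1$ both $\lfloor(n-f)/2\rfloor$ and $n-2f-1$ collapse to the common value $f$. Handling this boundary case is the only mildly subtle step; everything else is book-keeping.

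Putting these pieces together, every $R_g$ in~(\ref{eq:prop7}) equals $A$, the total number of terms is exactly $n-f$, and therefore
\begin{equation*}
\sum_{j=1,\,j\ne j'}^{c}\sum_{g=f+1}^{f+n_j}\!\!R_g \;+\; \sum_{g=1}^{n_{j'}}\!\!R_g \;=\;\Bigl(\sum_{j=1}^{c} n_j\Bigr)A\;=\;(n-f)\cdot\frac{\Cycle}{(1-\rho)(n-f)}\;=\;\frac{1}{1-\rho}\Cycle,
\end{equation*}
which establishes the stated inequality (in fact with equality). No further appeal to Restriction~1 or to Properties~2--6 is needed beyond the explicit form of $R_i$ on the first regime; all the work is in showing that the constraint $n>3f$ is exactly what is required to keep the indices $\{f+1,\ldots,f+n_j\}$ for $j\ne j'$ inside that first regime.
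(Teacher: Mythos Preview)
Your argument is correct and follows essentially the same route as the paper's own proof: both show that every $R_g$ appearing in (\ref{eq:prop7}) lies in the first regime $1\le g\le n-f-1$ of (\ref{eq:REF}), using $n_{j'}\le n-f-1$ for the $j'$-sum and $n_j\le\lfloor(n-f)/2\rfloor$ together with $n\ge 3f+1$ for the remaining clusters, and then count $n-f$ equal terms to obtain equality. Your explicit case split at $n=3f+1$ versus $n\ge 3f+2$ makes the boundary verification slightly more transparent than the paper's single-line assertion $\max\bigl[f+\lfloor(n-f)/2\rfloor,\;n-f-1\bigr]=n-f-1$, but the content is the same.
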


\begin{proof} We will prove that the constraint in Eq.~\ref{eq:prop7}
 is always satisfied by the refractory function in Eq.~\ref{eq:REF}.

Note that Eq.~\ref{eq:prop7} is a linear equation of the $R_i$
values of $REF$. We denoted $n_{j^\prime}$ to be the number of nodes
in the largest synchronized cluster, following some partitioning of
the correct nodes into synchronized clusters. We want to find what
is the largest value of $i$ such that $R_i$ is a value with a
non-zero coefficient in the linear equation Eq.~\ref{eq:prop7}. This
value is determined by either the largest possible cluster, which
may be of size $n-f-1$ (in case all but one of the correct nodes are
in one synchronized cluster\footnote{The case in which the $n-f$
correct nodes are in one synchronized cluster implies the objective
has been reached.}), or by the second-largest possible cluster,
which may be of size $\lfloor \frac{(n-f)}{2}\rfloor$ (in case all
correct nodes are in two possibly equally sized synchronized
clusters). Thus the largest value of $i$ such that $R_i$ is a value
with a non-zero coefficient equals max[$f+\lfloor
\frac{(n-f)}{2}\rfloor,\; n-f-1]=n-f-1,$ for $n \ge 3f+1.$

Thus, following Eq.~\ref{eq:REF}, each of these $R_i$ values equals
$\frac{\frac{1}{1-\rho}\Cycle}{n-f}.$ There are exactly $n-f$ (not
necessarily different) $R_i$ values in Eq.~\ref{eq:prop7}. Hence,
incorporating Eq.~\ref{eq:REF} into Eq.~\ref{eq:prop7} reduces
Eq.~\ref{eq:prop7}
 to the linear equation:
$(n-f)\cdot R_i \ge \frac{1}{1-\rho}\Cycle,$ where $1\le i\le
n-f-1.$ It remains to show that Eq.~\ref{eq:REF} satisfies this
constraint:
$$(n-f)\cdot R_i = (n-f)\cdot \frac{\frac{1}{1-\rho}\Cycle}{n-f} = \frac{1}{1-\rho}\Cycle.$$

\end{proof}

\subsection{Proving the Closure} \label{ssec:closure}\ignore{-2mm}

We now show that a synchronized set of nodes stays synchronized.
This also implies that the constituent nodes of a synchronized
clusters stay as a synchronized set of nodes, as a synchronized
cluster is in particular a synchronized set of nodes. This proves
the first Closure requirement of the ``Pulse Synchronization''
problem in Definition~\ref{def:pulse-synch}.

\ignore{
\begin{figure}[tb]
\ignore{-10mm}
\begin{minipage}{122mm}
\centerline{\psfig{figure=figure3.eps}} \caption{\footnotesize A
schematic drawing of the scenario in which Lemma~\ref{lem:l1} is
proven. $\overline{\mbox{\ \ \ \ }}$ denotes the refractory function
of the first node. -\ -\ - denotes the refractory function of the
last node at a shift of $d=\sigma.$ The event of Counter=$k$ at the
first node is symbolized by the solid up arrow. The last node sets
its Counter to $k+1$ at the time pointed to by the dashed arrow, at
its threshold level $ k+1.$ $R_{k+1} $  is the time length of the
last node's threshold level $k+1.$} \label{fig:figure3}
\end{minipage}
\end{figure}
}

\begin{lemma}\label{lem:l1} A set of correct nodes that is a synchronized
set at real-time $t',$ remains synchronized $\forall t,\;t \ge t'.$
\end{lemma}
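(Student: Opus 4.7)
The plan is to track the nodes of $S$ through two alternating regimes and maintain the synchronized-set invariant across each: quiescent intervals in which no node of $S$ fires, and firing rounds triggered when the first node $p_f$ of $S$ fires. I would show that during a quiescent interval the set satisfies Case~1 of the synchronized-set definition (spread $\le\sigma=d$), that during a firing round Case~2 takes over, and that after every node of $S$ has fired the set returns to Case~1. Iterating this alternation forward from $t'$ proves closure.

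During a quiescent interval the only mechanism that can change $|\phi_i(t)-\phi_j(t)|$ for $p_i,p_j\in S$ is the relative drift between their physical timers. Property~4, stating $R_i>\sigma(1-\rho)+\frac{2\rho}{1+\rho}\sum_j R_j$ for $i\le n,$ is calibrated so that the $\sigma$-spread inherited from the previous firing round plus the worst-case drift accumulated over a full cycle still fits strictly inside any single refractory step. Hence throughout the quiescent interval every $p_j\in S$ sits at a REF level within one of $p_f$'s REF level, and the spread stays $\le\sigma.$

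The heart of the proof is the firing round. Let $t_1$ be the real-time at which $p_f$ fires, and consider any other correct $p_j\in S.$ I would show $p_j$ fires in $[t_1,t_1+d],$ splitting on the reason $p_f$ fires. If $p_f$ fires endogenously, then by Property~3 the clock progress of $p_j$ during the $\le d$ transmission delay cannot push $p_j$ past REF level~$1,$ and the arriving broadcast raises $Counter_j\ge 1,$ so $p_j$ fires. If instead $p_f$ fires at REF level $k\ge 1$ with broadcast value $c_f\ge k,$ the Summation Properties~P1--P2 ensure that within $d$ each $p_j$ assesses the message and has $Counter_j>c_f\ge k;$ the same drift bound from Property~3 keeps $p_j$'s REF level at most $k+1$ at the moment of assessment, so $Counter_j\ge k+1\ge REF$ and $p_j$ fires. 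Hence the last $p_j\in S$ fires by $t_1+d;$ during the interval $[t_1,t_1+d]$ the set satisfies Case~2 of the definition because $\sigma$ real-time earlier all $\phi$'s were within $\sigma$ by the quiescent-interval invariant. Once the last node fires, every $\phi_j$ lies in $[0,(1+\rho)d],$ restoring Case~1 with spread $\le d=\sigma.$

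The chief obstacle, and the reason Properties~3 and~4 (hence Restriction~1 on $\Cycle$) are stated exactly as they are, is the REF-level bookkeeping in the Counter-crossing sub-case: one must show that the worst-case combination of network delay, processing delay and clock skew accumulated over a full cycle cannot let a lagging $p_j\in S$ advance into a REF step whose threshold beats $p_j$'s just-incremented Counter. A minor subtlety is the tie-breaking rule that defines the ``first node'' when several candidates have $\phi\le\sigma;$ since all such candidates are pairwise within $\sigma,$ the argument above applies uniformly and the choice of representative is immaterial.
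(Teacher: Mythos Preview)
Your overall architecture---alternate between quiescent intervals (Case~1 of the definition) and firing rounds, and show that in a firing round every node of $S$ fires within $d$ of the first---is exactly the paper's approach. But two points need correction.

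First, a conceptual slip: during a quiescent interval the real-time quantities $\phi_i(t)-\phi_j(t)$ do \emph{not} drift at all; $\phi$ is elapsed real-time since the last pulse, so the difference is constant until someone fires. What drifts is the relation between $\phi$ (real-time) and the node's current REF level (local-time), and that is precisely what Property~4 controls. Your conclusion that the REF levels of any two nodes in $S$ stay within one step of each other is right, but the reasoning should be framed as ``$\sigma$ real-time spread plus worst-case skew over a full cycle on the local clock is still less than one refractory step,'' which is exactly the content of Property~4 and the paper's Step~1 calculation.

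Second, and more seriously, you invoke Property~3 in the firing-round analysis to argue that $p_j$'s REF level is at most $k+1$ when the message from $p_f$ is assessed. Property~3 only bounds $R_i$ for $i=1,\ldots,n{-}f{-}1$. Since the first node of $S$ can fire with any $\mathit{Counter}=k\in\{0,\ldots,n{-}1\}$ (the lemma must hold for an arbitrary synchronized set, not only the full set of correct nodes, so $k$ is not capped at $f$), you need the bound on $R_{k+1}$ for $k+1$ ranging all the way up to $n$. That is supplied by Property~4, not Property~3. With Property~3 alone your argument fails whenever $k\ge n-f-1$. The paper's proof uses Property~4 throughout Step~1 for exactly this reason; Property~3 (with its $3d$ constant) is reserved for the absorbance argument in Lemma~\ref{lem:abs_lem}, Step~2, where a $3d$ gap must be closed. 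Once you swap in Property~4, your endogenous/$k\ge1$ case split becomes unnecessary---the paper handles all $k$ uniformly---and the final spread after the round is at most $d$ in real-time (not $(1+\rho)d$; the $\phi$'s are real-time quantities).
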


\begin{proof}
Let there be a synchronized set of nodes at real-time $t'.$ From the
definition of a synchronized set of nodes, this set of nodes will
stay synchronized as long as no node in the set fires. This is
because the $\phi$ difference between nodes (in real-time units)
does not change as long as none of them fires. We therefore turn our
attention to the first occasion after $t'$ at which a node from the
set fires. Let us examine the extreme case \ignore{(illustrated in
Fig.~\ref{fig:figure3})} of a synchronized set consisting of at
least two nodes at the maximal allowed $\phi$ difference; that is to
say that at time $t',$ $dist(first\_node, last\_node, t') = \sigma.$
Further assume that the first node in the set fires with a
Counter=$k,$ $(0\le k \le n-1),$ at some time $t \ge t'$ at the very
beginning of its threshold level $k,$ and without loss of generality
is also the first node in the set to fire after time $t'.$ We will
show that the rest of the nodes in the set will fire within the
interval $[t, t+\sigma]$ and thus remains a synchronized set.

Property~1 ensures that the last node's Counter will read at least
$k+1$ subsequently to the arrival and assessment of the first node's
fire, since its Counter should be at least the first node's Counter
plus $1.$ The proof of the lemma will be done by showing that right
after the assessment of the first node's fire, the last node cannot
be at a threshold higher than $k+1$ and thus will necessarily fire.

The proof is divided into the following steps:
\begin{enumerate}
\item Show that when the first node is at threshold level $k$ then
the last node is at threshold level $k+1$ or lower.

\item Show that
if the first node fires with a Counter=$k$ then due to Property 1
and Step 1 the last node will fire consequently.

\item Show that the last node
fires within a $d$ real-time window of the first node, and as a
result, the new distance between the first and last node is less
than or equal to $\sigma.$
\end{enumerate}

Observe that the extreme case considered is a worst case since if
the largest $\phi$ difference in the set is less than $\sigma$ then
the threshold level of the last node may only be lower. The same
argument also holds if the first node fires after its beginning of
its threshold level $k.$ Thus the steps of the proofs also apply to
any intermediate node in the synchronized set and thus remains as a
synchronized set of nodes.

\vspace{+2mm}\noindent{\bf Step 1:} In this step we aim at
calculating the amount of time on the last node's clock remaining
until it commences its threshold level $k,$ counting from the event
in which the first nodes commences its threshold level $k.$ By
showing that this remaining time is less than the length of
threshold level $k+1,$ as counted on the clock of the last and
slowest node we conclude that this node must be at most at threshold
level $k+1.$ The calculations are done on the slow node's clock.

Assume the first node to be the fastest permissible node and the
last one the slowest. Hence, when the first node's threshold level
$k$ commences,
\begin{equation}
\frac{1}{1+\rho}\sum_{i=k+1}^{n+1}R_i \label{eq:e6}\end{equation}
real-time units actually passed since it last fired. The last node
``counted'' this period as:
\begin{equation}
\frac{1-\rho}{1+\rho}\sum_{i=k+1}^{n+1}R_i\enspace .
\label{eq:e7}\end{equation}

The last node has to count on its clock, from the time that the
first node fired, at most $\sigma(1-\rho )$ local-time units (max.
$\phi$ difference of correct nodes in a synchronized set as counted
by the slowest node), and

\begin{equation}
\sum_{i=k+1}^{n+1}R_i \label{eq:slowk}
\end{equation}

 in order to reach its own threshold
level $k.$ As a result, the maximum local-time difference between
the time the first node starts its threshold level $k$ till the last
node starts its own threshold level $k$ as counted by the last node
is therefore $\sigma(1+\rho )$ plus the difference
Eq.~\ref{eq:slowk} -- Eq.~\ref{eq:e7}, which yields

\begin{equation}
\sigma(1-\rho)+\frac{1+\rho}{1+\rho}\sum_{i=k+1}^{n+1}R_i -
\frac{1-\rho}{1+\rho}\sum_{i=k+1}^{n+1}R_i =
\sigma(1-\rho)+\frac{2\rho}{1+\rho}\sum_{i=k+1}^{n+1}R_i \enspace
 .\label{eq:e8}\end{equation}

Property~4 ensures that $R_{k+1}$ is greater than Eq.~\ref{eq:e8}
for $0 \le k \le n-1;$ thus when the first node commences threshold
level $k$ the last node must be at a threshold level that is less or
equal to $k+1.$

\vspace{+2mm}\noindent{\bf Step 2:} Let the first node fire as a
result of its Counter equalling $k$ at time $t$ at threshold level
$k.$ In case that the last node receives almost immediately the
first node's fire (and thus increments its Counter to at least $k+1$
following Property~1), it must be at a threshold level that is less
or equal to $k+1$ (following Step~1) and will therefore fire. All
the more so if the first node's fire is received later, since the
threshold level can only decrease in time before a node fires.

\vspace{+2mm}\noindent{\bf Step 3:} We now need to estimate the new
distance between the first and last node in order to show that they
 still comprise a synchronized set. The last node assesses the first node's fire
within $d$ real-time units after the first node sent its message
(per definition of $d$). This yields a distance of $d(1-\rho )$ as
seen by the last node, which equals the maximal allowed real-time
distance, $d\ (= \sigma),$ between correct nodes in a synchronized
set at real-time $t',$ and thus they stay a synchronized set at
time~$t'.$  \end{proof}

\begin{corollary}\label{cor:closure1} (Closure 1) Lemma~\ref{lem:l1}
implies the first Closure condition.
\end{corollary}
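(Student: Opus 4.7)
The plan is essentially to unfold definitions and invoke Lemma~\ref{lem:l1} directly. By hypothesis, at real-time $t_0$ the system is in a synchronized\_pulse\_state, which by definition means that the set of all correct nodes is a synchronized set of nodes at time $t_0$. I would first make this unfolding explicit, quoting the relevant clause of Definition~\ref{def:pulse-synch} and the bullet point defining ``synchronized\_pulse\_state of the system''.

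Next, I would apply Lemma~\ref{lem:l1} to this particular set, namely the set of all correct nodes. The lemma guarantees that any set of correct nodes which is synchronized at some real-time $t'$ remains synchronized for every $t \ge t'$. Instantiating $t'=t_0$ and taking the set to be the whole correct population yields that, for every $t\ge t_0$, the set of all correct nodes is a synchronized set of nodes at real-time $t$.

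Finally, I would re-fold the definition in the other direction: since the set of correct nodes is synchronized at real-time $t$, the tuple $\text{pulse\_state}(t)=(\phi_0(t),\ldots,\phi_{n-1}(t))$ is, by definition, a synchronized\_pulse\_state of the system at time $t$. This gives exactly the first Closure condition of Definition~\ref{def:pulse-synch}.

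There is no genuine obstacle in this corollary; the only subtlety is verifying that Lemma~\ref{lem:l1} is indeed applicable to the whole correct set (and not just to some proper subset such as a single synchronized cluster). This is immediate, because the lemma's statement is quantified over arbitrary synchronized sets of correct nodes and makes no cardinality assumption, so the whole correct set qualifies whenever it happens to be synchronized. The corollary therefore reduces to a one-line application of the lemma, and I would write it as such.
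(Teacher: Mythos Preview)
Your proposal is correct and matches the paper's treatment: the paper states the corollary without proof, treating it as an immediate consequence of Lemma~\ref{lem:l1} via exactly the unfold--apply--refold of the definition of a synchronized\_pulse\_state that you describe.
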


\begin{lemma}\label{lem:closure2} (Closure 2) As long as the system state is in a\\ synchronized\_pulse\_state
then the second Closure condition holds.
\end{lemma}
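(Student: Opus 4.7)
The plan is to prove the two halves of the second Closure condition separately, corresponding to the minimum and maximum inter-pulse gaps. Throughout both parts, the induction hypothesis that the system remains in a synchronized\_pulse\_state is available from Corollary~\ref{cor:closure1}, so all correct nodes fire within $\sigma=d$ of each other and collectively enter and leave refractory together.

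For the ``at most one pulse per $\cyclemin$-window'' direction, I would fix a correct node $p$ that fires at real-time $\tau$ and argue that $p$ cannot fire again before $\tau + \cyclemin$. Immediately after $\tau$, CYCLE-RESET puts REF back to $n+1$, so $p$ is locked out during its absolute refractory $R_{n+1}$. The key step is a bootstrap: assume toward contradiction that some correct node fires before $\tau+\cyclemin$, and consider the first such node $q$. Up to that moment, the only senders whose messages could have incremented $Counter_q$ are Byzantine (correct nodes are all still in refractory under the hypothesis), and Timeliness Condition~2 enforces that distinct messages come from distinct senders, so $Counter_q \le f$. Hence $q$ can fire only once REF has dropped to some level $\le f$, which by Eq.~\ref{eq:REF} requires at least $\sum_{i=f+1}^{n+1} R_i$ local-time since its last fire. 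Converting with the fastest-clock drift $(1+\rho)$, and using Property~6 together with the explicit $R_i$ on $[1,n-f-1]$, one checks that this real-time elapsed is at least $\cyclemin = \frac{n-2f}{n-f}\Cycle(1-\rho)$, contradicting the choice of $q$.

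For the ``at least one pulse per $\cyclemax$-window'' direction, I would use endogenous firing together with Lemma~\ref{lem:l1}. Starting from a synchronized\_pulse\_state at real-time $t_0$, consider the correct node whose clock is fastest (drift $1+\rho$); even in the worst case $\phi(t_0)=0$, it reaches local-time $\Cycle$ within $\Cycle/(1+\rho)$ real-time of $t_0$ and fires endogenously (REF hits $0$). By Property~1 and the Steps~2--3 argument of Lemma~\ref{lem:l1}, this message then triggers every other correct node to fire within $d$ additional real-time, since their REF is already low enough and their Counter crosses threshold upon assessing the message. Combining gives every correct node firing within $\Cycle/(1+\rho)+d$ real-time of $t_0$, which is strictly less than $\cyclemax = \Cycle(1+\rho)$ under Restriction~1.

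The main obstacle is the bootstrap in the first part, namely verifying that $Counter$ cannot transiently exceed $f$ at any correct node during the full $\cyclemin$ window. The content is to combine the distinct-sender restriction from Timeliness Condition~2 with the induction hypothesis that no correct node has fired yet, and then to show that Property~4 is exactly the inequality that keeps REF above $f$ throughout the window (so that $Counter \le f$ cannot trigger a fire). The second part is more routine, since endogenous firing supplies a guaranteed trigger and Lemma~\ref{lem:l1} supplies the propagation.
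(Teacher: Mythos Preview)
Your argument for the $\cyclemin$ bound is essentially the paper's: the first correct node to fire again can only have Counter incremented by messages from faulty senders (at most $f$ of them, by Timeliness Condition~2), so it must wait until $REF$ has dropped to level $\le f$, i.e.\ at least $\sum_{g=f+1}^{n+1}R_g$ local time, which after drift conversion yields $\cyclemin$. The paper phrases this more directly (``its cycle might have been shortened by at most $f\cdot\frac{\Cycle}{n-f}$'') without the contradiction wrapper, but the content is identical. One correction: your final paragraph invokes Property~4 as ``the inequality that keeps $REF$ above $f$ throughout the window.'' That is not what Property~4 does; Property~4 bounds each $R_i$ from below by a $\sigma$-related quantity and is used in Lemma~\ref{lem:l1}, not here. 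The only fact you need is the tautology that $REF$ does not reach level $f$ until $\sum_{g=f+1}^{n+1}R_g$ local time has elapsed, together with Property~6 and the explicit values in Eq.~\ref{eq:REF} to evaluate that sum.

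For the $\cyclemax$ bound your route is more elaborate than the paper's and unnecessarily so. The paper simply observes that \emph{each} correct node, if it receives nothing, fires endogenously after $\Cycle$ on its own clock, hence within $\Cycle\cdot(1+\rho)$ real time; no propagation argument is needed. Your approach (fastest node fires endogenously, then Lemma~\ref{lem:l1} drags the rest along within $d$) also works and gives a slightly tighter bound $\Cycle/(1+\rho)+d$, but it buys nothing here since the target is only $\cyclemax=\Cycle(1+\rho)$, and it incurs the extra dependence on Lemma~\ref{lem:l1} and Restriction~1 that the direct per-node bound avoids.
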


\begin{proof} Due to
Lemma~\ref{lem:l1} the first node to fire in the synchronized set
following its previous pulse, may do so only if it receives the fire
from faulty nodes or if it fires endogenously. This may happen the
earliest if it receives the fire from exactly $f$ distinct faulty
nodes. Thus following Eq.~\ref{eq:REF} its cycle might have been
shortened by at most $f\cdot\frac{\Cycle}{n-f}$ real-time units.
Hence, in case the first node to fire is also a fast node, it
follows that $\cycle_{\mbox{\scriptsize\slshape
min}}=\Cycle\cdot(1-\rho)-\frac{f}{n-f}\cdot\Cycle\cdot(1-\rho)=\frac{n-2f}{n-f}\cdot\Cycle\cdot(1-\rho)$
real-time units. A node may fire at the latest if it fires
endogenously. If in addition it is a slow node then it follows that
$\cyclemax= \Cycle\cdot(1+\rho)$ real-time units.

Thus in any real-time interval that is less or equal to $\cyclemin$
any correct node will fire at most once. In any real-time interval
that is greater or equal to $\cyclemax$ any correct node will fire
at least once. This concludes the second closure condition.
\end{proof}

\subsection{Proving the Convergence} \label{ssec:convergence}\ignore{-2mm}
The proof of Convergence is done through several lemmata. We begin
by presenting sufficient conditions for two synchronized clusters to
absorb. In Subsection~\ref{ssec:s4.2}, we show that the refractory
function $REF$ ensures the continuous existence of a pair of
synchronized clusters whose unified set of nodes is not
synchronized, but are within an absorbance distance and hence
absorb. Thus, iteratively, all synchronized clusters will eventually
absorb to form a unified synchronized set of nodes.\\

\ignore{

\begin{figure}[tb]
\ignore{-10mm}
\begin{minipage}{122mm}
\centerline{\psfig{figure=figure4.eps}} \caption{\footnotesize A
schematic drawing of the rationale behind Lemma~\ref{lem:abs_lem}. -
- - - denotes the refractory function of the first node of $C_i.$ -\
-\ -\ - denotes the refractory function of the first node of $C_j.$
The event of Counter=$k$ at the first node of $C_i$ at the beginning
of threshold level $k$ is symbolized by the solid up-arrow. The
first node of $C_j$ receives the fire and increments its Counter to
$k+n_i$ at the time pointed to by the dashed up-arrow, at the
beginning of threshold level $k+n_i.$} \label{fig:figure4}
\end{minipage}
\end{figure}

} 

\begin{lemma}\label{lem:abs_lem} (Conditions for Absorbance) Given two synchronized clusters,
$C_i$ preceding $C_j,$ if: \begin{enumerate} \item $C_i$ fires with
Counter=$k,$ at real-time $t_{c_i\_fires},$ where $0\le k\le f$
\item $dist(C_i, C_j, t_{c_i\_fires}) \le
\frac{1}{1-\rho}\, \sum_{g=k+1}^{k+n_i}R_g -
\frac{2\rho}{1-\rho^2}\,\sum_{g=k+1}^{n+1}R_g$ \end{enumerate} then
$C_i$ will absorb $C_j .$
\end{lemma}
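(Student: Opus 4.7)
My plan is to mirror the argument of Lemma~\ref{lem:l1}, but applied across two clusters: the firing of $C_i$'s first node with Counter~$=k$ must drag $C_j$'s first node across threshold level $k+n_i$, after which Property~1 forces a cascade through $C_j$ that merges it with $C_i$ into a single synchronized set. The three ingredients are (i) an internal cascade in $C_i$ that produces messages with strictly increasing Counter values, (ii) a threshold bound on $C_j$'s first node at time $t_{c_i\_fires}$, and (iii) a timing estimate showing that the final spread of firings across $C_i\cup C_j$ is at most $\sigma=d$.

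For (i), when $C_i$'s first node fires with Counter~$=k$, Step~2 of Lemma~\ref{lem:l1}'s proof shows that every other $C_i$-node fires consequently, and by repeated application of Lemma~\ref{lem:counter_inc}, the $n_i$ messages leaving $C_i$ carry Counters at least $k, k+1, \ldots, k+n_i-1$ respectively, in the order they are sent.

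For (ii), I would estimate in worst-case clock skew the local-time age of $C_j$'s first node at $t_{c_i\_fires}$. Taking $C_i$'s first node to be the fastest correct node and firing as early in its cycle as possible (just as threshold level $k$ begins) gives $\phi_{c_i}(t_{c_i\_fires}) \ge \frac{1}{1+\rho}\sum_{g=k+1}^{n+1}R_g$. The first node of $C_j$ fired $dist(C_i,C_j,t_{c_i\_fires})$ real-time earlier, so at the slowest permissible clock rate its local-time age is at least $(1-\rho)\bigl(\frac{1}{1+\rho}\sum_{g=k+1}^{n+1}R_g - dist(C_i,C_j,t_{c_i\_fires})\bigr)$. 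A direct rearrangement shows that the distance hypothesis in the statement is exactly the condition for this lower bound to reach $\sum_{g=k+n_i+1}^{n+1}R_g$, i.e., $C_j$'s first node has already entered a threshold level no higher than $k+n_i$. For (iii), within $d$ real-time of $C_i$'s last firing all $n_i$ of $C_i$'s messages are assessed at $C_j$'s first node; by Property~1 applied to the one with the largest Counter, its Counter reaches at least $k+n_i$, which by (ii) meets its current threshold, so it fires. The rest of $C_j$ then follows by re-running Step~2 of Lemma~\ref{lem:l1}'s proof inside $C_j$, and Step~3 of that same proof bounds the final real-time spread across $C_i\cup C_j$ by $d=\sigma$. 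Hence $C_i\cup C_j$ comprises a synchronized set, which is the definition of absorption.

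The main obstacle is the bookkeeping in (ii): one must verify that the ``fastest $C_i$-firer at the very start of level $k$ versus slowest $C_j$-firer'' configuration is indeed the worst case (other combinations of skews and in-cycle positions give $C_j$ a lower threshold and are only easier), and that the hypothesis $0\le k\le f$ together with $n_i\le n-f$ keeps the indices $k+1,\ldots,k+n_i$ inside the valid range of $REF$, so that all the $R_g$ terms in the bound are well-defined. A secondary subtlety is in (i): the cascade must really yield distinct, increasing Counter values rather than several $C_i$-nodes firing concurrently with the same Counter; this is guaranteed by Theorem~\ref{thm:timely} (each earlier $C_i$-message is assessed as timely by every later-firing $C_i$-node) combined with Lemma~\ref{lem:counter_inc}.
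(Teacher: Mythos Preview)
Your threshold calculation in~(ii) is correct and matches the paper's Step~1a, and your cascade argument in~(i) corresponds to the paper's use of Property~1 and Lemma~\ref{lem:l3}. The genuine gap is in~(iii): you claim that Step~3 of Lemma~\ref{lem:l1} bounds the spread across $C_i\cup C_j$ by $d=\sigma$, but it does not. The first node of $C_j$ fires at most $d$ after the \emph{last} node of $C_i$, which already fired up to $\sigma=d$ after the \emph{first} node of $C_i$; the last node of $C_j$ then adds another $\sigma$. The spread you actually obtain is only $3d$, not $d$, so $C_i\cup C_j$ is \emph{not} yet a synchronized set and absorption has not been established.

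The paper's proof explicitly stops at $dist(C_i,C_j,\cdot)\le 3d$ after the first round (its Step~1b) and then devotes a separate Step~2 to showing that, in the \emph{following} cycle, whichever node fires first with Counter~$=k'$ triggers every other node in $C_i\cup C_j$ to fire within $d$. That step hinges on Property~3, namely $R_{k'+1}>3d+\frac{2\rho}{1-\rho^2}\sum R_j$, which guarantees that a $3d$ head start cannot push any node past threshold level $k'+1$ by the time the leader fires. You never invoke Property~3, so your argument has no tool to collapse $3d$ down to $\sigma$. To repair the proof you need this extra cycle and the corresponding appeal to Property~3.
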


\begin{proof}
The proof is divided into the following steps:

\begin{enumerate}
\item
\begin{enumerate}
\item If $C_i$ fires before $C_j,$ then $C_j$ consequently fires.

\item Subsequent to the previous step: $dist(C_i, C_j,\;..) \le 3d.$

\end{enumerate}

\item Following the previous step, within one cycle the
constituent nodes of the two synchronized clusters comprise a
synchronized set of nodes.

\end{enumerate}

\noindent\textbf{Step 1a:} Let us examine the case in which $C_i$
fires first at some real-time denoted $t_{c_i\_fires},$ and in the
worst case that $C_j$ doesn't fire before it receives all of $C_i$'s
fire. All the calculations assume that at $t_{c_i\_fires},$
$\phi_{c_i}(t_{c_i\_fires})$ has still not been reset to 0.
Specifically, assume that the first node in $C_i$ fired due to
incrementing its Counter to $k$ $(0\le k\le f)$ at the beginning of
its threshold level $k.$ Following Property 1 and Lemma~\ref{lem:l3}
the nodes of $C_j$ increment their Counters to $k+n_i$ after
receiving the fire of $C_i.$ Additionally, in the worst case, assume
that the first node in $C_j$ receives the fire of $C_i$ almost
immediately. We will now show that this fire is received at a
threshold level $\le k+ n_i.$

We will calculate the upper-bound on the $\phi$ of the first node in
$C_j$ at real-time $t_{c_i\_fires},$ and hence deduce the
upper-bound on its threshold level. Assume the nodes of $C_i$ are
fast and the nodes of $C_j$ are slow. Should the nodes of $C_j$ be
faster, then the threshold level may only be lower.

\vspace{-2mm}\begin{eqnarray} \lefteqn{
\phi_{c_j}(t_{c_i\_fires})=}\nonumber\\&& =
\phi_{c_i}(t_{c_i\_fires})-[\frac{1}{1-\rho}\,
\sum_{g=k+1}^{k+n_i}R_g -
\frac{2\rho}{1-\rho^2}\,\sum_{g=k+1}^{n+1}R_g] \nonumber\\
 & & =
\frac{1}{1+\rho}\!\!\sum_{g=k+1}^{n+1}\!\!R_g\nonumber
-[\frac{1}{1-\rho}\sum_{g=k+1 }^{k+n_i }\!\!R_g-
\frac{2\rho}{1-\rho^2}\sum_{g=k+1}^{n+1}\!\!R_g]\enspace\\&& =
\frac{1}{1+\rho}\!\!\sum_{g=k+1}^{n+1}\!\!R_g\nonumber -[
\frac{1}{1-\rho}\sum_{g=k+1 }^{k+n_i }\!\!R_g+
(\frac{1}{1+\rho}-\frac{1}{1-\rho})\sum_{g=k+1}^{n+1}\!\!R_g]\enspace\\&&
=\frac{1}{1+\rho}\!\!\sum_{g=k+1}^{n+1}\!\!R_g\nonumber
-[(\frac{1}{1+\rho}-(\frac{1}{1+\rho}-\frac{1}{1-\rho}))\sum_{g=k+1
}^{k+n_i }\!\!R_g+
(\frac{1}{1+\rho}-\frac{1}{1-\rho})\sum_{g=k+1}^{n+1}\!\!R_g]\enspace\\&&
=\frac{1}{1+\rho}\!\!\sum_{g=k+1}^{n+1}\!\!R_g\nonumber
-[\frac{1}{1+\rho}\!\!\sum_{g=k+1 }^{k+n_i }\!\!R_g+
(\frac{1}{1+\rho}-\frac{1}{1-\rho})\!\!\sum_{g=k+1+n_i }^{n+1
}\!\!\!\!\!R_g]\enspace\\&&
=\frac{1}{1+\rho}\!\!\sum_{g=k+1}^{n+1}\!\!R_g\nonumber
-[\frac{1}{1+\rho}\!\!\sum_{g=k+1}^{n+1}\!\!R_g -
\frac{1}{1-\rho}\!\!\sum_{g=k+1+n_i}^{n+1}\!\!\!\!\!R_g]\enspace\\&&
=\frac{1}{1-\rho}\!\!\sum_{g=k+1+n_i}^{n+1}\!\!\!\!\!R_g\;.\nonumber\enspace\\&&
\label{eq:phase_cj}
\end{eqnarray}

We now seek to deduce the bound on $C_j$'s threshold level at the
time of $C_i$'s fire. Thus, following Eq.~\ref{eq:phase_cj}, at
real-time $t_{c_i\_fires}$ the $\phi$ of the first node in $C_j$ is
at most $\frac{1}{1-\rho}\!\sum_{g=k+1+n_i}^{n+1}\!R_g.$ We assumed
the worst case in which the constituent correct nodes of $C_j$ are
slow, thus these nodes have counted on their timers at least
$(1-\rho)\cdot\frac{1}{1-\rho}\!\sum_{g=k+1+n_i}^{n+1}\!R_g=\sum_{g=k+1+n_i}^{n+1}\!R_g$
time units since their last pulse. Hence, the correct nodes of $C_j$
are at real-time $t_{c_i\_fires}$ at most in threshold level
$k+n_i.$ Should $k<f$ or the fire of $C_i$ be received at a delay,
then this may only cause the threshold level at time of assessment
of the fire from $C_i$ to be equal or even smaller than $k+n_i.$
Thus, Lemma~\ref{lem:l1} and Property~1 guarantee that the first
node in $C_j$ will thus fire and that the rest of the nodes in both
synchronized clusters will follow their
respective first ones within $\sigma$ real-time units.\\

\noindent\textbf{Step 1b:} We seek to estimate the maximum distance
between the two synchronized clusters following the fire of $C_j.$
The first node in $C_j$ will fire at the latest upon receiving and
assessing the message of the last node in $C_i.$ More precisely,
fire at the latest $d$ real-time units following the fire of the
last node in $C_i,$ yielding a new $dist(C_i , C_j,\;..)$ of at most
$2d$ real-time units regardless of the previous $dist(C_i ,
C_j,\;..),$ $n_i,$ $k$ and $n_j.$ The last node of $C_j$ is at most
at a distance of $d$ from the first node of $C_j$ therefore making
the maximal distance between the first node of $C_i$ and the last
node of $C_j,$ at the
moment it fires, equal $3d$ real-time units.\\

\noindent\textbf{Step 2:} We will complete the proof by showing that
after $C_i$ causes $C_j$ to fire, the two synchronized clusters
actually absorb. We need to show that in the cycle subsequent to
Step 1, the nodes that constituted $C_i$ and $C_j$ become a
synchronized set. Examine the case in which following Step 1, either
one of the two synchronized clusters increment its Counter to
$k^\prime$ and fires at the beginning of threshold level $k^\prime.$
We will observe the $\phi$ of the first node to fire, denoted by
$\phi_{\mathrm first\_node-2nd-cycle}.$ Following the same arguments
as in Step 1, all other nodes increment their Counters to
$k^\prime+1$ after receiving this node's fire. Consider that this
happens at the moment that this first node incremented its Counter
to $k^\prime$ and fired, denoted $t_{2nd-cycle-fire}.$ Below we
compute, using Property~3, the lower bound on the $\phi$ of the rest
of the nodes at real-time $t_{2nd-cycle-fire},$ denoted
$\phi_{\mathrm other-nodes}(t_{2nd-cycle-fire}).$

\vspace{-2mm}\begin{eqnarray} \lefteqn{ \phi_{\mathrm
other-nodes}(t_{2nd-cycle-fire}) \ge
\phi_{\mathrm first\_node-2nd-cycle}(t_{2nd-cycle-fire})-3d} \nonumber\\
 & & =
\frac{1}{1+\rho}\!\!\sum_{g=k^\prime+1}^{n+1}\!\!R_g - 3d =
\frac{1}{1+\rho}\!\!\sum_{g=k^\prime+2}^{n+1}\!\!R_g\nonumber +
R_{k^\prime+1}-3d\enspace\\&& >
\frac{1}{1+\rho}\!\!\sum_{g=k^\prime+2}^{n+1}\!\!R_g +
\frac{2\rho}{1-\rho^2}\sum_{g=1}^{n+1}R_g\;.
 \label{eq:phase_2ndcycle}
\end{eqnarray}

In the worst case, the rest of the constituent nodes that were in
$C_i$ and $C_j$ are slow nodes and thus, at real-time
$t_{2nd-cycle-fire},$ counted:

\vspace{-2mm}\begin{eqnarray}
\lefteqn{(1-\rho)\cdot(\frac{1}{1+\rho}\!\!\sum_{g=k^\prime+2}^{n+1}\!\!R_g
+ \frac{2\rho}{1-\rho^2}\sum_{g=1}^{n+1}R_g) =
\frac{1-\rho}{1+\rho}\!\!\sum_{g=k^\prime+2}^{n+1}\!\!R_g +
\frac{2\rho}{1+\rho}\sum_{g=1}^{n+1}R_g}\nonumber\\
&&=
\frac{1-\rho}{1+\rho}\!\!\sum_{g=k^\prime+2}^{n+1}\!\!R_g +
\frac{2\rho}{1+\rho}\sum_{g=k^\prime+2}^{n+1}R_g+
\frac{2\rho}{1+\rho}\sum_{g=1}^{k^\prime+1}R_g\nonumber\\
&&= \sum_{g=k^\prime+2}^{n+1}\!\!R_g +
\frac{2\rho}{1+\rho}\sum_{g=1}^{k^\prime+1}R_g >
\sum_{g=k^\prime+2}^{n+1}\!\!R_g\enspace .
 \label{eq:phase_2ndcycle_counted}
\end{eqnarray}

time units since their last pulse. Due to Property~3 all these
correct nodes receive the fire and increment their Counters to
$k^\prime+1$ in a threshold level which is less or equal to
$k^\prime+1$ and will fire as well within $d$ real-time units of the
first node in the second cycle.
 \end{proof}

\begin{theorem}\label{thm:abs_thm} (Conditions for Absorbance) Given two synchronized clusters,
$C_i$ preceding $C_j,$ if: \begin{enumerate} \item $C_i$ fires with
Counter=$k,$ at real-time $t_{c_i\_fires},$ where $0\le k\le f,$ and
\item $\exists\;t,$  $t_{prev\_c_j\_fired} \le t \le t_{c_i\_fires},$
for which $dist(C_i, C_j, t) \le
ad(C_i)$ \end{enumerate} then $C_i$ will absorb $C_j.$
\end{theorem}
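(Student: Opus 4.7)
The plan is to reduce Theorem~\ref{thm:abs_thm} to the already proven Lemma~\ref{lem:abs_lem}, by showing that the theorem's hypotheses imply the lemma's distance bound at $t_{c_i\_fires}$.

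The central observation driving the reduction is that $\phi_i(t)$ is defined as the elapsed real-time since $p_i$'s last pulse; hence each $\phi_i$ advances at rate exactly one with respect to real-time between firings, so $dist(C_i, C_j, \cdot)$ is constant on any interval during which neither cluster fires. On the subinterval $[\max\{t_{prev\_c_i\_fired},\, t_{prev\_c_j\_fired}\},\, t_{c_i\_fires})$ both clusters remain in their current cycles, and the distance is therefore constant and equal to $dist(C_i, C_j, t_{c_i\_fires})$. If the $t$ supplied by the theorem lies in this subinterval, I immediately obtain $dist(C_i, C_j, t_{c_i\_fires}) \le ad(C_i)$. Otherwise $t < t_{prev\_c_i\_fired}$, and $C_i$ fired at an intermediate time $t^\ast = t_{prev\_c_i\_fired}$; because the distance is likewise constant on $[t, t^\ast)$ and bounded by $ad(C_i)$ there, the theorem's hypothesis is already satisfied at $t^\ast$, and I would apply the argument recursively to that earlier firing event, so the case $t < t_{prev\_c_i\_fired}$ either does not arise or leads to absorbance strictly before $t_{c_i\_fires}$.

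Having reduced to the case $dist(C_i, C_j, t_{c_i\_fires}) \le ad(C_i)$, the remaining task is to verify the algebraic inequality
\[
ad(C_i) \;\le\; \tfrac{1}{1-\rho}\!\!\sum_{g=k+1}^{k+n_i}\!\!R_g \;-\; \tfrac{2\rho}{1-\rho^2}\!\!\sum_{g=k+1}^{n+1}\!\!R_g,
\]
which is precisely the distance bound of Lemma~\ref{lem:abs_lem}. Property~2 (monotonicity of the $R_g$) together with the hypothesis $k \le f$ yields $\sum_{g=k+1}^{k+n_i}R_g \ge \sum_{g=f+1}^{f+n_i}R_g = ad(C_i)$, giving a positive surplus that is strict whenever $k<f$. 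I would then use Property~4 and Restriction~1 to bound $\sum_{g=k+1}^{n+1}R_g$ in terms of the individual $R_g$'s appearing in $ad(C_i)$, absorbing the drift correction. Lemma~\ref{lem:abs_lem} then applies and delivers $C_i$'s absorption of $C_j$.

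The hardest step is this closing algebraic verification in the extremal case $k = f$, where the Property~2 surplus collapses and one must instead exploit the detailed form of $REF$ in Eq.~\ref{eq:REF} (particularly the separation of the absolute refractory step $R_{n+1}$ from the earlier, larger steps) to argue that the $\tfrac{1}{1-\rho}$ inflation factor on $\sum_{g=k+1}^{k+n_i}R_g$ dominates the $\tfrac{2\rho}{1-\rho^2}$-scaled tail $\sum_{g=k+1}^{n+1}R_g$ under Restriction~1.
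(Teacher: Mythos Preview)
Your opening observation is correct and sharper than the paper's own phrasing: since $\phi_i(t)$ is defined as \emph{real-time} elapsed since the last pulse, each $\phi$ advances at rate exactly $1$ between firings, so $dist(C_i,C_j,\cdot)$ is genuinely constant on $[t_{prev\_c_j\_fired},\,t_{c_i\_fires})$. Hence you legitimately get $dist(C_i,C_j,t_{c_i\_fires})\le ad(C_i)$.

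The gap is in the closing algebraic step. The inequality you need,
\[
ad(C_i)\;\le\;\frac{1}{1-\rho}\sum_{g=k+1}^{k+n_i}R_g-\frac{2\rho}{1-\rho^2}\sum_{g=k+1}^{n+1}R_g,
\]
is \emph{false} at $k=f$. At $k=f$ the right-hand side minus $ad(C_i)=\sum_{g=f+1}^{f+n_i}R_g$ equals
\[
\frac{\rho}{1-\rho}\sum_{g=f+1}^{f+n_i}R_g-\frac{2\rho}{1-\rho^2}\sum_{g=f+1}^{n+1}R_g
=\frac{\rho}{1-\rho}\Bigl[\sum_{g=f+1}^{f+n_i}R_g-\frac{2}{1+\rho}\sum_{g=f+1}^{n+1}R_g\Bigr],
\]
and since $\frac{2}{1+\rho}>1$ while $\sum_{g=f+1}^{n+1}R_g\ge\sum_{g=f+1}^{f+n_i}R_g$ (strictly, as $f+n_i\le n-1<n+1$), the bracket is negative. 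So the $\tfrac{1}{1-\rho}$ inflation on the head \emph{cannot} dominate the $\tfrac{2\rho}{1-\rho^2}$-scaled tail; your ``hardest step'' is not merely hard but impossible as stated. No appeal to Property~4 or Restriction~1 rescues this, because those only lower-bound individual $R_g$'s, whereas you need the tail sum to be small.

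The paper avoids this trap by \emph{not} decoupling the distance bound from the drift. Rather than showing the constant real-time distance is below the Lemma~\ref{lem:abs_lem} threshold, it computes $\phi_{c_i}(t_{c_i\_fires})$ under the assumption that $C_i$'s clock is fast and $\phi_{c_j}(t_{c_i\_fires})$ under the assumption that $C_j$'s clock is slow, and shows the resulting expression coincides \emph{exactly} with the Lemma~\ref{lem:abs_lem} bound (Eq.~\ref{eq:calc_ad}). The $O(\rho)$ correction terms in that bound arise precisely from the fast/slow split in the worst-case analysis --- the same split that appears inside the proof of Lemma~\ref{lem:abs_lem} itself --- so the two drift contributions are matched rather than compared. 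Your approach treats $ad(C_i)$ and the Lemma~\ref{lem:abs_lem} bound as independent quantities to be ordered, but the paper's computation shows they are two expressions for the same worst-case configuration, differing only in which clock-rate extreme is attributed to which cluster.
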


\begin{proof} Denote $t_{prev\_c_j\_fired}$ the real-time at which $C_j$
previously fired before time $t_{c_i\_fires}.$ Given that at some
time $t,$ where $t_{prev\_c_j\_fired} \le t\le t_{c_i\_fires},$
$dist(C_i,C_j, t) \le ad(C_i),$ we wish to calculate the maximal
possible distance between the two synchronized clusters at real-time
$t_{c_i\_fires},$ the time at which $C_i$ fires with
Counter=$k,$ where $0\le k\le f.$\\

Under the above assumptions, the maximal possible distance at
real-time $t_{c_i\_fires}$ is obtained when $k=f$ and when at time
$t_{prev\_c_j\_fired}$ the distance between $C_i$ and $C_j$  was
exactly $ad(C_i),$ i.e $dist(C_i, C_j,t_{prev\_c_j\_fired}) =
ad(C_i).$ The upper bound on $dist(C_i, C_j, t_{c_i\_fires})$ takes
into account that from $C_i's$ previous real-time firing time,
$t_{prev\_c_i\_fired},$ and until real-time $t_{c_i\_fires},$ the
nodes of $C_i$ were fast and that from real-time
$t_{prev\_c_j\_fired}$ and until $t_{c_i\_fires},$ the nodes of
$C_j$ were slow. Thus the bound on $dist(C_i ,C_j, t_{c_i\_fires})$
becomes the real-time difference between these:

\begin{eqnarray}
\lefteqn{dist(C_i ,C_j, t_{c_i\_fires}) = \phi_{c_i}(t_{c_i\_fires})
- \phi_{c_j}(t_{c_i\_fires})  =  }\nonumber\\
& & \frac{1}{1+\rho}\!\!\sum_{g=k+1}^{n+1}\!\!R_g -
\frac{1}{1-\rho}\!\!\sum_{g=k+1+n_i}^{n+1}\!\!\!\!\!R_g
=\frac{1}{1+\rho}\!\!\sum_{g=k+1 }^{k+n_i }\!\!R_g+
(\frac{1}{1+\rho}-\frac{1}{1-\rho})\!\!\sum_{g=k+1+n_i }^{n+1
}\!\!\!\!\!R_g=
\nonumber\\
&&(\frac{1}{1+\rho}-(\frac{1}{1+\rho}-\frac{1}{1-\rho}))\sum_{g=k+1
}^{k+n_i }\!\!R_g+
(\frac{1}{1+\rho}-\frac{1}{1-\rho})\sum_{g=k+1}^{n+1}\!\!R_g= \nonumber\\
&&
 \frac{1}{1-\rho}\sum_{g=k+1 }^{k+n_i }\!\!R_g+
(\frac{1}{1+\rho}-\frac{1}{1-\rho})\sum_{g=k+1}^{n+1}\!\!R_g=
\nonumber\\&& \frac{1}{1-\rho}\sum_{g=k+1 }^{k+n_i }\!\!R_g-
\frac{2\rho}{1-\rho^2}\sum_{g=k+1}^{n+1}\!\!R_g\enspace .
\label{eq:calc_ad}
\end{eqnarray}

Eq.~\ref{eq:calc_ad} is the upper bound on the distance between the
two synchronized clusters at real-time $t_{c_i\_fires},$ thus
following Lemma~\ref{lem:abs_lem}, the two synchronized clusters
absorb.
\end{proof}

\subsubsection{Convergence of the Synchronized Clusters}
\label{ssec:s4.2}\ignore{-2mm} In the coming subsection we look at
the correct nodes as partitioned into synchronized clusters (at some
specific time). Observation~\ref{cor:synch_clu2} ensures that no two
of these synchronized clusters comprise one synchronized set of
nodes. The objective of Theorem~\ref{thm:t2} is to show that within
finite time, at least two of these synchronized clusters will
comprise one synchronized set of nodes. Specifically, we show that
in any state that is not a synchronized\_pulse\_state of the system,
there are at least two synchronized clusters whose unified set of
nodes is not a synchronized set but that are within absorbance
distance of each other, and consequently they absorb. Thus,
eventually all synchronized clusters will comprise a synchronized
set of nodes.

We claim that if the following relationship between $REF$ and
$\Cycle$ is satisfied, then absorbance (of two synchronized clusters
whose unified set is not a synchronized set), is ensured
irrespective of the states of the synchronized clusters. Let
$C_{j^\prime}$ denote the largest synchronized cluster. The theorem
below, Theorem~\ref{thm:t2}, shows that for a given clustering of
$n-f$ correct nodes into $c>1$ synchronized clusters and for $n,$
$f,$ $\Cycle$ and $REF$ that satisfy

\ignore{-2mm}
\begin{equation} \sum_{j=1,j\ne j^\prime}^{c}ad(C_j)+\frac{1}{1-\rho}\sum_{g=1}^{n_{j^\prime}}\!\!R_g \ge \frac{1}{1-\rho}\cdot \Cycle\enspace
\label{eq:e14}
\end{equation}

\noindent there exist at least two synchronized clusters, whose
unified set is not a synchronized set of nodes, that will eventually
undergo absorbance.\\

Note that Eq.~\ref{eq:e14} is derived from Property 7
(Eq.~\ref{eq:prop7}):

Eq.~\ref{eq:prop7} derives the following equation (since the $R_g$
values are non-negative),

\begin{equation} \sum_{j=1,j\ne j^\prime}^{c}\sum_{g=f+1}^{f+n_j}\!\!\!R_g+\frac{1}{1-\rho}\sum_{g=1}^{n_{j^\prime}}\!\!R_g \ge
\frac{1}{1-\rho}\cdot \Cycle \enspace . \label{eq:e141}
\end{equation}

Incorporating the absorbance distance of Definition~\ref{def:ad}
into Eq.~\ref{eq:e141} yields exactly Eq.~\ref{eq:e14}. We use
Eq.~\ref{eq:e14} in Theorem~\ref{thm:t2} instead of
Eq.~\ref{eq:prop7} (Property 7) for readability of the proof.

\begin{theorem}\label{thm:t2} (Absorbance) Assume a clustering of $n-f$
correct nodes into $c>1$ synchronized clusters at real-time $t_0.$
Further assume that Eq.~\ref{eq:e14} holds for the resulting
clustering. Then there will be at least one synchronized cluster
that will absorb some other synchronized cluster by real-time
$t_0+2\cdot \cycle.$
\end{theorem}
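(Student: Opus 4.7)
\medskip
\noindent\textbf{Proof proposal.} My plan is a cyclic pigeonhole argument on the $\phi$-gaps between adjacent synchronized clusters. At real-time $t_0$ list the clusters $C_1,C_2,\ldots,C_c$ cyclically by decreasing $\phi$ of their first nodes (exactly as the clustering procedure enumerates them), and set
$$g_i \;:=\; dist(C_i,\,C_{(i+1)\bmod c},\,t_0),$$
where the wrap-around value $g_c$ accounts for the real-time segment from the impending fire of $C_1$ back to the subsequent fire of $C_c$. One complete round of firings corresponds to one full cycle of the firing node, so
$$\sum_{i=1}^{c} g_i \;\le\; \cyclemax \;<\; \tfrac{1}{1-\rho}\,\Cycle.$$

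The hypothesis Eq.~\ref{eq:e14} supplies the matching lower bound
$$\sum_{i\neq j^\prime} ad(C_i) \;+\; \tfrac{1}{1-\rho}\sum_{g=1}^{n_{j^\prime}} R_g \;\ge\; \tfrac{1}{1-\rho}\,\Cycle.$$
Combining these two inequalities by pigeonhole forces some index $i^*$ for which $g_{i^*}$ is at most the corresponding term on the left-hand side: either $g_{i^*}\le ad(C_{i^*})$ (if $i^*\neq j^\prime$), or $g_{j^\prime}\le \tfrac{1}{1-\rho}\sum_{g=1}^{n_{j^\prime}} R_g$.

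In the generic branch $i^*\neq j^\prime$ the conclusion is immediate. Because $C_{(i^*+1)\bmod c}$ trails $C_{i^*}$ in the firing order, it does not fire on $[t_0,t^*]$, where $t^*\le t_0+\cyclemax$ is the next firing time of $C_{i^*}$ (guaranteed by Lemma~\ref{lem:closure2}). Hence $t_0$ itself serves as the witness required by the second hypothesis of Theorem~\ref{thm:abs_thm}, falling in $[t_{prev\_c_{i^*+1}\_fired},\,t^*]$ with $dist\le ad(C_{i^*})$; applying the theorem, $C_{i^*}$ absorbs $C_{(i^*+1)\bmod c}$. By Step~2 of the proof of Lemma~\ref{lem:abs_lem} the absorbance completes within one further cycle, so the whole event finishes by $t_0+2\cdot\cyclemax$, as claimed.

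The delicate part, and the expected main obstacle, is the case $i^*=j^\prime$: the pigeonhole threshold $\tfrac{1}{1-\rho}\sum_{g=1}^{n_{j^\prime}} R_g$ strictly exceeds the absorbance distance $ad(C_{j^\prime})=\sum_{g=f+1}^{f+n_{j^\prime}} R_g$, so Theorem~\ref{thm:abs_thm} does not plug in directly. The remedy is to re-run the Step~1a/1b analysis of Lemma~\ref{lem:abs_lem} in the sub-case $k=0$, $n_i=n_{j^\prime}$: since $C_{j^\prime}$ is the largest cluster, its endogenous fire at the start of threshold level $0$ delivers a Counter jump of $n_{j^\prime}$ to every node of $C_{(j^\prime+1)\bmod c}$, and one needs to verify that this drives the successor past its threshold from the new, larger permissible distance. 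The spurious $\tfrac{2\rho}{1-\rho^2}\sum R$ correction that appears there must be absorbed either by the residual pigeonhole slack $\tfrac{1}{1-\rho}\Cycle-\cyclemax=\tfrac{\rho^2}{1-\rho}\Cycle$ or by tightening the bound on $\sum g_i$ using the cycle-rate bound on the fast clocks; showing that one of these two slacks suffices is the only point at which the argument is not a mechanical application of the earlier lemmata, and is where I would concentrate the technical effort. Once this sub-case is discharged, the pigeonhole route meets the two-cycle budget on both branches, completing the proof.
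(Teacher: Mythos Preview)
Your pigeonhole framing is close in spirit to the paper's argument, but the $i^*=j'$ branch is a genuine gap rather than a cleanup step. The slack you propose to spend there is $\tfrac{1}{1-\rho}\Cycle-\cyclemax=\tfrac{\rho^2}{1-\rho}\Cycle$, of order $\rho^2\,\Cycle$; the skew correction you must absorb from the hypothesis of Lemma~\ref{lem:abs_lem} is $\tfrac{2\rho}{1-\rho^2}\sum_g R_g=\tfrac{2\rho}{1-\rho^2}\Cycle$, of order $\rho\,\Cycle$. That is a shortfall by a factor of roughly $2/\rho$, and no tightening of the bound on $\sum g_i$ helps either, since a slow node's cycle genuinely can be $\cyclemax$. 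So neither of your two proposed outs closes the case.

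The paper avoids this difficulty by never trying to bound $g_{j'}=dist(C_{j'},C_{j'+1})$ from above. Its case split is: either some gap already satisfies $dist(C_i,C_{i+1},t_0)\le ad(C_i)$, in which case Theorem~\ref{thm:abs_thm} applies directly (after separately dispatching the sub-case where $C_i$ fires with Counter~$>f$, which you also omit), or \emph{every} gap with $i\neq j'$ exceeds $ad(C_i)$. In the latter case the paper waits until $C_{j'}$ fires at some $t_{j'}$; at that instant $\phi_{c_{j'}}(t_{j'})=0$, and the $c-1$ gaps with $i\neq j'$ telescope linearly---with no wrap-around term---to $\phi_{c_{j'+1}}(t_{j'})$. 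The assumed lower bounds then give $\phi_{c_{j'+1}}(t_{j'})>\sum_{i\neq j'}ad(C_i)$, and Eq.~\ref{eq:e14} converts this to $\phi_{c_{j'+1}}(t_{j'})>\tfrac{1}{1-\rho}\sum_{g=n_{j'}+1}^{n+1}R_g$. Multiplying by $(1-\rho)$ shows the first node of $C_{j'+1}$ has already counted past threshold level $n_{j'}$ on its own clock, so $C_{j'}$'s $n_{j'}$ messages trigger it immediately. The key move is that a \emph{lower} bound on $\phi_{c_{j'+1}}$, obtained from the \emph{other} $c-1$ gaps, replaces the upper bound on $g_{j'}$ you were aiming for, and no skew correction ever enters.

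A smaller defect worth noting: your wrap-around term $g_c$ is defined as a future firing-time difference, not as $dist(C_c,C_1,t_0)$. With the paper's definition of $dist$ the $c$ values actually sum to $2(\phi_{c_1}(t_0)-\phi_{c_c}(t_0))$, not to one cycle, so the pigeonhole inequality at $t_0$ is not well-posed as written. Waiting until $C_{j'}$ fires, as the paper does, also dissolves this issue, because then exactly the $c-1$ non-$j'$ gaps are in play and they telescope without any wrap-around.
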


\begin{proof}
Note that following the synchronized cluster procedure, the unified
set of the two synchronized clusters that will be shown to absorb,
are not necessarily a synchronized set of nodes at time $t_0.$
Assume without loss of generality that $C_{j^\prime}$ is the
synchronized cluster with the largest number of nodes, consequent to
running the clustering procedure. Exactly one out of the following
two possibilities takes place at $t_0$:

\begin{enumerate}
\item$\exists i\  (1 \le i\le c),$ such that $dist(C_i
,C_{(i+1)\!\!\!\pmod{c}}, t_0) \le ad(C_i ).$

\item $\forall i\ (1\le i\le c, i\neq j^\prime), dist(C_i ,C_{(i+1)\!\!\!\pmod{c}}, t_0) >
ad(C_i ).$
\end{enumerate}

Consider case 1. Following the protocol, $C_i$ must fire within
$\Cycle$ local-time units of $t_0.$ Observe the first real-time,
denoted $t_i,$ at which $C_i$ fires subsequent to real-time $t_0.$
Assume that $k \ge 0$ is the number of distinct inputs that causes
the Counter of at least one node in $C_i$ to reach the threshold and
fire (not counting the fire from nodes in $C_i$ itself). If $k>f$
then at least one correct node outside of $C_i$ caused some node in
$C_i$ to fire. This correct node must belong to some synchronized
cluster which is not $C_i.$ We denote this synchronized cluster
$C_x$ as its identity is irrelevant for the sake of the argument. We
assumed that at least one node in $C_i$ fired due to a node in
$C_x.$ Following Lemma~\ref{lem:l1} the rest of the nodes in $C_i.$
will follow as well, as a synchronized cluster is in particular a
synchronized set of nodes. This yields a new $dist(C_x ,C_i,\;..)$
of at most $3d.$ Following the same arguments as in Step 2 of
Lemma~\ref{lem:abs_lem}, $C_x$ and $C_i$ hence absorb. Therefore the
objective is reached. Hence assume that $k\le f$ and that $C_i$ did
not absorb with any preceding synchronized cluster. Thus, the last
real-time that $C_{(i+1)\!\!\!\pmod{c}}$ fired, denoted
$t_{C_{i+1}-fired},$ was before or equal to real-time $t_0,$ i.e.
$t_{C_{i+1}-fired} \le t_0 \le t_i$ and $dist(C_i
,C_{(i+1)\!\!\!\pmod{c}}, t_0) \le ad(C_i ).$ By
Theorem~\ref{thm:abs_thm}, $C_i$ will absorb
$C_{(i+1)\!\!\!\pmod{c}}.$\\

Consider case 2. We do not assume that $dist(C_{j^\prime}
,C_{(j^\prime+1)\!\!\!\pmod{c}}, t_0) > ad(C_{j^\prime}).$ Assume
that there is no absorbance until $C_{j^\prime}$ fires (otherwise
the claim is proven). Let $t_{j^\prime}$ denote the real-time at
which the first node in $C_{j^\prime}$ fires, at which
$\phi_{c_{j^\prime}}(t_{j^\prime})=0.$ There are two possibilities
at $t_{j^\prime}$:
\begin{itemize}
\item[2a.] $\exists i (1\le  i\le c),$ such that at $t_{j^\prime},
dist(C_i,C_{(i+1)\!\!\!\pmod{c}}, t_{j^\prime})\le ad(C_i ).$

\item[2b.] $\forall i (1\le i\le c, i\neq {j^\prime}), dist(C_i ,C_{(i+1)\!\!\!\pmod{c}}, t_{j^\prime}) >
ad(C_i ).$
\end{itemize}

Consider case 2a. This case is equivalent to case 1. The last
real-time that $C_{(i+1)\!\!\!\pmod{c}}$ fired, denoted
$t_{C_{i+1}-fired},$ was before or equal to real-time
$t_{j^\prime}.$ Denote $t_i$ the real-time at which the first node
of $C_i$ fires. Thus, $t_{C_{i+1}-fired} \le t_{j^\prime} \le t_i$
and $dist(C_i ,C_{(i+1)\!\!\!\pmod{c}}, t_0)\le ad(C_i ).$ By
Theorem~\ref{thm:abs_thm}, $C_i$ will absorb
$C_{(i+1)\!\!\!\pmod{c}}.$\\

Consider case 2b. We wish to calculate
$\phi_{c_{j^\prime+1}}(t_{j^\prime})$ and from this deduce the upper
bound on the threshold level of the first node in
$C_{(j^\prime+1)\!\!\!\pmod{c}}$ at real-time $t_{j^\prime}.$ We
first want to point out that
\begin{eqnarray}
\phi_{c_{j^\prime+1}}(t_{j^\prime})>\sum_{j=1,j\ne
j^\prime}^{c}ad(C_j) \; . \label{eq:calc_phi_ciplus1}
\end{eqnarray}
This stems from the fact that $C_{j^\prime}$ has just fired and that
$C_{j^\prime}$ and $C_{(j^\prime+1)\!\!\!\pmod{c}}$ are adjacent
synchronized clusters which implies that $$\forall i (1\le i\le c,
i\neq j^\prime\!+\!1), \phi_{c_{j^\prime+1}}(t_{j^\prime}) >
\phi_{c_i}(t_{j^\prime}).$$ Recall that
$\phi_{c_{j^\prime}}(t_{j^\prime})=0.$ From the case considered in
2b we have that $$\forall i (1\le i\le c, i\neq
{j^\prime}),\;dist(C_i ,\;C_{(i+1)\!\!\!\pmod{c}}, t_{j^\prime})
> ad(C_i ).$$
Thus Eq.~\ref{eq:calc_phi_ciplus1} follows. Following
Eq.~\ref{eq:e14} and Eq.~\ref{eq:calc_phi_ciplus1} we get:

\begin{eqnarray}
\phi_{c_{j^\prime+1}}(t_{j^\prime})>\sum_{j=1,j\ne
j^\prime}^{c}ad(C_j) \ge \frac{1}{1-\rho}\cdot \Cycle\enspace -
\frac{1}{1-\rho}\sum_{g=1}^{n_{j^\prime}}\!\!R_g\; .
\label{eq:calc_phi_cjplus1}
\end{eqnarray}

In the worst case the nodes of $C_{(j^\prime+1)\!\!\!\pmod{c}}$ are
slow. Thus at real-time $t_{j^\prime}$ they have measured, from
their last pulse, at least $(1-\rho)\cdot
\phi_{c_{j^\prime+1}}(t_{j^\prime}) = (1-\rho)\cdot
[\frac{1}{1-\rho}\cdot \Cycle\enspace -
\frac{1}{1-\rho}\sum_{g=1}^{n_{j^\prime}}\!\!R_g]=\sum_{g=n_{j^\prime+1}}^{n+1}\!\!R_g$
local-time units. Thus, following Property 1, the first node in
$C_{(j^\prime+1)\!\!\!\pmod{c}}$ receives the fire from
$C_{j^\prime}$ and increment its Counter to at least $n_{j^\prime}$
in a threshold level which is less or equal to $n_{j^\prime}$ and
will thus fire as well. Following Lemma~\ref{lem:l1} the rest of the
synchronized cluster will follow as well. This yields a new
$dist(C_{j^\prime} ,C_{(j^\prime+1)\!\!\!\pmod{c}},\;..)$ of at most
$3d.$ Following the same arguments as in Step 2 of
Lemma~\ref{lem:abs_lem}, $C_{j^\prime}$ and
$C_{(j^\prime+1)\!\!\!\pmod{c}}$ hence absorb.

Thus at least two synchronized clusters will absorb within $2\cdot
\cycle$ of $t_0$ which concludes the proof.  \end{proof}

The following theorem assumes the worst case of $n=3f+1.$

\begin{theorem}\label{thm:convergence} (Convergence) Within at most $2(2f+1)\!\cdot \cycle$ real-time units
the system reaches a synchronized\_pulse\_state.
\end{theorem}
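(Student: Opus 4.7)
The plan is to iterate Theorem~\ref{thm:t2} starting from the worst possible clustering. Under the assumption $n=3f+1$, there are $n-f = 2f+1$ correct nodes, so at any time the correct nodes are partitioned into at most $2f+1$ synchronized clusters (Observation~\ref{cor:synch_clu} and~\ref{cor:synch_clu2}). A synchronized\_pulse\_state of the system is, by definition, reached exactly when all correct nodes form a single synchronized set, i.e.\ when only one synchronized cluster remains. Since each absorbance event merges two synchronized clusters into one, reducing the cluster count by at least one, it suffices to bound the real-time cost of each absorbance and multiply by the maximum number of absorbances.

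First I would fix a real-time $t_0$ and apply the synchronized clustering procedure of Subsection~\ref{ssec:notations}. If there is only one cluster, the system is already in a synchronized\_pulse\_state and there is nothing to prove. Otherwise I would verify that Eq.~\ref{eq:e14} holds for any clustering with $c>1$: this is exactly what Property~7 (proved to hold for $REF$ in Lemma~\ref{lem:REF_prop_7}) delivers via the derivation at Eq.~\ref{eq:e141}. With Eq.~\ref{eq:e14} in hand, Theorem~\ref{thm:t2} guarantees that within $2\cdot\cycle$ real-time units, at least one pair of synchronized clusters absorbs, producing a new (coarser) clustering.

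The key subtlety is that after an absorbance the merged set must continue to function as a synchronized cluster in all subsequent arguments; Lemma~\ref{lem:l1} supplies exactly this, since once two synchronized clusters have absorbed, their union is a synchronized set of nodes and stays so for all later times. Consequently the situation after the first absorbance is structurally identical to the initial one, but with at least one fewer cluster, and (if more than one cluster remains) Property~7/Eq.~\ref{eq:e14} applies again. Iterating this argument, the cluster count strictly decreases every $2\cdot\cycle$ real-time units until a single cluster remains.

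Finally I would count: starting from at most $2f+1$ clusters and ending at one, at most $2f$ absorbance steps are needed, each costing at most $2\cdot\cycle$ real-time units, so the total convergence time is at most $4f\cdot\cycle \le 2(2f+1)\cdot\cycle$. The main obstacle is the bookkeeping across iterations, namely checking that Property~7 remains applicable after each merge (it does, because the hypothesis of Property~7 is a statement about any clustering of the $n-f$ correct nodes into $c>1$ synchronized clusters, independent of history) and that the already-absorbed clusters do not dissolve (secured by Lemma~\ref{lem:l1}). Given these, the convergence bound follows.
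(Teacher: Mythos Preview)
Your proposal is correct and follows essentially the same approach as the paper: both argue that under $n=3f+1$ there are at most $n-f=2f+1$ synchronized clusters, and then iterate Theorem~\ref{thm:t2} (each application consuming at most $2\cdot\cycle$ real-time and reducing the cluster count by at least one) until a single synchronized set remains. Your write-up is in fact more explicit than the paper's terse proof, spelling out why Property~7 re-applies after each merge and invoking Lemma~\ref{lem:l1} to keep absorbed clusters intact; your count of $4f\cdot\cycle\le 2(2f+1)\cdot\cycle$ is slightly sharper than, but consistent with, the stated bound.
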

\begin{proof}
Assume that $n=3f+1.$ Thus, the maximal number of synchronized
clusters is $2f+1,$ and since following Theorem~\ref{thm:t2} at
least two synchronized clusters absorb in every two cycles we obtain
the bound.
\end{proof}

\section{Analysis of the Algorithm and Comparison to Related Algorithms}
\label{sec:s5}\ignore{-2mm}

The protocol operates in two epochs: In the first epoch there is no
limitations on the number of failures and faulty nodes. In this
epoch the system might be in any state. In the second epoch there
are at most $f$ nodes that may behave arbitrarily at the same time,
from which the protocol may start to converge. Nodes may fail and
recover and nodes that have just recovered need time to synchronize.
Therefore, we assume that eventually we have a window of time within
which the turnover between faulty and non-faulty nodes is
sufficiently low and within which the system inevitably converges
(Theorem~\ref{thm:t2}).

{\bf Authentication and fault ratio:} The algorithm does not require
the power of unforgeable signatures, only an equivalence to an
authenticated channel is required. Note that the shared memory model
(\cite{DolWelSSBYZCS04}) has an implicit assumption that is
equivalent to an authenticated channel, since a node ``knows'' the
identity of the node that wrote to the memory it reads from. A
similar assumption is also implicit in many message passing models
by assuming a direct link among neighbors, and as a result, a node
``knows'' the identity of the sender of a message it receives.

Many fundamental problems in distributed networks have been proven
to require $3f+1$ nodes to overcome $f$ concurrent Byzantine faults
in order to reach a deterministic solution without
authentication~\cite{Impossibility86,R9,DHSS95,R12}. We have not
shown this relationship to be a necessary requirement for solving
the ``Pulse Synchronization'' problem but the results for related
problems lead us to believe that a similar result should exist for
the ``Pulse Synchronization'' problem.

There are algorithms that have no lower bound on the number of nodes
required to handle $f$ Byzantine faults, but unforgeable signatures
are required as all the signatures in the message are validated by
the receiver~\cite{DHSS95}. This is costly time-wise, it increases
the message size, and it introduces other limitations, which our
algorithm does not have. Moreover, within the self-stabilizing
paradigm, using digital signatures to counter Byzantine nodes
exposes the protocols to ``replay-attack'' which might empty its
usefulness.

{\bf  Convergence time:} We have shown in \cite{DDPBYZ-CS03} that
self-stabilizing Byzantine clock synchronization can be derived from
self-stabilizing Byzantine pulse synchronization. Conversely,
self-stabilizing Byzantine clock synchronization can be used to
trivially produce self-stabilizing Byzantine pulse synchronization.
Thus the two problems are supposedly equally hard. The only
self-stabilizing Byzantine clock synchronization algorithms besides
 \cite{DDPBYZ-CS03} are found
in~\cite{DolWelSSBYZCS04}. The randomized self-stabilizing Byzantine
clock synchronization algorithm
 published there synchronizes in $M\cdot2^{2(n-f)}$ steps,
where $M$ is the upper bound on the clock values held by individual
processors. The algorithm uses message passing, it allows transient
and permanent faults during convergence, requires at least $3f+1$
processors, but utilizes a global pulse system. An additional
algorithm in~\cite{DolWelSSBYZCS04}, does not use a global pulse
system and is thus partially synchronous similar to our model. The
convergence time of the latter algorithm is $O((n-f){n^{6(n-f)}}).$
This is drastically higher than our result, which has a cycle length
of $O(f^2)\cdot d$ time units and converges within $2(2f+1)$ cycles.
The convergence time of the only other correct self-stabilizing
Byzantine pulse synchronization algorithm \cite{new-pulse-tr} has a
cycle length of $O(f)\cdot d$ time units and converges within 6
cycles.

{\bf Message and space complexity: } The size of each message is
$O(logn)$ bits. Each correct node multicasts exactly one message per
cycle. This yields a message complexity of at most $n$ messages per
cycle. The system's message complexity to reach synchronization from
any arbitrary state is at most $2n(2f+1)$ messages per
synchronization from any arbitrary initial state. The faulty nodes
cannot cause the correct nodes to fire more messages during a cycle.
Comparatively, the self-stabilizing clock synchronization algorithm
in~\cite{DolWelSSBYZCS04} sends $n$ messages during a pulse and thus
has a message complexity of $O(n(n-f)n^{6(n-f)}).$ This is
significantly larger than our message complexity irrespective of the
time interval between the pulses. The message complexity of the only
other correct self-stabilizing Byzantine pulse synchronization
\cite{new-pulse-tr} equals $O(n^3)$ per cycle.

The space complexity is $O(n)$ since the variables maintained by the
processors keep only a linear number of messages recently received
and various other small range variables. The number of possible
states of a node is linear in $n$ and the node does not need to keep
a configuration table.

The message broadcast assumptions, in which every message, even from
a faulty node, eventually arrives at all correct nodes, still leaves
the faulty nodes with certain powers of multifaced behavior since we
assume nothing on the order of arrival of the messages. Consecutive
messages received from the same source within a short time window
are ignored, thus, a faulty node can send two concomitant messages
with differing values such that two correct nodes might receive and
relate to different values from the same faulty node.

{\bf Tightness of synchronization:} In the presented algorithm, the
invocation of the pulses of the nodes will be synchronized to within
the bound on the relay time of messages sent and received by correct
nodes. In the broadcast version, this bound on the relay time equals
$d$ real-time units. Note that the lower bound on clock
synchronization in completely connected, fault-free
networks~\cite{R21} is $d(1-1/n).$ We have shown in
Section~\ref{sec:IAGree} how the algorithm can be executed in
non-broadcast networks to achieve a synchronization tightness of
$\sigma=3d.$ Comparatively, the clock synchronization algorithm
of~\cite{DHSS95} reaches a synchronization tightness typical of
clock synchronization algorithms of $d(1+\rho) + 2\rho(1+\rho)\cdot
R,$ where $R$ is the time between re-synchronizations. The second
Byzantine clock synchronization algorithm in~\cite{DolWelSSBYZCS04}
reaches a synchronization tightness which is in the magnitude of
$(n-f)\cdot d(1+\rho).$ This is significantly less tight than our
result. The tightness of the self-stabilizing Byzantine pulse
synchronization in \cite{new-pulse-tr} equals $3d$ real-time units.

{\bf Firing frequency bound:} The firing frequency upper bound
during normal steady-state behavior is around twice that of the
endogenous firing frequency of the nodes. This is because $\cyclemin
\ge \frac{\Cycle}{2}.$ This bound is influenced by the fraction of
faulty nodes (the sum of the first $f$ threshold steps relative to
\Cycle). For $n=3f+1$ this translates to $\approx \frac{1}{2}
\Cycle.$ Thus, if required, the firing frequency bound can be closer
to the endogenous firing frequency of $1\cdot \Cycle$ if the
fraction of faulty nodes is assumed to be lower. For example, for a
fraction of fault nodes of $f=\frac{n}{10},$ the lower bound on the
cycle length, $\cyclemin, $ becomes approximately $8/9$ that of the
endogenous cycle length. $\cyclemax = \Cycle\cdot (1+\rho)$
real-time units.

\section{Discussion}\label{sec:discussion}\ignore{-2mm}
We developed and presented the ``Pulse Synchronization'' problem in
general, and an efficient linear-time self-stabilizing Byzantine
pulse synchronization algorithm, {\sc bio-pulse-synch}, as a
solution in particular. The pulse synchronization problem poses the
nodes with the challenge of invoking regular events synchronously.
The system may be in an arbitrary state in which there can be an
unbounded number of Byzantine faults. The problem requires the
pulses to eventually synchronize from any initial state
 once the bound on the permanent number of Byzantine
failures is less than a third of the network. The problem resembles
the clock synchronization problem though there is no ``value'' (e.g.
clock time) to agree on, rather an event in time. Furthermore, to
the best of our knowledge, the only efficient self-stabilizing
Byzantine clock synchronization algorithm assumes a background pulse
synchronization module.

The algorithm developed is inspired by and shares properties with
the lobster cardiac pacemaker network; the network elements (the
neurons) fire in tight synchrony within each other, whereas the
synchronized firing pace can vary, up to a certain extent, within a
linear envelope of a completely regular firing pattern.

\ignore{A self-stabilizing pulse synchronization algorithm can be
beneficial to solve TDMA based slot synchronization (in multi access
channels such as cellular phones; slotted ALOHA)~\cite{R10}. A
similar suggestion appears in~\cite{R5}, though without supplying
any fault tolerance. In this problem, time is divided into frames of
fixed length, and each frame contains a certain number of slots.
Frames are repeated periodically, and each participant is
exclusively allocated one slot per frame. It is of paramount
importance to avoid collisions of the slot allocations. If the
participants are not tightly synchronized with respect to the frame
starting point then collisions will occur. The exact starting time
of the frame is of minor importance relative to the synchronization
requirement. Should a central synchronizer be present then the
network would be extremely sensitive to faults in the central
station. Where no central station is available and connectivity
changes rapidly a self-stabilizing distributed synchronization
protocol would make the participants resynchronize without any
exterior intervention, should the participant be out of
synchronization.  Other cases that could benefit from a
self-stabilizing pulse synchronization scheme include global
snapshots; load balancing; distributed inventory count; distributed
scheduling; backup of distributed systems; debugging of distributed
programs; and deadlock detection.}

A number of papers have recently postulated on the similarity
between elements connected with biological robustness and design
principles in engineering \cite{AlonNature99,KitanoNature04}. In the
current paper we have observed and understood the mechanisms for
robustness in a comprehensible and vital biological system and shown
how to make specific use of analogies of these elements in
distributed systems in order to attain high robustness in a
practical manner. The same level of robustness has not been
practically achieved earlier in distributed systems. We postulate
that our result elucidates the feasibility and adds a solid brick to
the motivation to search for and to understand biological mechanisms
for robustness that can be carried over to computer systems.

The neural network simulator SONN (\cite{SONN}) was used in early
stages of developing the algorithm for verification of the protocol
in the face of probabilistic faults and random initial states. It is
worth noting that the previous pulse synchronization procedure found
in \cite{DDPBYZ-CS03} was mechanically verified at NASA LaRC
(\cite{NASA-BYZSS}) which greatly facilitated uncovering its flaw. A
natural next step should thus be to undergo simulation and
mechanical verification of the current protocol that can mimic a
true distributed system facing transient and Byzantine faults.

\ignore{{\bf Possible biological implications:} We may now speculate
on possible new meanings of the refractory function in the
biological context. Could there be some correlation between the
number of faults and their severity that the biological neuronal
network must face and the network size required to reach
synchronization? Could there be some relationship between the shape
of the neuronal refractory function and the size of the biological
neural network? Does the shape of the refractory function of a
single neuron play a role in the networks ability to tolerate faults
or noise? Does the shape of the refractory function of a single
neuron play a role in the networks ability to alter the synchronized
firing pace?}

\ignore{
\noindent \textbf{Acknowledgments.} The authors wish to thank the
anonymous reviewers for helpful comments that significantly improved
the presentation of this paper.
}



\section{Appendix}\label{sec:apdx}\ignore{-2mm}

\ignore{

\subsection{The Problems of Solving the Set of Constraints of Eq.~\ref{eq:prop7}
Analytically} \label{ssec:s7.1}

As mentioned in Section~\ref{ssec:s4.3} each constraint is
determined by a partitioning of the correct nodes into disjoint
synchronized clusters. The calculation of the possible number of
such partitions equals the problem of integer partitioning, i.e. the
number of ways of writing the integer $k$ as a sum of positive
integers irrespective of the order, denoted $P(k)$ in the
literature. The following asymptotic result to the estimation of
$P(k)$ is given in~\cite{R11}:

\begin{equation} P(k) \sim
\frac{1}{4k\sqrt{3}}e^{\pi\sqrt{2k/3}}\enspace .
 \label{eq:e27}
\end{equation}

Some numerical examples derived of Eq.~\ref{eq:e27}:
\[\begin{array}{lcl}
P(4)    &   = &5.\\
P(50)      & = &204226.\\
P(1000) &= &24061467864032622473692149727991.
\end{array}
\]

Clearly solving this set of linear constraints analytically or
conduct Gaussian elimination is unfeasible even for a small $n.$
Therefore, another method for reaching a solution must be found.
Hence, we guessed the solution presented in Theorem~\ref{thm:t3}.

\subsection{Proof of Theorem~\ref{thm:t3} for $n=3f+1$}
\label{ssec:s7.2}

The $\frac{1}{1-\rho}\sum_{g=1}^{n_{j^\prime}}\!\!R_g$ part of
Eq.~\ref{eq:e14} which is associated with the largest synchronized
cluster, contributes a larger number of $R_i$ steps than any
individual $ad$ does. Note that the specific partitioning into
synchronized clusters determines which $R_i$ have non zero
coefficients in a specific constraint. It follows, that in order to
generalize this property we need to assess what clustering yields
the highest index of $R_i$ participating in the constraints and thus
we can conclude what variables always have zero coefficients in all
the constraints. The largest possible synchronized cluster in an
unsynchronized pulse\_state is exposed in the scenario of two
synchronized clusters in which, {\sc wlog,} $n_1 = 1$ and $n_2 =
n-f-1,$ yielding \ad{ad endog variable undefined} $ad_{\mathrm
endog}(C_c) = R_1+\cdots+R_{n-f-1}.$ The second largest synchronized
cluster that can exist (i.e. largest synchronized cluster
represented by an $ad$ in Eq.~\ref{eq:e14}), is when there are two
equally or almost equally sized synchronized clusters, $n_1 = \lceil
(n-f)/2 \rceil$ and $n_2 = \lfloor (n-f)/2\rfloor.$ This implies
that an individual $ad$ equals at most $R_{f+1} +\cdots+R_{f+\lfloor
(n-f)/2\rfloor}.$ The conclusion is that the variables
$R_{\max(f+\lfloor (n-f) /2\rfloor +1, n-f)}\ldots R_{n+1}$ always
have a zero coefficient in any constraint determined by
Eq.~\ref{eq:prop7}. This implies that if $n-f\ge
f+\lfloor\frac{n-f}{2}\rfloor$ then:

There are at most

\begin{equation} n-f-1 \mbox{ variables with a
non-zero coefficient.} \label{eq:prop7}
\end{equation}

On the other hand there are $n-f$ correct nodes, each one
contributing exactly one added value of ``1'' to any constraint
derived from Eq.~\ref{eq:prop7}, therefore:

\begin{equation} \mbox{ The sum of all the coefficients in any
constraint equals exactly:}\ n-f\enspace . \label{eq:e21}
\end{equation}

Let us observe the scenario of two synchronized clusters, $C_1$ and
$C_2,$ such that the cardinality of the largest synchronized cluster
of the two, {\sc wlog} $C_2,$ $n_2= f.$ This scenario yields the
constraint $(R_1+\cdots +R_f )  +  (R_{f+1}+\cdots +R{f+n_1})\ge
\Cycle$ which we will rewrite as
\begin{equation}
R_1+\cdots+R_{f+n_1} \ge R_1+\cdots+R_{n+1}\enspace .
\label{eq:e22}
\end{equation}
$f+ n_1 < n+1,$ since $f+n_1+n_2 = n.$ Therefore due to the
non-negativity of $R_i$ Eq.~\ref{eq:e22} is a contradiction. The
same arguments and conclusions hold for any $n_2\le  f$ and its
respective constraint.

To overcome this problem we need the additional constraint, $n_2>f,$
on the largest synchronized cluster. The largest possible
synchronized cluster, $n_2,$ reaches a minimum for $n_1= n_2$
(otherwise $n_1$ would have represented the largest of the two
synchronized clusters). This additional constraint is equivalent to
$2n_2 >2f \Rightarrow  n-f > f$ which gives the relationship:
\begin{equation}
n>3f\enspace . \label{eq:e23}
\end{equation}

\noindent{\bf Proof of Theorem~\ref{thm:t3} for $n=3f+1$}\\
$ \Rightarrow
f+\lfloor\frac{n-f}{2}\rfloor=\max\left(f+\lfloor\frac{n-f}{2}\rfloor
,n-f-1\right):$

The adjusted constraint:

Property 3.1: $R_i > 3d,$  $i=1\ldots \frac{4n-1}{6}.$

\begin{theorem}
\label{thm:t3.1} Given $n=3f+1,$ and $\Cycle,$ the refractory
function:

\begin{equation} R_i=\left\{\begin{array}{ll}
\frac{6}{4n+5}(1+\frac{2\rho}{1+\rho}(n-f))\,\Cycle&i=1\ldots \frac{4n-1}{6}\\
\frac{6R_1-R_{n+1}-\frac{2\rho}{1+\rho}(n-f)\,\Cycle}{\frac{2n+1}{6}}&
i=\frac{4n-1}{6}+1\ldots n\\
2d(1+\rho)\frac{(\frac{1+\rho}{1-\rho})^{n+3}-1}{(\frac{1+\rho}{1-\rho})-1}&i=n+1,
\end{array}
\right. \label{eq:e28}
\end{equation}
constitutes a solution to the set of linear equations on $REF.$
\end{theorem}

\begin{proof}
The steps of the proof:
\begin{enumerate}
\item Prove that the solution renders all the constraints implied
by Eq.~\ref{eq:prop7} equally restrictive and show that they are
satisfied by the proposed solution of Eq.~\ref{eq:e28} \item Prove
that $R_1+\cdots+ Rn+1 = \Cycle.$
\end{enumerate}

{\bf Step 1.} Incorporating the solution, Eq.~\ref{eq:e28}, into
Eq.~\ref{eq:prop7} implies that all the variables with non-zero
coefficients $R_1\ldots R_{(4n-1)/6}$ are equal. This reduces the
set of all possible equations determined by Eq.~\ref{eq:prop7} to a
single equation: $\left(\frac{4n-1}{6}+1\right)R_i \ge
(1+\frac{2\rho}{1+\rho}(n-f))\,\Cycle,$ for $i,$ $1\le i\le
\frac{4n-1}{6}.$ Thus, Eq.~\ref{eq:e28} satisfies:
\[
\left(\frac{4n-1}{6}+1\right)R_i=\frac{4n+5}{6}\,\frac{6(1-\rho)\,\Cycle}{4n+5}
= (1+\frac{2\rho}{1+\rho}(n-f))\,\Cycle\enspace .
\]

{\bf Step 2.}
\begin{eqnarray*}
\lefteqn{R_1+\cdots+ R_{n+1} = (R_1+\cdots+ R_{(4n-1)/6}) +
(R_{(4n-1)/6)+1}\!\!+\cdots+ R_n)\!+ \!R_{n+1}\!\! =}\\
&&\frac{4n-1}{6}\,{6(1+\frac{2\rho}{1+\rho}(n-f))\,\Cycle}+
\frac{2n+1}{6}\,\frac{6R_1-R_{n+1}+\rho\,\Cycle}{\frac{2n+1}{6}}+R_{n+1}=\\
&&\frac{4n+5-6}{6}\,\frac{6}{4n+5}(1+\frac{2\rho}{1+\rho}(n-f))\,\Cycle+6R_1-R_{n+1}-
\\
&&\hspace{7cm}\frac{2\rho}{1+\rho}(n-f)\,\Cycle+R_{n+1} =\\
&&(1+\frac{2\rho}{1+\rho}(n-f))\,\Cycle-
6\left(\frac{6}{4n+5}(1+\frac{2\rho}{1+\rho}(n-f))\,\Cycle\right)+\\
&&\hspace{4.7cm}+6R_1-R_{n+1}-\frac{2\rho}{1+\rho}(n-f)\,\Cycle+R_{n+1}=\\
&&
(1+\frac{2\rho}{1+\rho}(n-f))\,\Cycle-\frac{2\rho}{1+\rho}(n-f)\,\Cycle=\Cycle\enspace
.
\end{eqnarray*}

Therefore, Eq.~\ref{eq:e28} comprises a solution to the set of
constraint for the case of $n=3f+1.$ This solution complies with
Properties 2, 3.1, 4 and 5 for a certain range of $f$ and $\Cycle$
(cf. Eq.~\ref{eq:f-vs-cycle}).  \end{proof}

} 

\noindent{\bf Proof of correctness of the {\sc summation} procedure:}\\

\begin{lemma}\label{lem:powerseries} For $k \in \mathbb{N},\;k \ge 0,$ $$\tau(k)\cdot \frac{1+\rho}{1-\rho}+2d(1+\rho)=\tau(k+1)\enspace .$$

\end{lemma}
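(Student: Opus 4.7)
The plan is to unfold the definition of $\tau$ on both sides and verify the identity by direct algebraic simplification, exploiting the geometric-series structure. To keep the calculation compact, I would first introduce the abbreviation $r \equiv \frac{1+\rho}{1-\rho}$, so that by definition $\tau(k) = 2d(1+\rho)\cdot\frac{r^{k+1}-1}{r-1}$.

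Substituting into the left-hand side gives
\[
\tau(k)\cdot r + 2d(1+\rho) = 2d(1+\rho)\cdot\frac{r(r^{k+1}-1)}{r-1} + 2d(1+\rho).
\]
Putting everything over the common denominator $r-1$, the numerator becomes $r^{k+2} - r + (r-1) = r^{k+2} - 1$. Factoring out $2d(1+\rho)$ then yields exactly $2d(1+\rho)\cdot\frac{r^{k+2}-1}{r-1}$, which by definition equals $\tau(k+1)$.

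There is no real obstacle here; this is a one-line identity of the form ``sum of a geometric series with one more term,'' and the only bookkeeping is to make sure the extra additive term $2d(1+\rho)$ is correctly absorbed into the telescoping numerator. The lemma can be stated as a purely algebraic fact independent of the interpretation of $\tau(k)$ as a time window, so no properties of the protocol or of $REF$ are needed in the argument.
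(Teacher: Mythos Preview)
Your proof is correct and follows essentially the same approach as the paper: both exploit the geometric-series structure of $\tau$, with the paper writing out the partial sum $\sum_{i=0}^{k}(\frac{1+\rho}{1-\rho})^i$ explicitly and shifting the index, while you work directly with the closed form $\frac{r^{k+1}-1}{r-1}$ and clear denominators. The two arguments are interchangeable one-line verifications of the same identity.
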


\begin{proof} \[ \tau(k)\cdot \frac{1+\rho}{1-\rho}+2d(1+\rho) =
[2d(1+\rho)\frac{(\frac{1+\rho}{1-\rho})^{k+1}-1}{(\frac{1+\rho}{1-\rho})-1}]\cdot
\frac{1+\rho}{1-\rho}+2d(1+\rho) \]
\[ = [2d(1+\rho)\sum_{i=0}^{k}(\frac{1+\rho}{1-\rho})^i]\cdot
\frac{1+\rho}{1-\rho}+2d(1+\rho) =
[2d(1+\rho)\sum_{i=1}^{k+1}(\frac{1+\rho}{1-\rho})^i]+2d(1+\rho)
\]
\[ = 2d(1+\rho)\sum_{i=0}^{k+1}(\frac{1+\rho}{1-\rho})^i  = 2d(1+\rho)\frac{(\frac{1+\rho}{1-\rho})^{k+2}-1}{(\frac{1+\rho}{1-\rho})-1} = \tau(k+1)
\enspace . \tre  \]

\end{proof}

\begin{lemma}\label{lem:messageAgeTau} Let a correct node $q$ receive a message
$M_p$ from a correct node $p$ at local-time $t_{arr}.$ For every one
of $p$'s stored messages $(S_r, t^\prime)$ that is accounted for in
$Counter_{M_p},$ then at $q,$ from some time $t$ in the local-time
interval $[t_{arr}, t_{arr} + d(1+\rho )]$ and at least until the
end of the interval:

$$MessageAge(t, q, r) \le \tau(Counter_{M_p}+1)\;.$$

\end{lemma}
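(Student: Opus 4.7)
The plan is to bound the age at $q$ of $r$'s stored message by chaining three real-time gaps and then converting back to $q$'s local clock, finally using Lemma~\ref{lem:powerseries} to collapse the bound into a clean $\tau(\cdot)$ expression.

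First, I would apply the {\sc prune} procedure at $p$ at local-time $t_{{\mathrm send}\,M_p}$. Because $(S_r, t')$ sits in $p$'s Counted Set and contributes to $Counter_{M_p}$, the $CSAge$ invariant forces $MessageAge(t_{{\mathrm send}\,M_p}, p, r) \le \tau(Counter_{M_p}-1)$. Dividing by the slowest clock rate, the real-time elapsed between $p$'s receipt of $r$'s message and $p$'s sending of $M_p$ is at most $\tau(Counter_{M_p}-1)/(1-\rho)$.

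Second, I would bridge the gap between $p$'s and $q$'s vantage points using the network model. Since $r$'s original message arrives at both correct nodes within $d$ real-time of being sent, the real-time gap between $p$'s and $q$'s receipts of $r$'s message is at most $d$ in either direction; likewise, $q$ receives $M_p$ within $d$ real-time of $p$'s sending it. Combining the three intervals, the real-time elapsed at $q$ from receiving $r$'s message until receiving $M_p$ is at most $2d + \tau(Counter_{M_p}-1)/(1-\rho)$. Converting to $q$'s local clock by multiplying by $(1+\rho)$, I obtain
\[
MessageAge(t_{arr}, q, r) \;\le\; 2d(1+\rho) \;+\; \tau(Counter_{M_p}-1)\cdot\tfrac{1+\rho}{1-\rho} \;=\; \tau(Counter_{M_p}),
\]
where the last equality is exactly Lemma~\ref{lem:powerseries} applied with $k = Counter_{M_p}-1$.

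Third, I would extend the bound to cover the entire interval. Within the local-time interval $[t_{arr}, t_{arr}+d(1+\rho)]$, the message age grows by at most $d(1+\rho)$, so at its endpoint the age is at most $\tau(Counter_{M_p}) + d(1+\rho)$. By Lemma~\ref{lem:powerseries}, $\tau(Counter_{M_p}+1) - \tau(Counter_{M_p}) = \tau(Counter_{M_p})\cdot \tfrac{2\rho}{1-\rho} + 2d(1+\rho) \ge 2d(1+\rho) \ge d(1+\rho)$, so the target inequality $MessageAge(t, q, r) \le \tau(Counter_{M_p}+1)$ holds throughout.

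The main obstacle I expect is the subtle case in which $q$ happens to receive $M_p$ \emph{before} $r$'s broadcast has reached $q$; then at $t_{arr}$ itself there is no stored message from $r$ at $q$, and $MessageAge(t_{arr},q,r)=\infty$. This is precisely why the lemma is phrased with ``from some time $t$'' rather than for every $t$ starting at $t_{arr}$. To handle this, I would observe that $r$'s message must arrive at $q$ no more than $d$ real-time (hence at most $d(1+\rho)$ local-time) after $t_{arr}$, since $t_r^q \le t_r^p + d \le t_s^p + d$, which is at most $t_{arr}$ in real-time plus $d$. Hence the ``starting time'' $t$ lies within the stated interval, and from that point on the argument of the previous paragraph applies, completing the proof.
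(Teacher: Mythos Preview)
Your proof is correct and follows essentially the same approach as the paper: chain the local age bound at $p$ (from the {\sc prune} invariant) with the two network hops, convert clock rates, and collapse via Lemma~\ref{lem:powerseries}. The only noteworthy difference is that you start from the sharper bound $CSAge \le \tau(Counter_{M_p}-1)$ read directly off the {\sc prune} procedure, obtain $MessageAge(t_{arr},q,r)\le \tau(Counter_{M_p})$ already at $t_{arr}$, and then spend an extra $d(1+\rho)$ to cover the rest of the interval; the paper instead starts from the looser $\tau(Counter_{M_p})$ at $p$ and lands on $\tau(Counter_{M_p}+1)$ in one application of Lemma~\ref{lem:powerseries} at the interval's endpoint. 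Your handling of the ``$q$ has not yet received $r$'s message at $t_{arr}$'' case is exactly the reason for the ``from some time $t$'' phrasing and matches the paper's treatment.
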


\begin{proof} Following the {\sc prune} procedure at $p,$ the oldest of its stored
messages accounted for in $Counter_{M_p}$ was at most
$\tau(Counter_{M_p})$ time units old on $p$'s clock at the time it
sent $M_p.$ This oldest stored message could have arrived at $q,$
$\delta(1+\rho )$ local-time units on $q$'s clock, prior to its
arrival at $p.$ Within this time $p$ should also have received all
the messages accounted for in $M_p.$ Another $\pi(1+\rho )$
local-time units could then have passed on $q$'s clock until $M_p$
was sent. $M_p$ could have arrived at $q,$ $\delta(1+\rho )$ time
units on $q$'s clock after it was sent by $p.$ By this time $q$
would also have received all the messages that are accounted for in
$M_p,$ irrespective if $q$ had previous messages from the same
nodes. Another $\pi(1+\rho )$ time units can then pass on $q$'s
clock until all messages are processed. Thus, in the worst case that
node $p$ is slow and node $q$ is fast and by
Lemma~\ref{lem:powerseries}, for every stored message accounted for
in $Counter_{M_p}, \exists t\in [t_{arr}+d(1+\rho)],$ we have:

\vspace{-6mm}\begin{eqnarray*}
\lefteqn{MessageAge(t, q, r) \le MessageAge(t_{arr}+d(1+\rho), q, r)}\\
&&\hspace{1cm}
 \le \tau(Counter_{M_p})\cdot\frac{1+\rho}{1-\rho}+\delta(1+\rho ) +
\pi(1+\rho ) +
\delta(1+\rho ) + \pi(1+\rho ) \\
&&\hspace{1cm}
=\tau(Counter_{M_p})\cdot\frac{1+\rho}{1-\rho}+2d(1+\rho)=\tau(Counter_{M_p}+1)\enspace
.
\end{eqnarray*}

\mbox{\ } \end{proof}

\begin{lemma}\label{lem:l4} The Counter of a correct node cannot
exceed $n$ and a correct node will not send a Counter that exceeds
$n-1.$
\end{lemma}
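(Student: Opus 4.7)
The argument splits cleanly along the two clauses of the lemma.

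For the first clause ($Counter_q\le n$), I would just unfold the definitions. Every call to {\sc make-accountable} or {\sc prune} ends with the assignment $Counter:=\|CS\|$, and by construction CS is a set of \emph{distinct} stored messages: Timeliness Condition~2 of the {\sc timeliness} procedure removes every older $p$-entry from the MessagePool as soon as a second message from the same $p$ is observed, so at most one CS entry per sender survives at any time. Since the network has exactly $n$ nodes, $\|CS\|\le n$ and therefore $Counter_q\le n$.

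For the second clause (a correct node never broadcasts a Counter larger than $n-1$), I would argue inductively over the successive firings of a correct node $q$ that, at the moment $q$ executes the Broadcast statement of {\sc BIO-PULSE-SYNCH}, CS contains only stored messages whose sender is not $q$. The critical observation is that $q$'s immediately preceding broadcast was sent at its previous fire and, in the worst case of a self-loopback channel, echoes back to $q$ within $d$ real-time units. Between that arrival and $q$'s next fire the absolute refractory period of length $R_{n+1}=\tau(n+2)$ (Property~5) must have elapsed on $q$'s clock. Using Lemma~\ref{lem:powerseries} together with Bounded Drift, $\tau(n+2)$ strictly exceeds $\tau(n+1)+d$, so $q$'s self-message has aged past $\tau(n+1)$ on $q$'s local timer before $REF(t)$ can drop to level $n$; consequently {\sc prune} has already moved it out of CS (into RUCS and eventually deleted it) by the time $REF(t)\le n$. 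Hence when the firing predicate $Counter_q\ge REF(t)$ is triggered, every entry of CS has a sender distinct from $q$; with only $n-1$ such senders available, $Counter_q\le n-1$ at the moment $q$ broadcasts.

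\textbf{Main obstacle.} The delicate point is handling the self-loop: the {\sc summation} procedure does not special-case messages from $q$ itself, so one must certify that they are always pruned out of CS before the next fire. This reduces to the slack inequality $R_{n+1}>\tau(n+1)+d$, which is exactly what is built into the definition of $R_{n+1}$ in Eq.~\ref{eq:REF}. Carefully translating the local-time ages used by {\sc prune} into the real-time schedule governed by the refractory function (via the $(1\pm\rho)$ drift bounds) is the one place where the bookkeeping has to be done with some care, and it is also where the induction hypothesis $Counter_{q}\le n-1$ at the previous fire is implicitly used, since only such a self-message can have been classified as timely and placed in CS in the first place.
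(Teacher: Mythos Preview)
Your proposal is correct and follows essentially the same line as the paper: bound $\|CS\|$ by the number of distinct senders for the first clause, and for the second clause show that the node's own stored message is expelled from CS (via the $\tau$-aging in {\sc prune}) before the absolute refractory period $R_{n+1}=\tau(n+2)$ elapses, so that at most $n-1$ senders remain when the firing predicate can next hold. The paper's argument does not need the inductive wrapper you introduce---whether or not the previous self-message was ever classified as timely is immaterial, since the counting argument only requires that \emph{any} self-entry still present would have aged past the {\sc prune} threshold before $REF$ drops to level $n$; you can safely drop the induction hypothesis.
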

\begin{proof}
There can be at most $n$ distinct stored messages in the CS of a
correct node hereby bounding the Counter by $n.$

For a correct node to have a Counter that equals exactly $n$ it
needs its own stored message to be in its CS, as a consequence of a
message it sent. Consider the moment after it sent this message, say
before the node's Counter reached $n,$ that is accounted for in its
CS. This message was concomitant to its pulse invocation and cycle
reset. The node assesses its own message at most $d(1+\rho)$
local-time units after sending it thus, following the {\sc prune}
procedure, its own stored message will decay at most $\tau(n+2) +
d(1+\rho ) < \tau(n+3))=R_{n+1}$ local-time units after it was sent.
Thus at the moment the node reaches threshold level $R_n$ its own
message will already have decayed and the Counter will decrease and
will be at most $n-1,$ implying that any message sent by the node
can carry a Counter of at most $n-1.$  \end{proof}

\begin{lemma}\label{lem:NotInRUCS} A stored message, $(S_r, t^\prime),$
that has been moved to the RUCS of a correct node $q$ up to
$d(1+\rho)$ local-time units subsequent to the event of sending a
message $M_p$ by $p,$ (or was moved at an earlier time) cannot have
been accounted for in $Counter_{M_p}.$
\end{lemma}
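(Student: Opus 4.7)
The plan is to argue by contradiction, leveraging Lemma~\ref{lem:messageAgeTau} and Lemma~\ref{lem:l4} to bound the arrival time $t^\prime$ tightly from above, and contrasting this bound with the lower bound on $t_{move}-t^\prime$ that is forced by the {\sc prune} procedure. I will suppose that $(S_r, t^\prime)$ is moved to $q$'s RUCS at $q$'s local-time $t_{move}$ with $t_{move}$ at most $d(1+\rho)$ local-time units after $p$ sent $M_p$, and that the original message from $r$ that $q$ represents as $(S_r, t^\prime)$ is the one accounted for in $Counter_{M_p}$. From the {\sc prune} procedure, being moved to RUCS requires $MessageAge(t_{move},q,r) > \tau(n+1)$; Timeliness Condition 2 keeps at most one entry per sender in the MessagePool, so $(S_r, t^\prime)$ is the unique witness of this MessageAge immediately before the move, giving $t_{move}-t^\prime > \tau(n+1)$. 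Since $M_p$ reaches $q$ within $d$ real-time units of being sent (at most $d(1+\rho)$ local-time on $q$'s clock), the hypothesis also yields $t_{move} \le t_{arr}+d(1+\rho)$.

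The second step is to apply Lemma~\ref{lem:messageAgeTau}: because the message is accounted for in $Counter_{M_p}$, we have $MessageAge(t_{arr}+d(1+\rho),q,r) \le \tau(Counter_{M_p}+1)$, and by Lemma~\ref{lem:l4} this is bounded by $\tau(n)$. In the principal case, where no newer message from $r$ has arrived at $q$ between $t_{move}$ and $t_{arr}+d(1+\rho)$, the MessageAge at $t_{arr}+d(1+\rho)$ is exactly $t_{arr}+d(1+\rho)-t^\prime$. This gives $t^\prime \ge t_{arr}+d(1+\rho)-\tau(n)$, while Step~1 gives $t^\prime < t_{move}-\tau(n+1) \le t_{arr}+d(1+\rho)-\tau(n+1)$. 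Together these force $\tau(n+1)<\tau(n)$, contradicting the strict monotonicity of $\tau$ (immediate from Lemma~\ref{lem:powerseries}).

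The remaining subcase, and the main obstacle, is when a newer stored message $(S_r, t^{\prime\prime})$ from $r$ arrives at $q$ within $[t_{move},\, t_{arr}+d(1+\rho)]$. In that case the MessageAge at $q$ no longer points at $(S_r, t^\prime)$ and the statement of Lemma~\ref{lem:messageAgeTau} alone is not directly applicable. I plan to close this case by one of two symmetric routes. The first is to reuse the per-message timing chain inside the proof of Lemma~\ref{lem:messageAgeTau} (bounding the age at $p$ of its accounted-for stored message by $\tau(Counter_{M_p})$ and adding the $\delta$-and-$\pi$ delays in both directions): this yields $t_{arr}+d(1+\rho)-t^\prime \le \tau(Counter_{M_p}+1) \le \tau(n)$ for the specific $t^\prime$, irrespective of whether $(S_r, t^\prime)$ is still the most-recent entry at $q$, and contradicts the RUCS lower bound exactly as above. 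The second route is to invoke the broadcast assumption to conclude that the newer arrival at $q$ reaches $p$ within $\delta$ real-time, and, given the timing constraints on $t^{\prime\prime}\le t_{arr}+d(1+\rho)$ together with the real-time at which $p$ sent $M_p$, must reach $p$ before $p$ sent $M_p$; Timeliness Condition 2 at $p$ would then have deleted the older stored message from $p$'s MessagePool, contradicting its being accounted for in $Counter_{M_p}$. The delicate part throughout is the conversion between local-time on $q$'s clock and real-time, which must be handled with the standard $(1\pm\rho)$ drift factors so that the inequalities chain cleanly.
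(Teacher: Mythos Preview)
Your overall strategy is sound and reaches the same contradiction as the paper, but it takes a detour that the paper avoids. The paper does not invoke Lemma~\ref{lem:messageAgeTau} at all: it works directly at the \emph{sender} $p$. From $MessageAge(t_{move},q,r)>\tau(n+1)$ and the broadcast assumption, it bounds $MessageAge(t_{\mathrm{send}\,M_p},p,r)>\tau(n+1)-2d(1+\rho)>\tau(n)$ (the two $d(1+\rho)$ terms account for the possible difference in arrival of $M_r$ at $p$ versus $q$, and for the gap between $t_{\mathrm{send}\,M_p}$ and $t_{move}$). Then the {\sc prune} invariant $CSAge\le\tau(\|CS\|-1)$ forces $Counter_{M_p}\ge n+1$, contradicting Lemma~\ref{lem:l4}. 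This is a two-line timing transfer from $q$ to $p$, with no case split.

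By routing through Lemma~\ref{lem:messageAgeTau} you are forced into the ``newer message from $r$'' subcase, and your route~2 for that subcase does not work as stated. You claim the newer arrival at $q$ must reach $p$ before $p$ sent $M_p$, but the only constraint you have is $t''\in(t_{move},\,t_{arr}+d(1+\rho)]$, and both endpoints can lie strictly after the real-time of $t_{\mathrm{send}\,M_p}$; the broadcast $\delta$-relay then gives no reason for $p$ to have received the newer message before sending. Your route~1, on the other hand, is exactly the paper's direct timing transfer (bounding the age at $p$ via {\sc prune} and pushing it to $q$ through $\delta,\pi$), so invoking it to patch the subcase makes the detour through Lemma~\ref{lem:messageAgeTau} superfluous: you could have started with route~1 and been done.
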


\begin{proof}
Assume that the stored message $(S_r, t^\prime)$ was moved to the
{\sc RUCS} of node $q$ at a local-time $t,$ $d(1+\rho)$ local-time
units subsequent to the event $t_{{\mathrm send}\,M_p}$ at node $p,$
(or it was moved at an earlier time). Thus at $q$ at local-time $t,$
$MessageAge(t, q, r)
> \tau(n+1).$ Therefore at node $p$ at local-time $t_{{\mathrm send}\,M_p},$
$MessageAge(t_{{\mathrm send}\,M_p}, p, r) > \tau(n+1) - 2d(1+\rho)>
\tau(n). $ This is because $p$ could have received the message $M_r$
up to $d(1+\rho)$ local-time units later than $q$ did, and $q$ could
have received $M_p$ up to $d(1+\rho)$ local-time units after it was
sent.

Following the {\sc prune} procedure at $p,$ $(S_r, t")$ would have
been accounted for at the sending time of $M_p$ only if
$Counter_{M_p}\ge  n+1.$ Therefore by Lemma~\ref{lem:l4} node $p$
did not account for the stored message of $r$ in $Counter_{M_p}.$
 \end{proof}

\begin{corollary}\label{lem:l6} A stored message, $(S_r, t^\prime),$ that has decayed at a
correct node $q$ prior to the event of sending a message $M_p$ by
$p,$ cannot have been accounted for in $Counter_{M_p}.$
\end{corollary}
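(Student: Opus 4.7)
The plan is to reduce this corollary directly to Lemma~\ref{lem:NotInRUCS}, exploiting the temporal ordering imposed by the {\sc prune} procedure. Recall that in {\sc prune}, a stored message is first moved into RUCS once $MessageAge > \tau(n+1)$, and only later deleted from RUCS when $MessageAge > \tau(n+2)$. Since $\tau$ is strictly increasing, any stored message that has decayed at node $q$ must, at some strictly earlier local-time, have been moved into RUCS.

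Hence, suppose $(S_r, t')$ has decayed at $q$ prior to the event $t_{{\mathrm send}\,M_p}$ at node $p$. Then $(S_r, t')$ was moved into the RUCS of $q$ at some local-time that is even earlier than that decay time, and in particular strictly earlier than $t_{{\mathrm send}\,M_p}$; a fortiori, it was moved into RUCS earlier than $d(1+\rho)$ local-time units after $t_{{\mathrm send}\,M_p}$. This is exactly the hypothesis of Lemma~\ref{lem:NotInRUCS}, which therefore applies and yields that $(S_r, t')$ cannot have been accounted for in $Counter_{M_p}$.

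There is no serious obstacle here; the only point to check carefully is that the ``moved to RUCS'' event genuinely precedes the ``decayed'' event for the same stored message, which is immediate from the two separate conditions in {\sc prune} together with monotonicity of $\tau$. Thus the corollary is essentially a bookkeeping consequence of the previous lemma.
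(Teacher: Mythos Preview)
Your proposal is correct and follows exactly the route the paper takes: the paper's proof consists of the single sentence that Corollary~\ref{lem:l6} is an immediate corollary of Lemma~\ref{lem:NotInRUCS}, and you have simply spelled out why (decay in {\sc prune} is preceded by the move to RUCS since $\tau(n+1)<\tau(n+2)$, so the hypothesis of Lemma~\ref{lem:NotInRUCS} is satisfied a fortiori).
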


\begin{proof}
Corollary~\ref{lem:l6} is an immediate corollary of
Lemma~\ref{lem:NotInRUCS}.  \end{proof}

\begin{corollary}\label{cor:stateSize} Let a correct node $q$ receive a message
$M_p$ from a correct node $p$ at local-time $t_{arr}.$ Then, at $q,$
from some time $t$ in the local-time interval $[t_{arr}, t_{arr} +
d(1+\rho )]$ and at least until the end of the interval:

$$\|\MessagePool\| \ge Counter_{M_p}+1\enspace .$$

\end{corollary}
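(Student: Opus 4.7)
The plan is to lower-bound $\|\MessagePool\|$ at $q$ by accounting for two contributions: $M_p$ itself, which is deposited into UCS the moment it is processed, and the $Counter_{M_p}$ stored messages at $p$ whose counterparts at $q$ are forced to sit in MessagePool by Lemma~\ref{lem:messageAgeTau}. Let $k = Counter_{M_p}$. Since Lemma~\ref{lem:l4} guarantees $k \le n-1$, Timeliness Condition~1 applies and inserts $(S_p, t_{arr})$ into $q$'s UCS on arrival; moreover, by Theorem~\ref{thm:timely} $M_p$ is assessed as timely at $q$, so Timeliness Condition~2 must pass, and in particular $q$'s MessagePool $\cup$ RUCS contains no entry $(S_p, t')$ with $t' \ne t_{arr}$ at time $t_{arr}$.

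Next I would invoke Lemma~\ref{lem:messageAgeTau}. The $k$ stored messages accounted for in $Counter_{M_p}$ at $p$ reside in $p$'s CS and are therefore associated with $k$ pairwise distinct source identities $r_1, \ldots, r_k$. The lemma asserts that there is a local time $t$ in the interval $[t_{arr}, t_{arr}+d(1+\rho)]$ such that from $t$ onwards, for every $r_i$, $MessageAge(t,q,r_i) \le \tau(k+1)$. Because $k+1 \le n$, we have $\tau(k+1) \le \tau(n) < \tau(n+1)$, so by the {\sc prune} procedure none of these stored messages has been demoted to RUCS; they all sit in $q$'s MessagePool, yielding $k$ distinct entries indexed by $r_1, \ldots, r_k$.

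Finally, I would argue that $M_p$ supplies the $(k{+}1)$-st distinct entry. Because Condition~2 passed at $q$, no entry $(S_p,\cdot)$ other than $(S_p, t_{arr})$ is present, so $M_p$ is the unique representative of source $p$ in $q$'s MessagePool; it therefore suffices to show $p \notin \{r_1,\ldots,r_k\}$, and this exclusion will be the main obstacle. I would prove it by contradiction: if $p$'s previous self-pulse $(S_p,\cdot)$ were in $p$'s CS at the time $M_p$ was sent, then by the {\sc prune} rule at $p$ its age on $p$'s clock would be at most $\tau(k-1)$, and an estimate parallel to that of Lemma~\ref{lem:messageAgeTau} (converting through the clock-skew factor $\tfrac{1+\rho}{1-\rho}$ and the end-to-end delay $d$) would force a corresponding entry for $p$ at $q$ to sit in MessagePool $\cup$ RUCS with age well below $\tau(n+2)$ at $t_{arr}$, contradicting the passage of Condition~2. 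Combining the $k$ distinct entries produced by Lemma~\ref{lem:messageAgeTau} with $M_p$ then yields $\|\MessagePool\| \ge k+1 = Counter_{M_p}+1$ throughout the asserted sub-interval.
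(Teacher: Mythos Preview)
Your proposal is correct and follows essentially the same approach as the paper: the paper's proof is a one-line citation of Lemma~\ref{lem:messageAgeTau} and Lemma~\ref{lem:NotInRUCS}, leaving implicit exactly the two details you spell out --- that $M_p$ itself contributes the $(k{+}1)$-st entry, and that $p\notin\{r_1,\ldots,r_k\}$. Your contradiction argument for the latter is valid but somewhat indirect; the shorter route (implicit in the paper via Lemma~\ref{lem:l4} and the Condition~2 argument in Theorem~\ref{thm:timely}) is that a correct $p$ cannot fire during its absolute refractory period, so its previous self-message has already decayed from $p$'s own storage before $M_p$ is sent and hence is not in $p$'s CS.
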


\begin{proof}
Corollary~\ref{cor:stateSize} is an immediate corollary of
Lemma~\ref{lem:messageAgeTau} and Lemma~\ref{lem:NotInRUCS}.
 \end{proof}

Thus, as a consequence to the lemmata, we can say informally, that
when the system is coherent all correct nodes relate to the same set
of messages sent and received.\\

\subsection{Proof of Theorem~\ref{thm:timely}}
\label{ssec:proof_timely} Recall the statement of
Theorem~\ref{thm:timely}:

\emph{Any message, $M_p,$ sent by a correct node $p$ will be
assessed as
timely by every correct node $q.$}\\

\begin{proof} Let $M_p$ be sent by a correct node $p,$ and received by a correct
node $q$ at local-time $t_{arr.}$ We show that the timeliness
conditions hold:

\noindent Timeliness Condition 1: $0\le  Counter_{M_p} \le n-1$ as
implied by Lemma~\ref{lem:l4} and by the fact that the CS cannot
hold a negative number of stored messages.\\

\noindent Timeliness Condition 2: Following Lemma~\ref{lem:l4} a
correct node will not fire during the absolute refractory period.
Property 5 therefore implies that a correct node cannot count less
than $\tau(n+3)$ local-time units between its consecutive firings. A
previous message from a correct node will therefore be at least
$\tau(n+2)$ local-time units old at any other correct node before it
will receive an additional message from that same node. Following
the {\sc prune} procedure, the former message will therefore have
decayed at all correct nodes and therefore cannot be present in the
$\MessagePool$ at the arrival time of the subsequent
message from the same sender.\\

\noindent Timeliness Condition 3: This timeliness condition
validates $Counter_{M_p}.$ The validation criterion relies on the
relation imposed at the sending node by the {\sc prune} procedure,
between the $MessageAge(t, p, ..)$ of its accounted stored messages
and its current Counter.

By Lemma~\ref{lem:messageAgeTau}, for all stored messages $(S_r,
t^\prime)$ accounted for in $M_p,$ \\$MessageAge(t, q, r) \le
\tau(Counter_{M_p}+1)$ from some local-time $t \in [t_{arr}, t_{arr}
+ d(1+\rho )]$ and until the end of the interval.

By Corollary~\ref{cor:stateSize}, $\|\MessagePool\| \ge
Counter_{M_p}+1,$ from some local-time $t^{\prime\prime} \in
[t_{arr}, t_{arr} + d(1+\rho )]$ and until the end of the interval.

We therefore proved that Timeliness Condition 3 holds for any $0\le
k<n$ at the latest at local-time $t_{arr} + d(1+\rho).$\\

\noindent The message $M_p$ is therefore assessed as timely by $q.$
\end{proof}

\begin{lemma}\label{lem:l8} Following the arrival and assessment of a
timely message $M_p$ at node $q,$ the subsequent execution of the
{\sc make-accountable} procedure yields $Counter_q > Counter_{M_p}.$
\end{lemma}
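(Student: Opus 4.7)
My plan is to analyze the effect of {\sc make-accountable} directly from its specification. Let $k\equiv Counter_{M_p}$ and let $c$ denote the value of $Counter_q$ just after {\sc timeliness} has declared $M_p$ timely but before {\sc make-accountable} executes. The procedure moves the $m\equiv\max[1,\,k-c+1]$ most recent distinct stored messages from UCS to CS and then sets $Counter_q:=\|CS\|$. The whole task is to show (i) that $m$ such messages really are available in UCS, and (ii) that each one actually enlarges CS by one (so that $\|CS\|$ increases by exactly $m$), giving the new value $c+m>k$ in both branches of the $\max$.

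For the availability in Case~A ($c>k$, i.e.\ $m=1$): the stored message $(S_p,t_{arr})$ was inserted into UCS by Timeliness Condition~1, so at least one distinct candidate is present. Moving it increases $\|CS\|$ by $1$, yielding a new $Counter_q\ge c+1>k$.

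For the availability in Case~B ($c\le k$, i.e.\ $m=k-c+1$): I will invoke Timeliness Condition~3, which must have succeeded since $M_p$ was assessed as timely. It guarantees that at some local time in $[t_{arr},t_{arr}+d(1+\rho)]$ the $\MessagePool=CS\cup UCS$ contains at least $k+1$ distinct entries $(S_r,t')$ with $MessageAge\le\tau(k+1)$. Since $\|CS\|=c$ contributes at most $c$ of these, UCS must contribute at least $(k+1)-c=m$ distinct entries. Thus {\sc make-accountable} can move $m$ distinct stored messages out of UCS.

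The remaining point, shared by both cases, is that each moved message genuinely adds a new element to CS. This follows from Timeliness Condition~2 combined with the {\sc prune} procedure: at any instant, every node id appears at most once across $\MessagePool\cup RUCS$, so a stored message sitting in UCS comes from a node not already represented in CS. Consequently $\|CS\|$ grows by exactly $m$ upon the move, giving new $Counter_q=c+m$, which is $c+1>k$ in Case~A and $k+1>k$ in Case~B. The main subtlety I expect is precisely this distinctness/non-overlap bookkeeping between CS and UCS; the inequality itself drops out mechanically once that is in place.
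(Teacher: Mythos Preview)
Your proof is correct and follows essentially the same route as the paper's: establish that UCS contains at least $m=\max[1,\,k-c+1]$ stored messages (via Timeliness Condition~3, which gives $\|\MessagePool\|\ge k+1$, hence $\|\mbox{UCS}\|\ge k+1-c$), then do the two-case split and read off the new $\|CS\|$. Your additional paragraph on distinctness—arguing that each node id appears at most once in the $\MessagePool$ so that a UCS entry cannot duplicate a CS entry—makes explicit a point the paper leaves implicit; note though that your phrasing ``at most once across $\MessagePool\cup RUCS$'' is slightly stronger than what Timeliness Condition~2 actually guarantees (it deletes duplicates only from $\MessagePool$, not from RUCS), but the weaker fact you actually need—at most once in $\MessagePool$—does hold and suffices.
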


\begin{proof}
We first show that at time $t,$ the time of execution of the {\sc
make-accountable} procedure, $\max[1,\; (Counter_{M_p} - Counter_q +
1)]\le  \|\mbox{UCS}\|,$ ensuring the existence of a sufficient
number of stored messages in UCS to be moved to CS.

$M_p$ is assessed as timely at $q,$ therefore, by Timeliness
Condition 3 and Lemma~\ref{lem:NotInRUCS}, at time $t,$

\begin{eqnarray*}
Counter_{M_p} &<& \|\MessagePool\| = \|\mbox{CS}\| +
\|\mbox{UCS}\| = Counter_q + \|\mbox{UCS}\| =\\
&&Counter_{M_p} - \max[1,\; (Counter_{M_p} - Counter_q + 1)] + 1 +
\|\mbox{UCS}\|\\
&\Rightarrow&  0 < - \max[1,\; (Counter_{M_p} - Counter_q + 1)] + 1
+
\|\mbox{UCS}\|\\
&\Rightarrow&   \max[1,\; (Counter_{M_p} - Counter_q + 1)] - 1 <
\|\mbox{UCS}\|\\
&\Rightarrow&   \max[1,\; (Counter_{M_p} - Counter_q + 1)] \le
\|\mbox{UCS}\|\enspace .
\end{eqnarray*}

There are two possibilities at the instant {\bf prior} to the
execution of the {\sc make-accountable} procedure. At this instant
$Counter_q = \|\mbox{CS}\|$:
\begin{enumerate}
\item $Counter_{M_p} \le Counter_q,$ then $\max[1,\; (Counter_{M_p} -
Counter_q + 1)] = 1,$ meaning $\|\mbox{CS}\|$ will increase by $1.$

\item
 $Counter_{M_p} > Counter_q,$ then $\|\mbox{CS}\|$ will be
 $Counter_q + \max[1,\; (Counter_{M_p} - Counter_q + 1)] = Counter_q + Counter_{M_p}
 - Counter_q + 1 = Counter_{M_p} + 1.$
\end{enumerate}
In either case, immediately subsequent to the execution of the
procedure we get: $\|\mbox{CS}\| > Counter_{M_p}$ and therefore the
updated $Counter_q > Counter_{M_p}. $  \end{proof}

\subsection{Proof of Lemma~\ref{lem:counter_inc}}
\label{ssec:proof_counter_inc} Recall the statement of
Lemma~\ref{lem:counter_inc}:

\emph{Following the arrival of a timely message $M_p,$ at a node
$q,$ then at time $t_{\scriptsize{\mathrm send}\,M_q},$ $Counter_q
> Counter_{M_p}.$}\\

\begin{proof} Let $t_{arr}$ denote the local-time of arrival of $M_p$ at $q.$
Recall that $t_{\scriptsize{\mathrm send}\,M_q}$ is the local-time
at which $q$ is ready to assess whether to send a message consequent
to the arrival and processing of $M_p.$ In the local-time interval
$[t_{arr}, t_{\scriptsize{\mathrm send}\,M_q}]$ at least one {\sc
prune} procedure is executed at $q$, the one which is triggered by
the arrival of $M_p.$ Following Lemma~\ref{lem:l8}, $Counter_q
> Counter_{M_p}$ subsequent to the execution of the {\sc
make-accountable} procedure. Note that $t_{arr} \le t_{send~M_q} \le
t_{arr}+d(1+\rho).$ By Lemma~\ref{lem:NotInRUCS} all stored messages
accounted for in $Counter_{M_p}$ will not be moved out of the
$\MessagePool$ by any {\sc prune} procedure executed up to
local-time $t_{{\mathrm send}\,M_q},$ thus, $Counter_q$ must stay
with a value greater than $Counter_{M_p}$ up to time $t_{{\mathrm
send}\,M_q}. $ \end{proof}

\subsection{Lemma~\ref{lem:l3}}
\begin{lemma}\label{lem:l3} Let $p, q\in C_i$ and $r\in C_j,$ denote three
correct nodes belonging to two different synchronized clusters.
Following the arrival and assessment of $p$'s and $q$'s fires, both
will be accounted for in the Counter of $r.$
\end{lemma}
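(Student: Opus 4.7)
The plan is to trace the execution of the {\sc summation} procedure at $r$ on the two incoming messages $M_p$ and $M_q,$ and verify that the stored messages $(S_p, t_{p,\mathrm{arr}})$ and $(S_q, t_{q,\mathrm{arr}})$ simultaneously reside in $r$'s Counted Set once both fires have been assessed. Since $p,q \in C_i$ comprise a synchronized set, they fire within $\sigma = d$ real-time of each other, so the two arrivals at $r$ are separated by at most $2d$ real-time. Theorem~\ref{thm:timely} already guarantees that $r$ individually assesses each of $M_p$ and $M_q$ as timely; the real work of the proof is to verify that the stored message corresponding to one fire is not removed from CS while the other is being processed.

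WLOG suppose $M_p$ is received first. Timeliness Condition~1 inserts $(S_p, t_{p,\mathrm{arr}})$ into UCS, and then {\sc make-accountable}$(M_p)$ moves at least the single most-recent distinct stored message from UCS into CS, which necessarily includes the freshly inserted $(S_p, t_{p,\mathrm{arr}}).$ When $M_q$ is processed, $(S_q, t_{q,\mathrm{arr}})$ analogously enters UCS and then CS: it is distinct from $(S_p, \cdot)$ by the definition of distinctness (since $p \ne q$) and it is the youngest entry in UCS at the instant {\sc make-accountable}$(M_q)$ runs, so it is among the messages moved to CS.

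It remains to rule out eviction of either entry by {\sc prune} calls occurring between or inside these operations (for instance, triggered by arrivals from a third cluster $C_j'$). {\sc prune} moves only the oldest CS entries to UCS, and only as long as $CSAge(t) > \tau(\|\mathrm{CS}\|-1).$ Throughout the window between the processing of $M_p$ and the completion of the processing of $M_q,$ each of the two new CS entries has local-time age at most $2d(1+\rho)$ at $r.$ Once both sit in CS, we have $\|\mathrm{CS}\| \ge 2,$ so the age threshold is at least
\[
\tau(1)=2d(1+\rho)\cdot\frac{\left(\tfrac{1+\rho}{1-\rho}\right)^{2}-1}{\tfrac{1+\rho}{1-\rho}-1}=\frac{4d(1+\rho)}{1-\rho}>2d(1+\rho),
\]
and hence neither $(S_p,\cdot)$ nor $(S_q,\cdot)$ can be the entry that forces an eviction; only strictly older CS entries, if any, are moved out. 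The main obstacle I anticipate is exactly this bookkeeping across intermediate prune calls: one must show that $(S_p, \cdot)$ survives even when it briefly becomes the oldest CS entry. This is handled by the oldest-first monotonicity of {\sc prune} together with the age estimate above, which together ensure that both $p$'s and $q$'s fires remain accounted for in $r$'s Counter once both have been assessed.
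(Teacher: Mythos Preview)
Your proposal is correct and follows essentially the same route as the paper: bound the local-time age of $(S_p,\cdot)$ at $r$ by $2d(1+\rho)$ at the moment $M_q$ has been assessed, and conclude that {\sc prune} cannot have evicted it from CS. The paper does this in one line by noting that $2d(1+\rho)=\tau(0)$, which is the smallest threshold {\sc prune} ever uses (it compares $CSAge$ to $\tau(k-1)$ with $k\ge 1$); hence $(S_p,\cdot)$ survives regardless of $\|\mathrm{CS}\|$, and there is no need for your separate treatment of the $\|\mathrm{CS}\|\ge 2$ case via $\tau(1)$ and a hand-wave for the singleton case. Your more mechanical trace through {\sc timeliness}/{\sc make-accountable} is fine but not required once Summation Property~P2 and the $\tau(0)$ bound are invoked.
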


\begin{proof}
Without loss of generality, assume that $p$ fires before node $q.$
Following Lemma~\ref{lem:l1} node $q$ will fire within $\sigma$ of
$p$ ($d(1+\rho )$ on $r$'s clock). Node $r$ will receive and assess
$q$'s fire at a time $t_q$ at most $d(1+\rho ) + d(1+\rho ) =
2d(1+\rho )$ after $p$ fired. Summation Property [P2] ensures that
$r$ will account for each one after their arrival and assessments.
Furthermore, $MessageAge(t_q, q, p)\le 2d(1+\rho ) = \tau(0)$ and
therefore node $r$ did not decay or move $M_p$ to RUCS by time
$t_q.$ Therefore, $M_p$ is still accounted for by node $r$ at time
$t_q$ and thus, both $p$ and $q$ are accounted for in $Counter_r$ at
time $t_q.$
 \end{proof}

\end{document}